\tikzset{fontscale/.style = {font=\relsize{#1}}
    }
\newcommand\Tstrut{\rule{0pt}{3.0ex}}         
\newcommand\Bstrut{\rule[-1.5ex]{0pt}{0pt}}   
\theoremstyle{plain}
\newtheorem{theorem}{Theorem}[section]
\newtheorem{proposition}[theorem]{Proposition}
\theoremstyle{definition}
\newtheorem{definition}[theorem]{Definition}
\theoremstyle{remark}
\newtheorem{example}[theorem]{Example}
\begin{document}

\title{Persistent Homology of Music Network with Three Different Distances}
\author{
\name{Eunwoo Heo\textsuperscript{ab}, Byeongchan Choi\textsuperscript{ab}, Myung ock Kim\textsuperscript{c}, Mai Lan Tran \textsuperscript{ab},  Jae-Hun Jung \textsuperscript{ab}$^{\ast}$\thanks{$^\ast$Corresponding author. Emails: hew0920@postech.ac.kr (Eunwoo Heo), bcchoi@postech.ac.kr (Byeongchan Choi),  kmyo332@gmail.com (Myung ock Kim), mailantran@postech.ac.kr (Mai Lan Tran),  jung153@postech.ac.kr (Jae-Hun Jung)}}
\affil{\textsuperscript{a} Department of Mathematics, Pohang University of Science \& Technology, Pohang 37673, Korea; 
\textsuperscript{b} POSTECH Mathematical Institute for Data Science (MINDS), Pohang University of Science \& Technology, Pohang 37673, Korea;
\textsuperscript{c}Korea Institute for Advanced Study (KIAS), Seoul 02455, Korea}
\received{v6.0 released January 2024}
}

\maketitle

\begin{abstract}
Persistent homology has been widely used to discover hidden topological structures in data across various applications, including music data. To apply persistent homology, a distance or metric must be defined between points in a point cloud or between nodes in a graph network. These definitions are not unique and depend on the specific objectives of a given problem. In other words, selecting different metric definitions allows for multiple topological inferences. In this work, we focus on applying persistent homology to music graph with predefined weights. We examine three distinct distance definitions based on edge-wise pathways and demonstrate how these definitions affect persistent barcodes, persistence diagrams, and birth/death edges. We found that there exist inclusion relations in one-dimensional persistent homology reflected on persistence barcode and diagram among these three distance definitions. We verified these findings using real music data. 
\end{abstract}

\begin{keywords}
Topological data analysis; persistent homology;  music graph network; metric; one-dimensional homology; inclusion relation
\end{keywords}

\begin{classcode}\textit{Classification codes}: {\color{black}{AMS 00A65, 55N31}} \end{classcode}

\section{Introduction}
Topological Data Analysis (TDA), developed relatively  recently compared to other traditional data analysis tools, has proven its usefulness in various applications including music analysis~\citep{Carlsson,Cohen,Eharer,ZC2005}. Unlike traditional statistical approaches, TDA focuses on the shape of the data, providing insights into its geometry and topology. By using topological information, novel interpretations about the data can be found. Persistent homology (PH) is one of the primary tools in TDA, utilizing homological characteristics of the data across various scales~\citep{Carlsson}. 
Essentially, when representing the data's shape as a complex, the PH method computes the homology of the corresponding complex. 
This approach involves constructing the complex in different scales sequentially, capturing how homology changes as the complex is built and refined. This procedure is known as the {\it filtration}. This hierarchical understanding of homology changes enables the PH method to infer the topological properties of the given data's shape. Particularly, PH shows the cyclic structures within the given data, making it a useful concept in music analysis.
 
For the PH method, it is crucial to transform the given data into a suitable metric space. Typically, most data lack an intrinsic geometric structure and need to be transformed into an appropriate geometric object. 
Graph representation is widely used in network science for PH analysis~\citep{Aktas2019}, including music data. 
In~\citep{Bergomithesis,Bigo,BWold,Tran02012024,Tran02092023}, PH analysis has been applied to music data, with studies using the concept of nodes and connectivity (edges) of each music note in the music flow, constructing the music network as a graph. For example, a tuple comprising the pitch and duration of the music note defines a node, and an edge is formed when these two nodes are placed side by side in a music flow. 

Once the appropriate transform is identified, applying the PH method requires defining a proper metric for the transformed geometric object, determining the distance between points in the point cloud or between nodes in the graph. There is no single way to define the metric, as there can be multiple definitions. 
The simplest approach is to adopt the Euclidean metric, but various other definitions exist, making the definition non-unique. 
Consequently, some definitions yield superior performance and provide richer information about the data, helping in better understanding the data. Moreover, the distance used for PH analysis may not always satisfy the necessary conditions of a metric, particularly the triangle inequality. For instance, the distance definition used in music analysis~\citep{Tran02012024,Tran02092023} is reasonable and consistent with our intuition, but violates the triangle inequality, which will be shown in this paper. But this violation is not crucial for practical data analysis. Despite not being a metric, the distance definition still yields meaningful insights into music data.

Given the non-uniqueness of metric or distance definitions for the constructed point cloud or graph, it becomes an important task to: 1) determine whether the introduced distance satisfies the metric definition, and 2) observe how the PH structure varies as the distance definition changes for music analysis. Despite the abundance of PH research and applications, studies focusing on these tasks are limited. In~\citep{Sethares02012014}, an analysis was conducted on how PH changes with different metrics on musical data, such as necklace distance, pitch-class distance, chord-class distance, and rhythm distance. It was shown that PH analysis based on these distances shows well-known topological structures in music. These definitions satisfy the metric conditions.

In this paper, our focus is on music graph data, investigating various tasks. While our initial motivation stemmed from the graph derived from music data, this study can be extended to any graph. 
In this paper, we introduce three distinct distance definitions for music data based on pathway distances between nodes on the graph network. These definitions are commonly used when combined with the PH method. The first distance between the $i$th and $j$th nodes, $n_i$ and $n_j$, is defined as the path distance involving the smallest number of intermediate nodes connecting $n_i$ and $n_j$. The second distance is defined as the path distance that results in the shortest overall distance between consecutive intermediate nodes. The last distance is defined as the distance that falls between the first and second distance values. These definitions could be purely geometric and not musical, as they simply focus on the connectivity between nodes within the graph network. We then show how these definitions satisfy metric definitions. As a main result, we illustrate how the PH structure of music data changes with different distance definitions. Furthermore, we show that there exists an inclusion relation among them and validate this claim through real music examples. 
This result is consistent with recent research showing that such an inclusion relation is universal when the distance definitions are based on paths between nodes in a graph~\citep{Heo2024}. 
These findings provide interesting insights into understanding music structure, particularly regarding the inclusion relation. We observed that depending on the distance definition, cycle information and persistence bar structures can vary, including the number of cycles, cycle elements, birth and death edges of cycles, etc. The inclusion relation indicates that these changes are not random but systematic. 

The musical implication of such a relation must be interesting and needs to be further investigated. Another implication of these findings is their potential utility in AI music composition. In~\citep{Tran02012024}, PH analysis was employed for automatic machine composition. Particularly when the available music data for training is limited, machines can be trained with the topological structure of the given music, producing pieces that imitate this structure. 
Different distance definitions in PH analysis yield distinct PH structures, which can serve as a tool for AI music composition. 

This paper is structured as follows: Section 2 briefly summarizes persistent homology, which is the main methodology for the analysis of music data in our work.
Section 3 provides a brief introduction to music network as a graph, illustrating how to construct a graph from music data and define distances. Then it illustrates how to define nodes and edges for the graph. Associated with the constructed graph, we will consider three different definitions of distance between nodes.  We will show how these definitions satisfy metric conditions and provide several properties induced by these definitions. 
Particularly, we will show that there exist interesting inclusion relations among graphs by these three different distance definitions. Section 4 elaborates on the inclusion relation and its representations in the persistence barcode and diagram. This section will also provide real examples from real music pieces to validate the claims made in Sections 3 and 4. Specifically, nine Korean traditional music pieces and one Western music piece are employed for numerical validation, with results consistent with the claims introduced in the previous sections. Finally, Section 5 provides a brief concluding remark.
\section{Persistent homology on metric space}
In this section, we introduce persistent homology, which will be applied to music graph in the subsequent sections. Our focus is on 
utilizing persistent homology to compute the $n$th homology of a sequence of simplicial complexes constructed in a nested manner.

Let $X$ be a topological space and $P$ be a point cloud sampled out of $X$. We are interested in $P$ as a graph network and computing its homology. 
To explain persistent homology, we consider a $k$-simplex and a simplicial complex. 
Basically, we are building up a space using simplices to mimic $X$. Let $\sigma_k$, denoted by $[x_0 x_1 \cdots x_k]$, be a convex hull composed of the $k+1$ geometrically independent vertices. By definition, the $0$-simplex corresponds to a point, the $1$-simplex a line segment, and the $2$-simplex a filled triangle. The $k$-simplex is the $k$-dimensional equivalent to a triangle. The boundary of $\sigma_k$ is defined by the formal sum over the $k-1$ simplices: $\partial \sigma_k = \sum_{i=0}^k (-1)^i [x_0 x_1 \cdots x_{i-1} x_{i+1} \cdots x_k]$. That is, for the boundary we restrict to the face of $\sigma_k$. Here, the term $(-1)^i$ is added for the orientation. For example, for the 2-simplex, $\sigma_2 = [x_0 x_1 x_2]$, the boundary is given by $\partial \sigma_2 = [x_1 x_2] - [x_0 x_2] + [x_0 x_1]$. Then, it is obvious that $\partial^2 \sigma_2 = 0$. 
This relation is valid for any $k$, i.e., $\partial^2 \sigma_k = 0$, which is the fundamental theorem of algebraic topology. 
Then, we consider the chain groups of $C_k$ of $X$, which is given by the formal sum of $\sigma_k$ of $X$. For example, if we consider a space of $X$,  $X = \sigma_2 = [x_0 x_1 x_2]$, then we have $C_0 = <[x_0], [x_1], [x_2]>$, $C_1 = <[x_0 x_1], [x_1 x_2], [x_0 x_2]>$, $C_2 = <[x_0 x_1 x_2]>$, and $C_k = 0$ for $k >2$ for this example. Then, we can define the $n$th boundary map $\partial_n: C_n \rightarrow C_{n-1}$ using the boundary map for the simplex defined above. We define $\partial_0 = 0$, i.e. $\partial_0 C_0 = 0$ which is a natural definition. For example, for $X = \sigma_2 = [x_0 x_1 x_2]$, $ \partial_1 C_1 = \partial <[x_0 x_1], [x_1 x_2], [x_0 x_2]> = <[x_1] - [x_0], [x_2] - [x_1], [x_2] - [x_0]>$. Then, the $n$th homology group of $X$, $H_n(X)$  is given by the quotient group between  $Ker (\partial_n)$ and $Im (\partial_{n+1})$ where $Ker (\partial_n)$ is the kernel group or cycle group of $\partial_n$ and $Im (\partial_{n+1})$ is the image group of $\partial_{n+1}$. 
That is, 
\begin{equation*}
H_n(X) = Ker (\partial_n)/Im (\partial_{n+1}).     
\end{equation*}
For example, if $X = \sigma_2 = [x_0 x_1 x_2]$,  $Ker(\partial_0) = C_0$ since $\partial_0 C_0 = 0$ and $Im(\partial_1)$ = $<[x_1] - [x_0], [x_2] - [x_1], [x_2] - [x_0]> = <[x_1] - [x_0], [x_2] - [x_1]>$. Thus $H_0(X) \cong \mathbb{Z}^1$. 
In a similar way, $H_1(X) = 0$. The number of generators of $H_n(X)$ is known as the $n$ Betti number, denoted by $\beta_n(X)$. Roughly speaking, $\beta_0$ is the number of the connected components of $X$ and $\beta_n, n\ge 1$ is the number of the $n$-dimensional holes of $X$. For example, for $X = S^1$, $\beta_0(X) = 1$ and $\beta_1(X) = 1$ and for $X = S^1 \times S^1$, $\beta_0(X) = 1$, $\beta_1(X) = 2$, and $\beta_2(X) = 1$.  

We consider the filtered simplicial complex, $\mathcal{K}$, a nested sequence of simplicial complexes $\emptyset = \mathcal{K}_0 \subset \mathcal{K}_1 \subset \mathcal{K}_2 \cdots \subset \mathcal{K}_m = \mathcal{K}$ and $\mathcal{K}_i = \mathcal{K}, i \ge m$. The index of $\mathcal{K}$ denotes the filtration procedure of $\mathcal{K}$. 
In the following, the sequence is constructed based on the increasing values of \( \epsilon(i) \), where \( \epsilon: \mathbb{N} \to \mathbb{R}_{\geq 0} \).

Now, we explain persistent homology~\citep{Carlsson} of $X$. For persistent homology,  we consider the sampled data out of $X$ instead of $X$ itself, denoted by $P$. In general, data, as a finite set of points, is given that describes the original topological space $X$. Such a point cloud of $P$ is used to build up a space that mimics $X$. Using the given data points, we build up the simplicial complex by gluing identified simplices. 
The identification goes through the so-called filtration procedure.
The filtration procedure provides a rule for forming simplices and connecting them in the construction of the desired complexes.
For the construction, the notion of metric or distance $d: X \times X \rightarrow \mathbb{R}$ is crucial. The definition of $d$ is not unique but rather depends on the problem and $X$ considered. In this paper, we will investigate how homology changes if the distance definition of $d$ changes. Suppose that $X$ is a metric space with $d$. Given $X$ and $d$, we construct the filtered simplicial complex $\mathcal{K}$ with each simplicial complex $\mathcal{K}_i$ is associated with $\epsilon(i)$ in the following way: 
\begin{itemize}
\item For given $i$ and $\epsilon(i)$ any two simplicies $\alpha$, $\beta$ are connected if the pairwise distance of $d(\alpha, \beta)$ between $\alpha$ and $\beta$ is less than or equal to $\epsilon(i)$.

\item When the simplicial complex is built towards $\mathcal{K}_i$, for $\alpha, \beta \in \mathcal{K}_i$, $\alpha \cap \beta$ is either the empty set or a face of both $\alpha$ and $\beta$.  
\end{itemize}

The above filtration procedure is known as the Vietoris-Rips filtration, and the constructed complex is called the Vietoris-Rips complex, $VR(X, \epsilon)$. 
The filtration parameter $\epsilon$ is an increasing function of $i$. 
As an example, consider a point cloud on a plane composed of four points on a unit square, $a, b, c, d$ as shown in Figure~\ref{rips_complex}. 
Suppose that $d(a, b) = d(b, c) = d(c, d) = d(d, a)= 1$ and $d(a, c) = d(b, d) = \sqrt{2}$. 
Figure~\ref{rips_complex} shows the constructed $VR(X, \epsilon)$ for $\epsilon = 0, 1, \sqrt{2}$, from left to right, respectively. 
The figure also shows the corresponding Betti numbers for the zero and one dimensions. 
When $\epsilon = 1$ the one-dimensional hole is created for $\mathcal{K}_{\epsilon = 1}$, but the hole disappears when $\epsilon = \sqrt{2}$ as the created triangle is filled by definition. 
This example shows that the one-dimensional hole persists from $\epsilon = 1$ to $\epsilon = \sqrt{2}$. 
We say that the {\it persistence}, $p$, of the hole is $p = \sqrt{2} - 1$. 

Let $Z^i_n$ and $B^i_n$ be the kernel and image groups of $\mathcal{K}_i$, respectively. Then the $p$-persistent $n$th homology group of $\mathcal{K}_i$ is given by 
\begin{equation*}
H^{i,p}_n(\mathcal{K}_i) = Z^i_n/(B^{i+p}_n \cap Z^i_n). 
\end{equation*}

\begin{figure}[h]
\begin{center}
\begin{tikzpicture}
\filldraw[color=blue!20, fill=blue!20, very thick] (3,0) circle (0.2);
\filldraw[color=blue!20, fill=blue!20, very thick](4,0) circle (0.2);
\filldraw[color=blue!20, fill=blue!20, very thick](4,1) circle (0.2);
\filldraw[color=blue!20, fill=blue!20, very thick](3,1) circle (0.2);
\node at (3.5, -1) (nodeXi) {{\bf $\mathcal{K}_{\epsilon = 0}$}};
\node at (3.5, -2) (nodeXi) {$\beta_0 = 4$};
\node at (3.5, -2.5) (nodeXi) {$\beta_1 = 0$};

\node at (5.5, 0.5) (nodeXi) {$\hookrightarrow$};

\node at (3, 1) (nodeXi) {$a$};
\node at (3, 0) (nodeXi) {$b$};
\node at (4, 0) (nodeXi) {$c$};
\node at (4, 1) (nodeXi) {$d$};


\draw [color = blue!99] (7,0) -- (8,0); 
\draw [color = blue!99] (7,0) -- (7,1); 
\draw [color = blue!99] (7,1) -- (8,1); 
\draw [color = blue!99] (8,1) -- (8,0); 
\filldraw[color=blue!20, fill=blue!20, very thick](7,0) circle (0.2);
\filldraw[color=blue!20, fill=blue!20, very thick](8,0) circle (0.2);
\filldraw[color=blue!20, fill=blue!20, very thick](8,1) circle (0.2);
\filldraw[color=blue!20, fill=blue!20, very thick](7,1) circle (0.2);
\node at (7.5, -1) (nodeXi) {$\mathcal{K}_{\epsilon = 1}$};
\node at (7.5, -2) (nodeXi) {$\beta_0 = 1$};
\node at (7.5, -2.5) (nodeXi) {$\beta_1 = 1$};

\node at (9.5, 0.5) (nodeXi) {$\hookrightarrow$};

\draw [color = blue!99] (11,0) -- (12,0); 
\draw [color = blue!99] (11,0) -- (11,1); 
\draw [color = blue!99] (11,1) -- (12,1); 
\draw [color = blue!99] (12,1) -- (12,0); 
\draw [color = blue!99] (11,0) -- (12,1); 
\path[draw, fill=green!20] (11,0)--(11,1)--(12,1)--cycle;
\path[draw, fill=red!20] (11,0)--(12,0)--(12,1)--cycle;
\filldraw[color=blue!20, fill=blue!20, very thick](11,0) circle (0.2);
\filldraw[color=blue!20, fill=blue!20, very thick](12,0) circle (0.2);
\filldraw[color=blue!20, fill=blue!20, very thick](12,1) circle (0.2);
\filldraw[color=blue!20, fill=blue!20, very thick](11,1) circle (0.2);
\node at (11.5, -1) (nodeXi) {$\mathcal{K}_{\epsilon = \sqrt{2}}$};
\node at (11.5, -2) (nodeXi) {$\beta_0 = 1$};
\node at (11.5, -2.5) (nodeXi) {$\beta_1 = 0$};
\end{tikzpicture}
\end{center}
\caption{An example of filtration over a square point cloud on a plane.}
\label{rips_complex}
\end{figure}
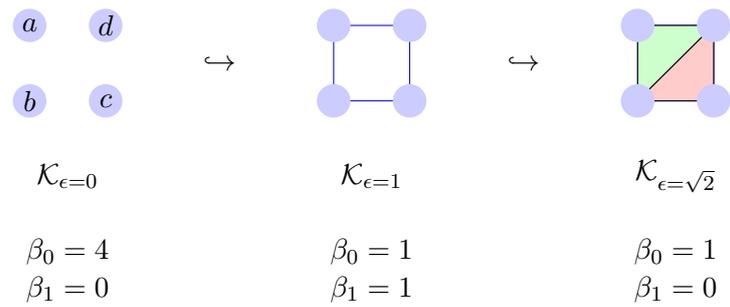

Persistent homology computes homology over the sequence of filtered complexes by computing homology for each $\mathcal{K}_i$ in the sequence. Thus, as the filtration parameter $\epsilon$ increases, persistent homology shows how homology changes over the building procedure. Particularly, as explained in Figure~\ref{rips_complex}, persistent homology reveals the persistence of $n$-dimensional holes, if they exist. As simplices are connected to form a complex, $n$-dimensional holes are created and eventually annihilated, each with a finite persistence $p$.

The collection of such persistence in each dimension with the filtration $\epsilon$ continuously changing from $\epsilon = 0$ can be represented as so-called persistence barcode. Particularly, when an $n$-dimensional hole is created during the filtration process, the corresponding $\epsilon$ value is called the birth time of the hole. Similarly, the corresponding $\epsilon$ value is called the death time when the hole is annihilated. The $n$-dimensional  barcode is the collection of all such birth and death times. Since multiple holes can be created and persist simultaneously, the persistence barcode is a multiset. Persistent barcodes for both the zero and one dimensions of the unit square example above are shown in Figure~\ref{barcode}.  
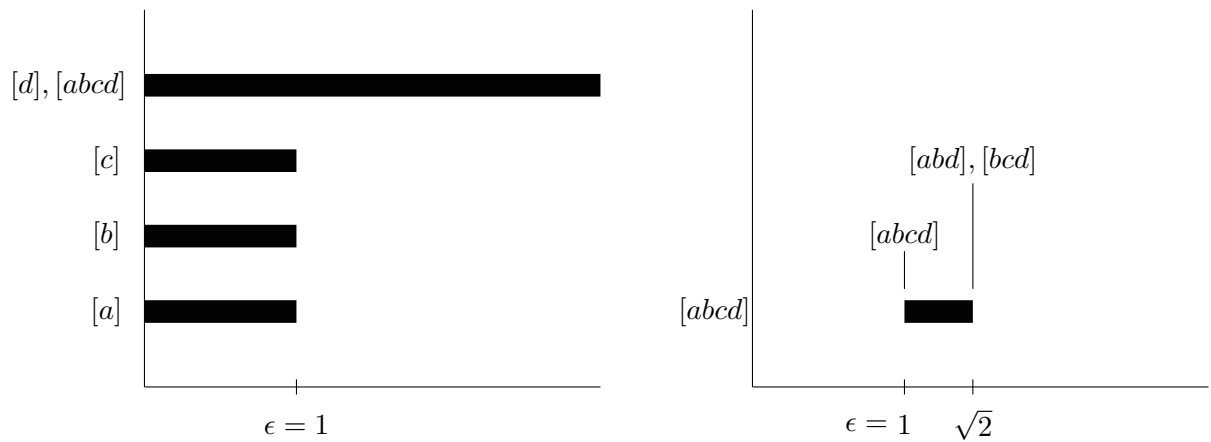
\begin{figure}
\begin{center}
\begin{tikzpicture}

\draw [color = black!99] (1,0) -- (7,0); 
\draw [color = black!99] (1,0) -- (1,5); 

\draw [color = black!99, line width=3mm] (1,1) -- (3,1); 
\draw [color = black!99, line width=3mm] (1,2) -- (3,2); 
\draw [color = black!99, line width=3mm] (1,3) -- (3,3); 
\draw [color = black!99, line width=3mm] (1,4) -- (7,4); 

\draw [color = black!99] (3,-0.1) -- (3,0.1); 
\node at (3,-0.5) (nodeXi) {$\epsilon = 1$};

\node at (0.5,1) (nodeXi) {$[a]$};
\node at (0.5,2) (nodeXi) {$[b]$};
\node at (0.5,3) (nodeXi) {$[c]$};
\node at (0,4) (nodeXi) {$[d], [abcd]$};

\draw [color = black!99] (9,0) -- (15,0); 
\draw [color = black!99] (9,0) -- (9,5); 

\draw [color = black!99] (11,-0.1) -- (11,0.1); 
\node at (10.5,-0.5) (nodeXi) {$\epsilon = $};
\node at (11,-0.5) (nodeXi) {$1$};

\draw [color = black!99, line width=3mm] (11,1) -- (11.9,1); 
\draw [color = black!99] (11.9,-0.1) -- (11.9,0.1); 
\node at (11.9,-0.5) (nodeXi) {$\sqrt{2}$};
\node at (8.5,1) (nodeXi) {$[abcd]$};

\node at (11,2) (nodeXi) {$[abcd]$};
\draw [color = black!99] (11,1.8) -- (11,1.3);

\node at (11.9,3) (nodeXi) {$[abd], [bcd]$};
\draw [color = black!99] (11.9,2.7) -- (11.9,1.3);

\end{tikzpicture}
\end{center}
\caption{Persistence barcode: zero dimension (left) and one dimension (right). }
\label{barcode}
\end{figure} 
The figure shows each persistence generator as a horizontal line. For the one dimension, each generator starts at $\epsilon = 0$, say, for $[a], [b], [c], [d]$ and ends at $\epsilon = 1$, e.g. for $[a], [b], [c]$. The order is not important and could be different. For the one-dimensional case, the generator starts at $\epsilon = 1$ when $[abcd]$ is created and ends at $\epsilon = \sqrt{2}$ when $[abcd]$ breaks into $[abd]$ and $[bcd]$. The starting filtration value of the generator is called the birth time, and the ending filtration value is called the death time. 

Persistence diagram is equivalent to a persistence barcode. Persistence diagram represents each generator in the corresponding persistence barcode as a point in the $xy$ plane with the birth time as the $x$-coordinate and the death time as the $y$-coordinate of the point. Figure~\ref{diagram} shows the persistence diagram of the zero (left) and one (right) dimensions for the unit square data above. For the zero dimension, notice that all three generators coincide at one point. The lasting generator is not displayed in the diagram. 

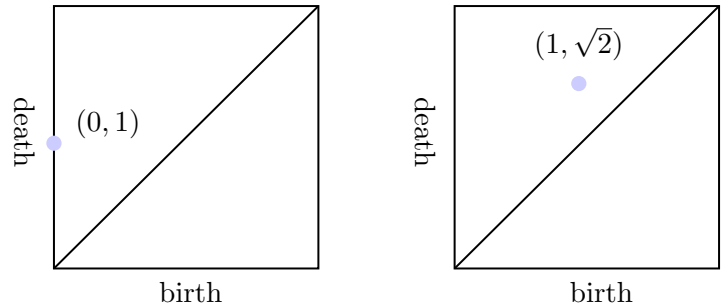
\begin{figure}[h]
\centering
\tikzset{every picture/.style={line width=0.75pt}} 

\begin{tikzpicture}[x=0.75pt,y=0.75pt,yscale=-1,xscale=1]

\draw   (18,31) -- (150,31) -- (150,163) -- (18,163) -- cycle ;
\draw    (150,31) -- (18,163) ;

\draw   (218,31) -- (350,31) -- (350,163) -- (218,163) -- cycle ;
\draw    (350,31) -- (218,163) ;

\filldraw[color=blue!20, fill=blue!20, very thick] (18,100) circle (3);

\filldraw[color=blue!20, fill=blue!20, very thick] (280,70) circle (3);

\node at (45, 90) (nodeXi) {$(0,1)$};
\node at (280, 50) (nodeXi) {$(1, \sqrt{2})$};

\draw (69,168) node [anchor=north west][inner sep=0.75pt]   [align=left] {birth};
\draw (10,75) node [anchor=north west][inner sep=0.75pt]  [rotate=-90] [align=left] {death};

\draw (274,168) node [anchor=north west][inner sep=0.75pt]   [align=left] {birth};
\draw (210,75) node [anchor=north west][inner sep=0.75pt]  [rotate=-90] [align=left] {death};

\end{tikzpicture}
\caption{The persistence diagram of the zero (left) and one (right) dimensions for the unit square data example. }
\label{diagram}
\end{figure}

As shown in the example above, the definition of distance is crucial for persistent homology. Although the Euclidean metric on a plane was used in the above example, the choice of metric depends on the problem considered. In this paper, we will consider graph data $G$, composed of a vertex set $V$ and an edge set $E$, i.e., $G=(V,E)$. Further, we assume that there is no isolated node in $G$. To apply persistent homology to graph data, we need to define the distance between nodes. The distance between two distinct nodes can be defined by using the weight of the edge between them. If every pair of nodes in $G$ is directly connected, the distance can be straightforwardly defined using the edge weights. However, not all nodes are directly connected in general unless $G$ is a complete graph. There are two possible ways to define the distance between pairs of nodes that are not directly connected. The first approach is to create a direct connection between nodes that do not have an edge in the given data $G$. The second approach is path-dependent distances that define the distance based on the path that connects those nodes using the existing intermediate paths (edges) in $G$. Figure~\ref{path-distance} illustrates the first (middle) and second (right) definitions of distance. In the figure, the left figure shows the original graph $G = (V = \{a, b, c, d \}, E = \{ab, ad, dc\})$. The middle figure shows that the distance between $a$ and $c$, $d(a, c)$, is defined by directly connecting $a$ and $c$, which is represented by the red dashed line. The right figure shows that $d(a,c)$ is defined using the edges, $ad$ and $dc$ between $a$ and $c$, which is also represented by the red dashed line. In this paper, we consider the second approach for $d(a, c)$. In the following section, we provide different distance definitions based on the second approach. 

\begin{figure}[h]
\begin{center}
\begin{tikzpicture}

\draw [color = blue!99] (-1,1) -- (0,1); 
\draw [color = blue!99] (-1,1) -- (-1,0); 
\draw [color = blue!99] (0,1) -- (0,0); 

\filldraw[color=blue!20, fill=blue!20, very thick] (-1,0) circle (0.2);
\filldraw[color=blue!20, fill=blue!20, very thick](0,0) circle (0.2);
\filldraw[color=blue!20, fill=blue!20, very thick](0,1) circle (0.2);
\filldraw[color=blue!20, fill=blue!20, very thick](-1,1) circle (0.2);

\node at (-1, 1) (nodeXi) {$a$};
\node at (-1, 0) (nodeXi) {$b$};
\node at (0, 0) (nodeXi) {$c$};
\node at (0, 1) (nodeXi) {$d$};

\draw [color = blue!99] (3,1) -- (4,1); 
\draw [color = blue!99] (3,1) -- (3,0); 
\draw [color = blue!99] (4,1) -- (4,0); 
\draw [dashed, color = red!99] (3,1) -- (4,0);

\filldraw[color=blue!20, fill=blue!20, very thick] (3,0) circle (0.2);
\filldraw[color=blue!20, fill=blue!20, very thick](4,0) circle (0.2);
\filldraw[color=blue!20, fill=blue!20, very thick](4,1) circle (0.2);
\filldraw[color=blue!20, fill=blue!20, very thick](3,1) circle (0.2);

\node at (3, 1) (nodeXi) {$a$};
\node at (3, 0) (nodeXi) {$b$};
\node at (4, 0) (nodeXi) {$c$};
\node at (4, 1) (nodeXi) {$d$};

\draw [dashed, color = red!99] (7,1.1) -- (8,1.1); 
\draw [dashed, color = red!99] (8.1,1) -- (8.1,0); 


\draw [color = blue!99] (7,0) -- (7,1); 
\draw [color = blue!99] (7,1) -- (8,1); 
\draw [color = blue!99] (8,1) -- (8,0); 
\filldraw[color=blue!20, fill=blue!20, very thick](7,0) circle (0.2);
\filldraw[color=blue!20, fill=blue!20, very thick](8,0) circle (0.2);
\filldraw[color=blue!20, fill=blue!20, very thick](8,1) circle (0.2);
\filldraw[color=blue!20, fill=blue!20, very thick](7,1) circle (0.2);

\node at (7, 1) (nodeXi) {$a$};
\node at (7, 0) (nodeXi) {$b$};
\node at (8, 0) (nodeXi) {$c$};
\node at (8, 1) (nodeXi) {$d$};

\end{tikzpicture}
\end{center}
\caption{Distance between nodes. Left: the original graph $G$. Middle: $d(a,c)$ is defined by creating a new edge between $a$ and $c$. Right: $d(a,c)$ is defined based on the existing intermediate paths between $a$ and $c$.}
\label{path-distance}
\end{figure}
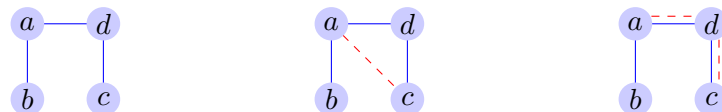

A common approach to defining distance in a given graph $G$ based on intermediate paths or edges is to utilize the weights of the edges, provided they are defined over $G$. The edge weights are problem-specific. For example, in the case of music data, as discussed in~\citep{Bergomithesis,Bigo,BWold,Tran02012024,Tran02092023}, distinct tuples consisting of the pitch and duration of each musical note are treated as nodes. If two nodes appear consecutively in the music, they are connected by an edge, with the edge weight defined by the frequency of their co-occurrence. This representation allows the given music to be modeled as a graph. Now, consider two nodes that are directly connected by an edge, meaning they appear consecutively in the music. A natural way to define the distance between these nodes is as the reciprocal of the weight, i.e., the reciprocal of the frequency. This definition is intuitive because nodes that appear together more frequently are considered closer. For nodes that are not directly connected, this reciprocal frequency-based distance can be generalized. However, such a generalization is not unique, as different approaches can be taken. This paper explores three possible definitions of distance in this context.

\section{Music graph networks and metric definitions} 

In this paper, we analyze how variations in distance definitions affect persistent homology. 
We begin by introducing three different distance definitions and then examine the resulting changes in persistent homology with associated properties.

\subsection{A construction of music graph network}
For simplicity, and for the numerical examples presented in later sections, we consider monophonic music. However, this idea can also be generalized to polyphonic music.

Let $G = \{V, E\}$ be the graph constructed from the given music data $T$. Here $T$ is a time series, $T = \{ T_1, T_2, \cdots, T_N\}$. We define the nodes and edges of $G$ as follows:
\begin{definition}
Each node, $n \in V$, in $G$ is defined as a pair consisting of a pitch and its duration, represented as the tuple:
\begin{equation}
n = (\mbox{pitch}, \mbox{duration}).
\nonumber
\end{equation}
\end{definition}
Note that the above node definition is not unique. This definition can be extended to polyphonic music, for example, by representing nodes as sets of pitches or by projecting them onto a basic scale~\citep{gomez}. We then define each edge $e_{ij} \in E$ as follows:
\begin{definition}
Let $e_{ij} \in E$ in $G$ be the edge connecting two distinct nodes $n_i$ and $n_j$ ($n_i \ne n_j$ for $i \ne j$) when $n_i$ and $n_j$ appear consecutively in music.
\end{definition}
Note that in this definition, we do not consider directionality. That is, the graph $G$ constructed from the given music data is an undirected graph. 
Furthermore, we consider the graph $G$ to be weighted. The definition of edge weights is not unique, and one possible approach is to use frequency. The weights associated with edges based on frequency are defined as follows:
\begin{definition}
Let $w_{ij}$ be the weight of $e_{ij} \in E$ in $G$, $i \ne j$. The frequency-based weight $w_{ij}$ is defined as
 \begin{equation*}
   w_{ij} = | \{(T_k, T_{k+1})| (T_k, T_{k+1})  = (n_i, n_j) \mbox{ or } (n_j, n_i), i\ne j, k = 1, \cdots, N\} | ^{-1}. 
 \end{equation*}
\end{definition}
Also, note that the definition of weights may vary depending on the specific problem being considered.

\subsection{Three path-based distance definitions}
Let $M = (V, E, W)$ be a music network represented as a weighted graph. Assume that $M$ is connected. Let $W$ be the weight function, defined as $W{:}\; E \rightarrow \mathbb{R}$.

For any two distinct nodes $v, w \in V$, there always exists a path $p$ connecting $v$ and $w$ in $M$. For such a path $p$, we consider the notion of the \textit{shortest path}. There are multiple ways to define the shortest path.

First, we define the shortest path in terms of the minimum number of edges in $p$.
Clearly, such paths are not necessarily unique. However, if we impose an ordering on the nodes, say $v_1 < v_2 < \dots < v_n$, where $n$ is the total number of distinct nodes in $M$, we can determine a unique path with the minimal number of edges using the unweighted Dijkstra algorithm. Thus, this choice of a unique minimal path depends on the vertex ordering. We refer to this path as {the minimal edge path}.

Second, we consider the path that minimizes the sum of the weights of all the edges in $p$.
Since multiple such paths may exist, we again use the vertex ordering to define a unique path via the weighted Dijkstra algorithm. We refer to this path as {the minimal weight path}.

\begin{definition}[$d_1(v,w)$: The first definition of distance]
    For nodes $v,w$ in $M$, let $p_{vw}$ be the minimal edge path.
    Define the distance $d_1: V \times V \rightarrow \mathbb{R}$ to be 
    \begin{equation*}
    d_1(v,w) =\sum_{\substack{e \in p_{vw} \\ e \text{: edge}}} W(e) \text{ if } v \neq w 
    \end{equation*} and $d_1(v,w) = 0$ if $v=w$.
\end{definition}

\begin{definition}[$d_2(v,w)$: The second definition of distance]
 For nodes $v,w$ in the network, let $p_{vw}$ be the minimal weight path.
 Define the distance $d_2: V \times V \rightarrow \mathbb{R}$ to be 
 \begin{equation*}
 d_2(v,w) = \sum_{\substack{e \in p_{vw} \\ e \text{: edge }}} W(e)\text{ if } v \neq w 
 \end{equation*} and $d_2(v,w) = 0$ if $v=w$.
\end{definition}

\begin{definition}[$d_3(v,w)$: The third definition of distance]
 For nodes $v,w$ in the network, let $\mathcal{P}(v,w)$ be the set of paths with the minimum number of edges between $v$ and $w$. 
 Define the distance $d_3: V \times V \rightarrow \mathbb{R}$ as 
 \begin{equation*}
 d_3(v,w) = \min_{p \in \mathcal{P}(v,w) } \left\{ \sum_{\substack{e \in p \\ e \text{: edge }}} W(e) \right\}\text{ if } v \neq w 
 \end{equation*} and $d_3(v,w) = 0$ if $v=w$.
\end{definition}
The third definition of distance can be regarded as the combination of Definitions $1$ and $2$.

Figure~\ref{fig:music} shows a segment of the Korean traditional music piece {\it Sanghyeondodeuri}\footnote{{\it Sanghyeondodeuri}, a part of {\it Junggwangjigok}, is a piece of traditional Korean court music that belongs to the Dodeuri genre, which features repetitive melodic patterns and was performed in royal and aristocratic settings. 
It is characterized by its elegant, flowing structure and is typically played with traditional Korean instruments such as the geomungo, gayageum, and daegeum.}, performed by five different instruments. Figure~\ref{fig:distance} shows the distance matrix $D$, $D_{vw} = d(v,w)$, computed using different distance definitions, $d_1, d_2, d_3$, for the geomungo\footnote{The geomungo is a traditional Korean zither-like string instrument with six strings and 16 convex frets, played using a bamboo plectrum called a suldae. Known for its deep, resonant tones, it has been historically favored by scholars and aristocrats for its rich, expressive sound and versatility in both solo and ensemble performances.} instrument part in {\it Sanghyeondodeuri}. 
By definition, all elements of the distance matrix are non-negative. The figure shows the contour of the distance, with red representing the maximum value and blue the minimum.
It shows that the distances are different depending on the choice of the distance function. In the figure, we clearly observe the ordering relationships among the different distance definitions of  $d_2(v, w) \le d_3(v, w) \le d_1(v, w)$, which will be discussed in the next section. In fact, this ordering relationship is not confined to specific musical pieces but applies universally to all music.

\begin{figure}[!ht]
\begin{center}
\includegraphics[width=0.7\textwidth]{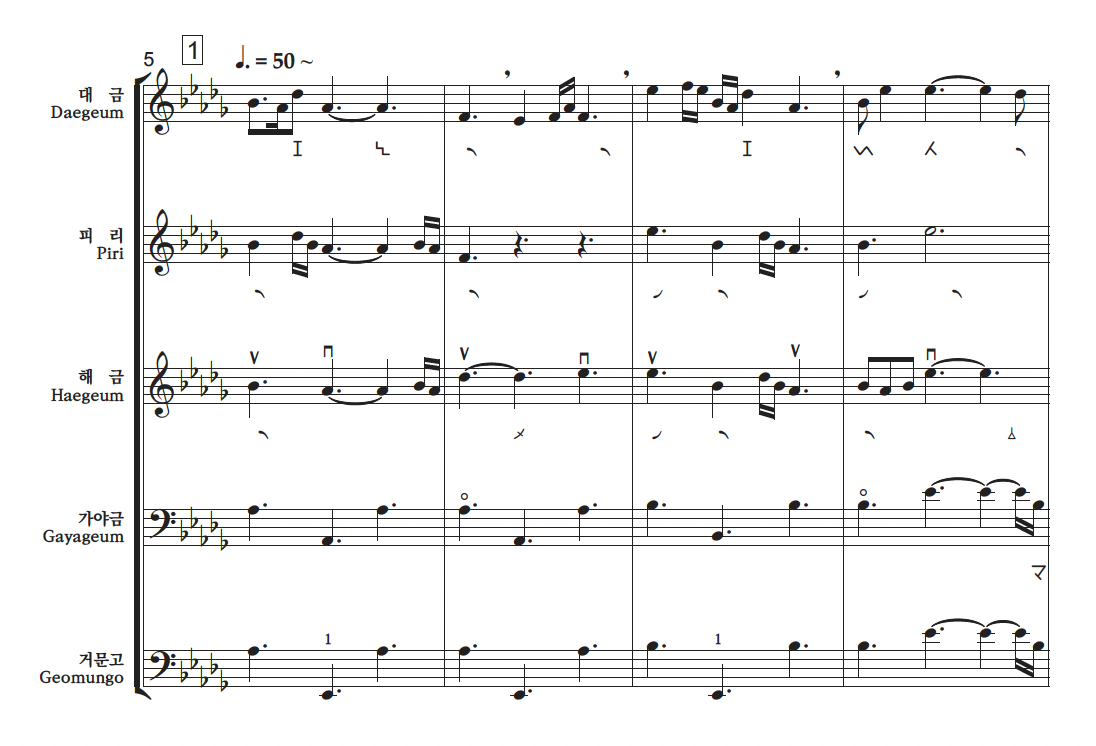}
\caption{A segment of the Korean traditional music piece {\it Sanghyeondodeuri} with five instruments. }
\label{fig:music}
\end{center}
\end{figure}

\begin{figure}[!ht]
\begin{center}
\includegraphics[width=1.0\textwidth]{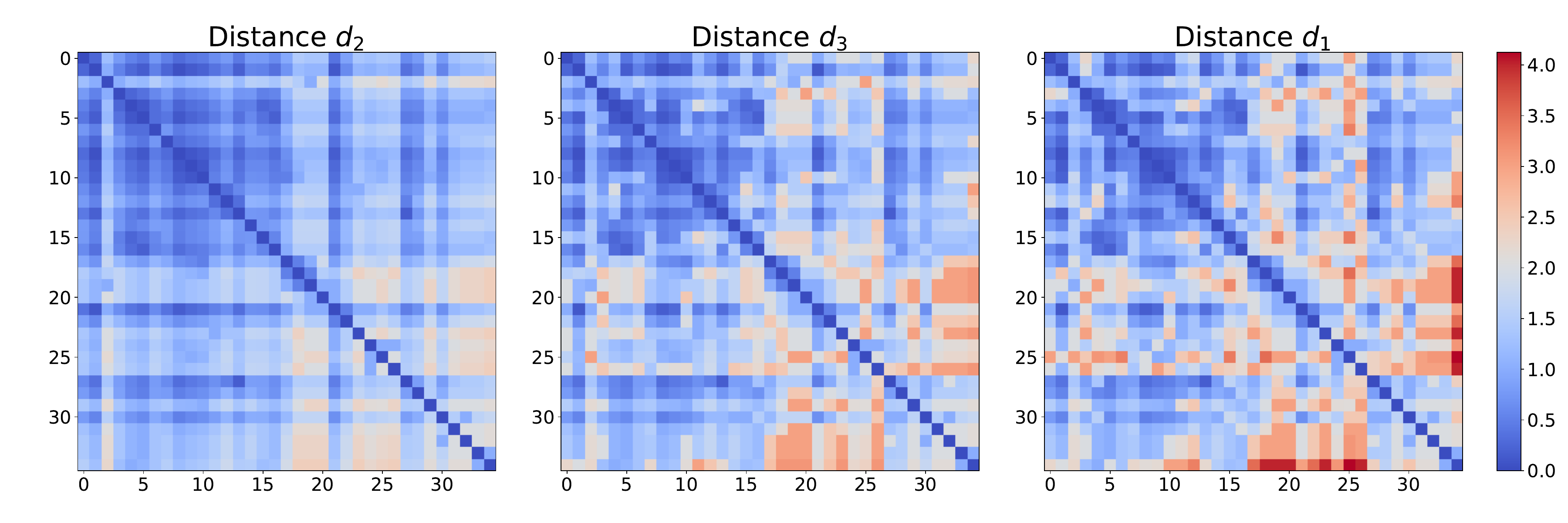}
\caption{Distance matrices corresponding to $d_2$ (left), $d_3$ (middle) and $d_1$ (right) of {\it Sanghyeondodeuri} for geomungo part. 
Notice the ordering relationships among the different distance definitions of  $d_2(v, w) \le d_3(v, w) \le d_1(v, w)$.
}
\label{fig:distance}
\end{center}
\end{figure}

\subsection{Inequalities among three definitions} 
We can also consider the reciprocal version of the distance for each definition.
For example, we can explicitly define the reciprocal version of Definition $1$ as the following.
If the weight function is non-zero at every edge, we can define the reciprocal distance $d_1^{R}: V \times V \rightarrow \mathbb{R}$ by 
 \begin{equation*}
 d_1^{R}(v,w) = \sum_{\substack{e \in p_{vw} \\ e \text{: edge }}} \frac{1}{W(e)},
 \end{equation*}
where $p_{vw}$ is the minimal edge path connecting $v$ and $w$.

\begin{proposition}
    Let $M=(V,E,W)$ be the music network and suppose that $W$ is positive.
    Denote $W^{-1} = \frac{1}{W} $ by the reciprocal weight function of $W$.
    If we define the reciprocal music network $M^{R}$ as $(V,E,W^{-1})$, then we have 
    \begin{equation*} 
    d_1^{R} \text{ on } M \equiv d_1 \text{ on } M^{R}.  
    \end{equation*}
\end{proposition}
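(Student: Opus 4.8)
The plan is to unwind the definitions and observe that the only substantive point is that the minimal edge path is an object determined by the underlying unweighted graph $(V,E)$ together with the fixed vertex ordering, and not by the weight function at all; once this is recognized, the identity is a one-line computation.

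First I would dispose of the diagonal case: if $v=w$, then $d_1^R(v,w)=0$ by the convention inherited from the definition of $d_1$, and likewise $d_1(v,w)=0$ on $M^R$, so the two agree trivially. I would also note that the hypothesis that $W$ is positive is used precisely to guarantee that $W^{-1}=1/W$ is a well-defined, finite, positive weight function, so that $M^R=(V,E,W^{-1})$ is a bona fide weighted network and the expression ``$d_1$ on $M^R$'' is meaningful.

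Next, fix two distinct nodes $v,w\in V$. The minimal edge path $p_{vw}$ is, by construction, the path returned by the unweighted Dijkstra algorithm with respect to the fixed ordering $v_1<v_2<\dots<v_n$; this procedure consults only the adjacency structure of $(V,E)$ and the ordering, never the weights. Since $M$ and $M^R$ share the same vertex set, the same edge set, and (by convention) the same vertex ordering, the minimal edge path between $v$ and $w$ is the same path $p_{vw}$ whether computed in $M$ or in $M^R$. I would then compute both sides along this common path: by definition of the reciprocal distance on $M$,
\[
d_1^R(v,w)=\sum_{e\in p_{vw}}\frac{1}{W(e)},
\]
while by the definition of $d_1$ applied to the network $M^R$, whose weight function is $W^{-1}$,
\[
\big(d_1\text{ on }M^R\big)(v,w)=\sum_{e\in p_{vw}}W^{-1}(e)=\sum_{e\in p_{vw}}\frac{1}{W(e)}.
\]
The two right-hand sides coincide for every pair $(v,w)$, which is exactly the assertion $d_1^R$ on $M\equiv d_1$ on $M^R$.

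I do not expect a genuine obstacle here; the proposition is essentially a sanity check on the definitions. The one place where care is needed — and the only place the argument could fail — is the claim that ``minimal edge path'' is weight-independent. This is why the emphasis must be on the fact that $d_1$, unlike $d_2$ or $d_3$, selects its representative path via the \emph{unweighted} algorithm; the analogous statement would genuinely require argument (and in fact could fail) for $d_2$, since the minimal weight path in $M$ need not be the minimal weight path in $M^R$.
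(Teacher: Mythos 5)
Your proof is correct and uses exactly the paper's key observation: the minimal edge path $p_{vw}$ is determined solely by the unweighted graph $(V,E)$ and the fixed vertex ordering, so both sides are sums of $1/W(e)$ over the same path. The paper actually establishes this via the more general Proposition on composition operators (taking $f(x)=1/x$ recovers your statement), but the underlying argument is identical to yours.
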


Actually, the property could be generalized into the compatibility with the composition.
Let $f {:}\; \mathbb{R} \rightarrow \mathbb{R}$ be the arbitrary real-valued function. 
Let $\mathcal{F}$ be the composition operator with $f$ defined to be $\mathcal{F}(W) = f \circ W$, where $W$ is any function from $E$ to $\mathbb{R}$.
Then $\mathcal{F}$ becomes a function from $F(E,\mathbb{R})$ to itself, where $F(E,\mathbb{R})$ denotes the set of functions from $E$ to $\mathbb{R}$.
Let $d {:}\; V \times V \rightarrow \mathbb{R}$ be the distance.
For two distinct nodes $v,w$ in $M$, let $p_{vw}$ be the minimal edge path. 
Define the composed distance $\mathcal{F}(d)$ to be 
\begin{equation*}
\mathcal{F}(d)(v,w) = \sum_{\substack{e \in p_{vw} \\ e \text{: edge }}} \mathcal{F}(W)(e) \text{ if } v \neq w 
\end{equation*} and $\mathcal{F}(d)(v,w) = 0$ if $v=w$.
Then the following property holds:

\begin{proposition}
    Let $M=(V,E,W)$ be the music network and $\mathcal{F}$ be the composition operator with $f:\mathbb{R} \rightarrow \mathbb{R}$.
    Suppose that $\mathcal{F}(W)$ is well-defined and consider the composition $\mathcal{F}(M) = (V,E,\mathcal{F}(W))$, the composed music network.
    Then we have a congruence 
    \begin{equation*}
    \mathcal{F}(d_1) \text{ on } M \equiv d_1 \text{ on } \mathcal{F}(M). 
    \end{equation*}
\end{proposition}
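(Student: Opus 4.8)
The plan is to unwind both sides of the claimed congruence and observe that the only weight-dependent ingredient, the minimal edge path, is in fact independent of the weights: it is an invariant of the underlying combinatorial graph $(V,E)$ together with the fixed vertex ordering. Since $M=(V,E,W)$ and $\mathcal{F}(M)=(V,E,\mathcal{F}(W))$ share the same $V$, the same $E$, and the same ordering $v_1 < v_2 < \dots < v_n$, they have the same minimal edge paths, after which the identity is immediate from the definitions.

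First I would fix two distinct nodes $v,w \in V$ and recall that the minimal edge path $p_{vw}$ is, by construction, the output of the \emph{unweighted} Dijkstra algorithm applied to $(V,E)$ with the prescribed ordering used to break ties. This procedure references only the combinatorial data $(V,E)$ and the ordering; it never consults the weight function. Hence the minimal edge path $p_{vw}$ computed in $M$ equals the minimal edge path $p_{vw}$ computed in $\mathcal{F}(M)$. Call this common path $p_{vw}$. The hypothesis that $\mathcal{F}(W)$ is well-defined is used only to guarantee that $f \circ W$ makes sense as a weight function on $E$, so that $\mathcal{F}(M)$ is a legitimate network.

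Next I would compute both quantities directly. By the definition of the composed distance on $M$, we have $\mathcal{F}(d_1)(v,w) = \sum_{e \in p_{vw}} \mathcal{F}(W)(e)$ for $v \neq w$, and $\mathcal{F}(d_1)(v,w)=0$ for $v=w$. By the definition of $d_1$ applied to the network $\mathcal{F}(M)$, whose weight function is $\mathcal{F}(W)$ and whose minimal edge path is the same $p_{vw}$, we obtain $d_1(v,w) = \sum_{e \in p_{vw}} \mathcal{F}(W)(e)$ for $v \neq w$, and $0$ for $v=w$. These agree on every pair of nodes, which is precisely the asserted congruence. The reciprocal statement of the preceding proposition then follows as the special case $f(x) = 1/x$ when $W$ is positive.

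The only point requiring care is conceptual rather than computational: making explicit that the minimal edge path does not depend on $W$. It is worth remarking that the analogous statement is genuinely false for $d_2$ (and more subtle for $d_3$), because the minimal \emph{weight} path depends on $W$ and need not be preserved when $W$ is post-composed with an arbitrary $f$; this is exactly why the proposition is stated for $d_1$. No further obstacle is anticipated.
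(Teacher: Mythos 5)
Your proof is correct and follows essentially the same route as the paper's: both observe that the minimal edge path is determined solely by $(V,E)$ and the fixed vertex ordering (via the unweighted Dijkstra algorithm), hence is unchanged when $W$ is replaced by $\mathcal{F}(W)$, and then conclude by the identical termwise computation $\sum_{e\in p_{vw}}\mathcal{F}(W)(e)=\sum_{e\in p_{vw}}(f\circ W)(e)$. Your added remarks on the $v=w$ case, the failure for $d_2$ and $d_3$, and the reciprocal case are consistent with the paper but not needed.
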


\begin{proof}
    Let $\mathcal{F}$ be a composition operator with $f:\mathbb{R} \rightarrow \mathbb{R}$.
    Take any two nodes $v,w$ in $M=(V,E,W)$.
    Then the shortest path $p_{vw}$ in $M$ and $\mathcal{F}(M)$ is exactly the same because $p_{vw}$ is determined by the graph $G=(V,E)$.
    The Dijkstra algorithm only depends on the ordering of the set of vertices, which remains the same even after the weight changes. 
    Therefore, we have 
    \begin{equation*}
    \mathcal{F}(d_1)(v,w) \text { on } M = \sum_{\substack{e \in p_{vw} \\ e \text{: edge }}} \mathcal{F}(W)(e) = \sum_{\substack{e \in p_{vw} \\ e \text{: edge }}} (f \circ W)(e) = d_1 (v,w) \text { on } \mathcal{F}(M). 
    \end{equation*}
\end{proof}

This means that the distance is independent of the order of action over the weight function for Definition $1$.
Therefore, $d_1$ can be readily computed when the weights are modified.
In the case of $d_2$ and $d_3$, Proposition $2$ does not hold.

\begin{figure}[!ht]
\centering
\begin{tikzpicture}[auto=left]
  
  \node[circle,fill=blue!20] (n1) at (2,10) {a};
  \node[circle,fill=blue!20] (n2) at (1,8)  {b};
  \node[circle,fill=blue!20] (n3) at (4,6)  {c};
  \node[circle,fill=blue!20] (n4) at (6,8) {d};

  \foreach \from/\to/\weight in {n1/n2/1, n2/n3/1, n3/n4/1, n4/n1/0.5}
    \draw (\from) -- node[midway, above, sloped] {\weight} (\to);

\end{tikzpicture}
\caption{The graph given in Example ~\ref{def:ex1}.}
\label{fig:musicnetwork1}
\end{figure}
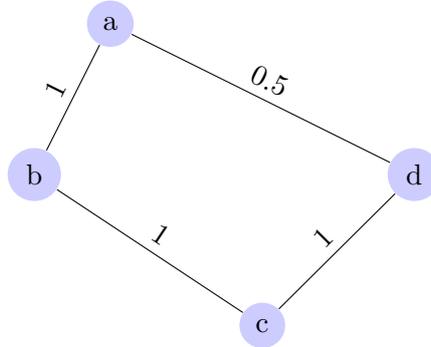

\begin{example}\label{def:ex1}
 Let $V= \{a,b,c,d \}$ be the set of nodes and $E = \{ \{a,b\},\{b,c\},\{c,d\},\{a,d\} \}$ be the set of edges. Define the weight function $W: E \rightarrow \mathbb{R}$ to be 
 \begin{equation*}
 W(e)= \begin{cases}
     0.5 & \text{ if } e=\{ a,d \} \\
     1 & \text{ otherwise }.\\
    \end{cases}
 \end{equation*}
Define the music network $M=(V,E,W)$ as shown in Figure~\ref{fig:musicnetwork1}.
Then $d_2(a,c) = 1 + 1 = 2$ on $M^{R}$ because $abc$ is the minimal weight path from $a$ to $c$ in $M^{R}$.
However, the minimal weight path from $a$ to $c$ in $M$ is $adc$.
Therefore, $d_2^{R}(a,c) = \frac{1}{1} + \frac{1}{0.5} = 3$ on $M$.
Hence $d_2(a,c) \text{ on } M^{R} \neq d_2^{R}(a,c) \text{ on } M $.
Likewise $d_3(a,c) \text{ on } M^{R} \neq d_3^{R}(a,c) \text{ on } M $ by the same reason.
\end{example}

Next, we check whether each definition satisfies the metric condition. In conclusion, $d_1$ and $d_3$ do not satisfy the triangle inequality while $d_2$ does.

\begin{proposition}
    Let $d_2$ be the distance defined in Definition $2$.
    Then $d_2$ is a metric. That is, $d_2$ satisfies the triangle inequality.
\end{proposition}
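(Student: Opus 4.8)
The plan is to reduce $d_2$ to the familiar shortest-path distance on a positively weighted graph and then run the standard concatenation argument. Throughout I use that $W$ is positive, as in the frequency-based music network.

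First I would record the observation that makes everything work: although $p_{vw}$ is singled out as one specific path by the tie-breaking rule of the weighted Dijkstra algorithm, its total weight is by construction the minimum total weight among all $v$--$w$ paths, so that for $v \neq w$
\begin{equation*}
d_2(v,w) = \min\left\{ \sum_{\substack{e \in p \\ e \text{: edge}}} W(e) \ \middle|\ p \text{ is a path from } v \text{ to } w \right\}.
\end{equation*}
This reformulation is independent of the vertex ordering. From it, non-negativity and the identity of indiscernibles are immediate: a path between distinct nodes has at least one edge and hence positive total weight, while $d_2(v,v)=0$ by definition. Symmetry is equally quick, since $M$ is undirected: reversing a $v$--$w$ path gives a $w$--$v$ path of equal weight, so the two minima coincide and $d_2(v,w)=d_2(w,v)$.

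Next I would prove the triangle inequality $d_2(u,w) \le d_2(u,v) + d_2(v,w)$ for nodes $u,v,w$. The degenerate cases where two of them coincide are trivial, so assume they are distinct. Let $p_{uv}$ and $p_{vw}$ be the minimal weight paths; concatenating them produces a walk $q$ from $u$ to $w$ through $v$ with total edge weight $d_2(u,v)+d_2(v,w)$. This $q$ need not be a simple path, but by repeatedly excising the edges of a closed sub-walk (equivalently, short-cutting at a repeated vertex) one extracts a genuine path $p$ from $u$ to $w$ whose edge multiset is contained in that of $q$. Since $W$ is positive, deleting edges only decreases the total weight, so $\sum_{e \in p} W(e) \le \sum_{e \in q} W(e) = d_2(u,v)+d_2(v,w)$. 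By the reformulation above, $d_2(u,w)$ is the minimum over all $u$--$w$ paths, hence $d_2(u,w) \le \sum_{e \in p} W(e) \le d_2(u,v)+d_2(v,w)$, as desired. Together with the previous paragraph this shows $d_2$ is a metric.

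The one genuinely non-cosmetic point — and the step I expect to require the most care — is the walk-to-path reduction: one must argue cleanly that a walk between two vertices contains a path between them using only a subset of its edges (an easy induction on the number of edges, removing one closed sub-walk at a time) and observe that positivity of $W$ is precisely what guarantees this pruning does not increase the weight. Everything else — symmetry from undirectedness, positivity giving the identity of indiscernibles, and the irrelevance of the Dijkstra tie-break to the numerical value of $d_2$ — is routine.
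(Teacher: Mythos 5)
Your proof is correct and takes essentially the same route as the paper's: identify $d_2(v,w)$ with the minimum total edge weight over all $v$--$w$ paths (so the Dijkstra tie-break is irrelevant to its value) and derive the triangle inequality by concatenating the two minimal-weight paths through the intermediate vertex. The only difference is that you explicitly handle the walk-to-path reduction using positivity of $W$, a technical point the paper's proof glosses over by treating the concatenation $p_{vz}+p_{zw}$ directly as a path.
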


\begin{proof}
    First, for any node $v$ in $M$, we know that $d_2(v,v) = 0$ by the definition of $d_2$.
    Second, take any two nodes $v,w$ in $M$.
    In Dijkstra's algorithm, it is important to note that the process of finding a path between vertices $v$ and $w$ may yield different results compared to finding a path between $w$ and $v$. Nevertheless, it is worth noting that for all paths, the sum of the weights along the path remains unique, as it represents the minimum total weight among all possible paths that traverse the edges. Therefore, we know that $d_2(v,w)=d_2(w,v)$. Finally, take any three distinct nodes $v,z,w$ in $M$.
    Denote the minimum weight path from $v$ to $z$ by $p_{vz}$ and the minimum weight path from $z$ to $w$ by $p_{zw}$.
    Here, the concatenated path $p_{vz} + p_{zw}$ is also a path between $v$ and $w$.
    Therefore, we have
    \begin{equation*} 
    d_2(v,w) = \sum_{\substack{e \in p_{vw} \\ e \text{ : edge }}} W(e) 
    \leq \sum_{\substack{e \in p_{vz} + p_{zw} \\ e \text{ : edge }}} W(e) 
    = d_2(v,z) + d_2(z,w)
    \end{equation*}
    
\end{proof}

However, in the following example, $d_1$ and $d_3$ do not become metrics.

\begin{example}\label{def:ex2}

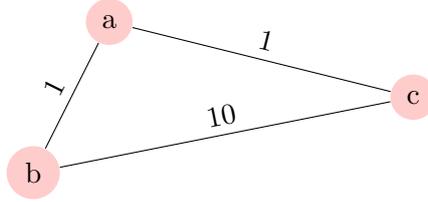
\begin{figure}[!ht]
\centering
\begin{tikzpicture}[auto=left]
  
  \node[circle,fill=red!20] (n1) at (2,10) {a};
  \node[circle,fill=red!20] (n2) at (1,8)  {b};
  \node[circle,fill=red!20] (n3) at (6,9)  {c};

  \foreach \from/\to/\weight in {n1/n2/1, n2/n3/10, n3/n1/1}
    \draw (\from) -- node[midway, above, sloped] {\weight} (\to);

\end{tikzpicture}
\caption{The graph given in Example ~\ref{def:ex2}.}
\label{fig:musicnetwork2}
\end{figure}

 Let $V= \{a,b,c\}$ be the set of nodes and $E = \{ \{a,b\},\{b,c\},\{a,c\} \}$ be the set of edges. Define the weight function $W: E \rightarrow \mathbb{R}$ to be 
 \begin{equation*}
 W(e)= \begin{cases}
     10 & \text{ if } e=\{ b,c \} \\
     1 & \text{ otherwise } . 
    \end{cases}
 \end{equation*}
Define the network $M=(V,E,W)$ as shown in Figure~\ref{fig:musicnetwork2}.
Then 
\begin{equation*}
d_1(b,c)=10 > 1+1 = d_1(a,b) + d_1(a,c) \text{ on } M.
\end{equation*}
Also,  
\begin{equation*}
d_3(b,c)=10 > 1+1 = d_3(a,b) + d_3(a,c) \text{ on } M.
\end{equation*}

However, since the minimal weight path from $b$ to $c$ in $M$ is $bac$, so we have 
\begin{equation*}
d_2(b,c)=2 \leq 1+1 = d_2(a,b) + d_2(a,c).
\end{equation*}
\end{example}

We can observe that there always exists the following relationship among  $d_1$, $d_2$, and $d_3$.

\begin{proposition}
    Let $d_1$, $d_2$, and $d_3$ be the distances defined above.
    Then the following inequality holds  
    \begin{equation*} 
    d_2(v,w) \leq d_3(v,w) \leq d_1(v,w) \text{ for any } v,w \in V. 
    \end{equation*}
\end{proposition}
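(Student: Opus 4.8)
The plan is to rewrite each of the three distances as the minimum of a single functional — the total edge weight of a path — taken over three \emph{nested} families of $v$–$w$ paths, after which the claimed chain of inequalities is immediate. Throughout, fix distinct nodes $v,w$ (the case $v=w$ is trivial, since $d_1(v,w)=d_2(v,w)=d_3(v,w)=0$ by definition). Write $\Pi(v,w)$ for the set of all paths from $v$ to $w$ in $M$, let $\mathcal{P}(v,w)\subseteq\Pi(v,w)$ be the subset of those using the minimum possible number of edges, and let $p_{vw}$ be the minimal edge path (the particular member of $\mathcal{P}(v,w)$ singled out by the unweighted Dijkstra algorithm together with the vertex ordering). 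For a path $p$ put $\ell(p)=\sum_{e\in p}W(e)$.

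First I would record the three identities
\begin{equation*}
d_1(v,w)=\ell(p_{vw}),\qquad d_3(v,w)=\min_{p\in\mathcal{P}(v,w)}\ell(p),\qquad d_2(v,w)=\min_{p\in\Pi(v,w)}\ell(p).
\end{equation*}
The first two are exactly Definitions $1$ and $3$. The third is the only point that needs an argument: by Definition $2$, $d_2(v,w)=\ell(q)$, where $q$ is the minimal weight path produced by the weighted Dijkstra algorithm; although the choice of $q$ depends on the vertex ordering, its total weight $\ell(q)$ equals the minimum of $\ell$ over \emph{all} $v$–$w$ paths. This is precisely the observation already used in the proof that $d_2$ is a metric (the sum of weights along a minimal weight path is unique, being the minimum total weight among all paths). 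Hence $d_2(v,w)=\min_{p\in\Pi(v,w)}\ell(p)$.

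With these identities in hand, the inequalities follow from set inclusions. Since a minimal edge path is in particular a path with the minimum number of edges, $p_{vw}\in\mathcal{P}(v,w)$, so
\begin{equation*}
d_3(v,w)=\min_{p\in\mathcal{P}(v,w)}\ell(p)\le\ell(p_{vw})=d_1(v,w).
\end{equation*}
Since $\mathcal{P}(v,w)\subseteq\Pi(v,w)$, minimizing $\ell$ over the smaller family can only increase the value, so
\begin{equation*}
d_2(v,w)=\min_{p\in\Pi(v,w)}\ell(p)\le\min_{p\in\mathcal{P}(v,w)}\ell(p)=d_3(v,w).
\end{equation*}
Combining the two displays gives $d_2(v,w)\le d_3(v,w)\le d_1(v,w)$, as claimed.

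There is no substantial obstacle: the entire content is the reformulation in the second paragraph, and within that the only step that requires justification rather than mere unwinding of definitions is the identity $d_2(v,w)=\min_{p\in\Pi(v,w)}\ell(p)$ — that is, that the tie-breaking in weighted Dijkstra does not affect the distance \emph{value}, which I would cite from the proof that $d_2$ is a metric. One small point worth a sentence of care is that ``path'' should mean the same thing in all three definitions; with nonnegative weights this is harmless, since deleting cycles from a walk never increases $\ell$, so the minimum of $\ell$ over simple paths coincides with the minimum over all walks.
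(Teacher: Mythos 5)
Your proposal is correct and follows essentially the same route as the paper's proof: the paper likewise writes $d_1(v,w)$ as the weight of the Dijkstra-selected minimal edge path $p_{vw}\in\mathcal{P}(v,w)$, identifies $d_2(v,w)$ with the minimum of the total weight over the set of all $v$--$w$ paths, and deduces both inequalities from $p_{vw}\in\mathcal{P}(v,w)$ and $\mathcal{P}(v,w)\subseteq\mathcal{P}_2(v,w)$. Your only additions — explicitly justifying that the weighted Dijkstra tie-breaking does not affect the value of $d_2$, and the remark about walks versus simple paths — are careful touches the paper leaves implicit, not a different argument.
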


\begin{proof}
    Fix the music network $M=(V,E,W)$ and take any pair of two nodes $v$ and $w$ in $V$.
    First, we show that $d_3 \leq d_1 $. 
    Let $\mathcal{P}(v,w)$ be the set of paths with the minimum number of edges between $v$ and $w$. If we define $p_{vw}$ to be the unique minimal edge path found by the Dijkstra algorithm, then $p_{vw} \in \mathcal{P}(v,w)$.
    By definition, we have 
    \begin{equation*}
    d_3(v,w) = \min_{p \in \mathcal{P}(v,w) } \left\{ \sum_{\substack{e \in p \\ e \text{ : edge }}} W(e) \right\} \leq \sum_{\substack{e \in p_{vw} \\ e \text{ : edge }}} W(e) = d_1(v,w). 
    \end{equation*}

    Second, we show that $d_2 \leq d_3$.
    This follows from the relation among the paths between $v$ and $w$.
    Explicitly, let $\mathcal{P}(v,w)$ be defined as above and let $\mathcal{P}_2(v,w)$ be the set of all paths connecting $v$ and $w$. Then it is trivial that  $\mathcal{P}(v,w) \subseteqq \mathcal{P}_2(v,w)$. Hence, we have the following inequality 
    \begin{equation*}
    d_2(v,w) = \min_{p \in \mathcal{P}_2(v,w)} \left\{ \sum_{\substack{e \in p \\ e \text{ : edge }}} W(e) \right\} \le \min_{p \in \mathcal{P}(v,w) } \left\{ \sum_{\substack{e \in p \\ e \text{ : edge }}} W(e) \right\} = d_3(v,w).
    \end{equation*}
    
\end{proof}

\begin{figure}[!ht]
    \centering
    \begin{minipage}{0.45\textwidth}
        \centering
        \begin{tikzpicture}[auto=left]
          
          \node[circle,fill=green!20] (n1) at (4,10) {0};
          \node[circle,fill=green!20] (n2) at (4,8)  {3};
          \node[circle,fill=green!20] (n3) at (8,8)  {2};
          \node[circle,fill=green!20] (n4) at (8,10) {1};
        
          \foreach \from/\to/\weight in {n1/n2/1, n2/n3/2, n3/n4/10, n4/n1/20}
            \draw (\from) -- node[midway, above, sloped] {\weight} (\to);
            \draw[thick,->] (n1) -- (n4);
            \draw[thick,->] (n4) -- (n3);
        \end{tikzpicture}
        \begin{tikzpicture}
            \draw[thick, ->] (0,0) -- (0.5,0) node[right]{Path: 0 - 1 - 2};
        \end{tikzpicture}
    \end{minipage}\hfill
    \begin{minipage}{0.45\textwidth}
        \centering
        \begin{tikzpicture}[auto=left]

          \node[circle,fill=green!20] (n1) at (4,10) {0};
          \node[circle,fill=green!20] (n2) at (4,8)  {1};
          \node[circle,fill=green!20] (n3) at (8,8)  {2};
          \node[circle,fill=green!20] (n4) at (8,10) {3};
        
          \foreach \from/\to/\weight in {n1/n2/1, n2/n3/2, n3/n4/10, n4/n1/20}
            \draw (\from) -- node[midway, above, sloped] {\weight} (\to);
            \draw[thick,->] (n1) -- (n2);
            \draw[thick,->] (n2) -- (n3);
        \end{tikzpicture}
        \begin{tikzpicture}
            \draw[thick, ->] (0,0) -- (0.5,0) node[right]{Path: 0 - 1 - 2};
        \end{tikzpicture}
    \end{minipage}
    \caption{ (Example~\ref{ordering_example}) For the ordering in the left figure, $d_1(0,2)=30$, and for the ordering in the right figure, $d_1(0,2)=3$.}
        \label{fig:my_graph3}
\end{figure}
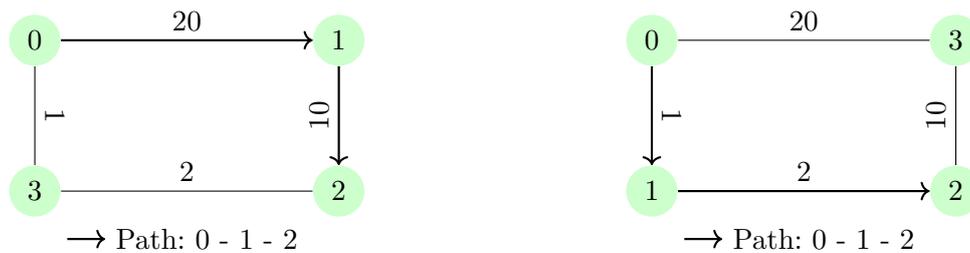

One notable feature is that the music network equipped with the definition of $d_1$ tends to be greatly affected by vertex ordering in the Dijkstra algorithm.  On the other hand, the ones with $d_2$ and $d_3$ are determined independently of vertex ordering.  For example, consider the following example. 
\begin{example}\label{ordering_example}
Suppose that we have a graph with $V(G)=\{v_0, v_1, v_2, v_3 \}$ as illustrated in Figure~\ref{fig:my_graph3}. Assume that in applying Dijkstra's algorithm, the vertices are ordered as  $0 < 1 < 2 < 3$ in $G$. In this case, $d_1$ selects different paths in the scenarios shown on the left and right figures of Figure~\ref{fig:my_graph3}, resulting in different distances. However, in the case of $d_2$ and $d_3$, the distance is consistently calculated as a constant value regardless of the vertex ordering, since both definitions identify the path that minimizes the total weight independently of the vertex order.
\end{example}

There are several special cases that the music graph $G$ has uniform property. For example, if the music structure is strictly non-repetitive and contains no cycles, the music graph becomes a tree and obviously $d_1(v,w) = d_2(v, w) = d_3(v, w)$, for any vertices $v$ and $w$ in $G$. This also holds when $G$ is a complete graph. Also for a cycle graph $G$ of order $n\ge 3$,  $d_1(v, w) = d_3(v, w)$ if $n$ is odd and $d_1(v,w) = d_2(v,w) = d_3(v, w)$ if $G$ is uniformly weighted. 

Such special cases may arise within local graphs or subgraphs of the entire structure, making them particularly relevant when analyzing specific segments of a musical piece for interpretation.
\section{Persistent homology induced by three different distances } 

In this section, we analyze the homological properties of the target graph according to each distance. Particularly, we focus on the inclusion relations of persistence barcodes and diagrams of music networks induced by the three different distance definitions.  

\subsection{Injection relations among three different distances}

Before stating the main theorem of this work, we need to clarify the notations. Recall that the $p$-dimensional persistence barcode with the distance $d$,  $\mathsf{bcd}_p(d)$, is defined as 
$ A \cup B$
where 
\begin{equation*}
A = \{ [f(\sigma_i), f(\sigma_j)]  \mid \sigma_i \text{ is a $p$-simplex with }i = low_R(j), f(\sigma_i) < f(\sigma_j) \}, 
\end{equation*}
and 
\begin{equation*}
B = \{ [f(\sigma_i), \infty]  \mid \sigma_i \text{ is a $p$-simplex, there is no $\sigma_j$ such that } i = low_R(j)\}. 
\end{equation*}
Here we denote the set-version barcode by explicating the corresponding simplices, i.e.
\begin{equation*} 
\mathsf{bcd}^{set}_p(d) \coloneqq \{[b_i,d_j]_{\sigma_i} \mid [b_i,d_j] \in \mathsf{bcd}_p(d) \}.
\end{equation*}
$R$ in $low_R(j)$ is the reduced boundary matrix derived from the original boundary matrix of a filtered simplicial complex.
Each column $j$ of $R$ corresponds to a simplex $\sigma_j$, and reduction is typically performed column by column (usually left to right) to track how simplices pair up. $low_R(j) = i$ means that the pivot (i.e., lowest non-zero entry) in column $j$ is at row $i$. This indicates that $\sigma_i$ and $\sigma_j$   form a persistence pair, where $\sigma_i$ is a $p$-dimensional simplex that creates a homological feature and $\sigma_j$  is a 
$(p+1)$-dimensional simplex that annihilates the same feature.
To make the corresponding simplices well-defined, we fix the way to match such a simplex using the Standard Algorithm~\footnote{See the following for more detailed algorithms regarding $R$: Edelsbrunner, H., Harer, J. (2010), Computational Topology: An Introduction. American Mathematical Society, Zomorodian, A., Carlsson, G. (2005),  Computing persistent homology, Discrete \& Computational Geometry, 33(2), 249--274.}.
By fixing the method, we obtain the useful results as below. The complete proofs are provided, with more general definitions of distance, in~\citep{Heo2024}. 

\begin{proposition}\label{prop1}
    Let $d_i$, $i = 1, 2, 3,$ be the three distances for the given $G$. Then every birth edge of $\mathsf{bcd}^{set}_1(d_i)$ exists in $G$. Therefore, it is justified to call a birth $1$-simplex a birth edge.
\end{proposition}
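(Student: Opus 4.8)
The plan is to unwind what a birth edge of $\mathsf{bcd}^{set}_1(d_i)$ actually is, and then argue that the $1$-simplex playing that role must already be an edge of $G$ rather than a ``virtual'' edge created by the distance. In the Vietoris--Rips filtration built from $(V, d_i)$, every pair of vertices eventually becomes a $1$-simplex, so a priori a birth $1$-simplex for an $H_1$-class could be any pair $\{v,w\}$. I would first recall that a birth $1$-simplex $\sigma_i$ is, by the Standard Algorithm description given just above, a simplex whose column in the reduced matrix $R$ remains nonzero (it is not zeroed out, so it is a ``creator''), and whose filtration value $f(\sigma_i)$ equals the birth time $b_i$ of the corresponding bar. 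The key structural fact I would invoke is that for a $1$-dimensional homology class, the creating edge is the edge that ``closes a loop'': at filtration value $f(\sigma_i)$, adding $\sigma_i$ increases $\beta_1$, which means the two endpoints of $\sigma_i$ are already connected in $\mathcal{K}_{f(\sigma_i)}$ by a path of edges each of filtration value $\le f(\sigma_i)$.

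The heart of the argument is then a comparison of filtration values. For each of the three distances $d_i$, the value $d_i(v,w)$ is realized as a sum $\sum_{e \in p_{vw}} W(e)$ over the edges of some path $p_{vw}$ in $G$ (the minimal edge path for $d_1$, the minimal weight path for $d_2$, the minimal-edge-count-then-minimal-weight path for $d_3$). I would show that for a pair $\{v,w\}$ that is \emph{not} an edge of $G$, any such realizing path $p_{vw}$ has length $\ge 2$, hence decomposes as $p_{vx} + p_{xw}$ for some intermediate vertex $x$, and — crucially, using that $W > 0$ (or at least that the weights are nonnegative and the sub-path weights do not exceed the total) — each strict sub-path has total weight strictly less than $d_i(v,w)$. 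More carefully, one wants $d_i(v, x) < d_i(v, w)$ and $d_i(x, w) < d_i(v,w)$; this requires checking, for each of the three definitions, that the sub-paths of the chosen realizing path are themselves (at least as good as) the chosen realizing paths for the sub-pairs, so that the $d_i$-distances of the sub-pairs are bounded by those sub-path weights. Granting this, the edges $\{v,x\}$ and $\{x,w\}$ — and inductively, all edges along the realizing path — appear in the filtration strictly before $\{v,w\}$. Therefore $v$ and $w$ are already in the same connected component of the $1$-skeleton of $\mathcal{K}_{f(\{v,w\})^-}$, so adding the $1$-simplex $\{v,w\}$ at time $d_i(v,w)$ does \emph{not} create a new $H_1$ class: the class it would create is killed instantly, or rather $\{v,w\}$ is a ``filler'' whose column in $R$ reduces to zero, contradicting it being a birth simplex. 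Hence every birth $1$-simplex must be a pair that \emph{is} an edge of $G$.

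The step I expect to be the main obstacle is the sub-path monotonicity claim — that for each $d_i$ a strict sub-path of a realizing shortest path is a realizing (or dominating) path for the corresponding sub-pair, giving \emph{strict} decrease of the filtration value. For $d_2$ this is the familiar optimal-substructure property of shortest-weight paths and is clean. For $d_1$ and $d_3$ the ``minimum number of edges'' constraint and the Dijkstra tie-breaking by vertex ordering make the sub-path argument more delicate: a sub-path of the minimal-edge path need not be \emph{the} minimal-edge path of the sub-pair under the fixed ordering, so one has to argue an inequality $d_i(v,x) \le (\text{weight of sub-path } p_{vx})$ directly, and then combine with positivity of $W$ to get the strict inequality $< d_i(v,w)$. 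I would handle this by noting that a sub-path of a minimal-\emph{edge}-count path is itself minimal-edge-count for its endpoints (any shorter sub-path would shorten the whole), so the sub-pair's realizing path has no more edges; positivity of $W$ then forces the strict drop. Once this monotonicity is pinned down for all three $d_i$, the rest is the standard persistence bookkeeping sketched above, and the proposition follows. I would also remark that this is the $1$-dimensional specialization of the general result in~\citep{Heo2024}, so the argument here can be kept short and self-contained.
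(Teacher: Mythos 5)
Your overall strategy (reduce to a statement about pairs $\{v,w\}\notin E$ and use the fact that $d_i(v,w)$ is realized by a path of $G$-edges, each entering the filtration no later than $d_i(v,w)$) is the right starting point, but the homological bookkeeping at the end is inverted, and this breaks the proof. An edge whose endpoints are already connected when it enters the Vietoris--Rips filtration is precisely a \emph{creator} of a $1$-cycle: in the Standard Algorithm its column of $R$ reduces to zero, and it is exactly the kind of simplex that can serve as a birth simplex ($i=low_R(j)$ for some triangle $j$). So your concluding step --- ``$v$ and $w$ are already connected, hence $\{v,w\}$ does not create a new $H_1$ class, its column reduces to zero, contradiction'' --- derives a contradiction from the very property that characterizes birth edges; moreover, every genuine birth edge of $G$ also has its endpoints connected by strictly earlier edges (that is why it creates a cycle rather than killing an $H_0$ class), so your criterion would disqualify all birth edges, not just the non-$G$ ones. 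What actually has to be shown is that a pair $\{v,w\}\notin E$ \emph{does} create a cycle but one that dies at the same filtration value, so that the resulting pair has $f(\sigma_i)=f(\sigma_j)$ and is excluded from $\mathsf{bcd}_1(d_i)$ by the strict inequality in the definition. That requires exhibiting a $2$-simplex present already at time $d_i(v,w)$ --- e.g.\ a triangle $\{v,x,w\}$ with $d_i(v,x)\le d_i(v,w)$ and $d_i(x,w)\le d_i(v,w)$ --- and then an argument about how the Standard Algorithm pairs columns of equal filtration value (so that $\{v,w\}$ cannot end up as the pivot of a strictly later triangle). None of this appears in your sketch; note also that the paper itself does not prove the proposition but defers to the general argument in \citep{Heo2024}, which is where exactly this pairing analysis is carried out.

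Separately, the sub-path monotonicity step, which you correctly flag as the delicate point, is genuinely gapped for $d_1$. For $d_1$, $d_1(x,w)$ is the weight of \emph{the} minimal-edge-count path selected by the unweighted Dijkstra tie-breaking, and knowing that the sub-path of $p_{vw}$ from $x$ to $w$ has minimal edge count does not bound that selected path's weight: if $p_{vw}=v\,x\,b\,w$ with unit weights, and between $x$ and $w$ there is a second two-edge path of enormous weight which the vertex ordering forces Dijkstra to choose, then $d_1(x,w)$ can exceed $d_1(v,w)$ itself, so neither $d_1(x,w)\le W(p_{xw})$ nor the strict drop you need follows. Your argument ``no more edges, hence, by positivity of $W$, strictly smaller weight'' is a non sequitur, since edge count does not control weight under the tie-broken choice. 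For $d_2$ this step is the usual optimal-substructure property and is fine, and for $d_3$ it also works (a sub-path of a minimal-edge-count path is minimal-edge-count, and $d_3$ minimizes weight over \emph{all} such paths, so $d_3(x,w)\le W(p_{xw})$), but for $d_1$ the intermediate vertex, or the whole comparison, has to be chosen more carefully, and fixing this is an essential part of any complete proof.
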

\begin{proposition}\label{prop2}
    Let $G=(V,E,W_E)$ be a connected weighted graph. 
    If $\mathcal{B}_i$ are the set of all birth edges of $\mathsf{bcd}_1(d_i)$, respectively for $i=1,2,3$, then $\mathcal{B}_2 \subseteq \mathcal{B}_3 \subseteq \mathcal{B}_1$.
\end{proposition}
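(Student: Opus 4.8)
The plan is to reduce the statement about birth edges to the statement, established in Proposition~\ref{prop1}, that every birth $1$-simplex of $\mathsf{bcd}^{set}_1(d_i)$ is genuinely an edge of $G$, and then to trace how the Vietoris--Rips filtration on $(V, d_i)$ interacts with the three distance definitions. First I would recall that in the Vietoris--Rips complex the $1$-simplices appear in the order dictated by the values $d_i(v,w)$, and that a birth edge of a one-dimensional bar is precisely an edge $\{v,w\}$ whose insertion at scale $\epsilon = d_i(v,w)$ creates a new $1$-cycle that is not yet a boundary. Because of Proposition~\ref{prop1}, such a $\{v,w\}$ lies in $E$, so by Proposition~\ref{prop4} (the ordering $d_2 \le d_3 \le d_1$) we have, on that same edge, $d_2(v,w) \le d_3(v,w) \le d_1(v,w)$. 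The heart of the argument will be to show that if $\{v,w\}$ fails to be a birth edge for $d_1$ — i.e.\ at scale $d_1(v,w)$ the cycle through $\{v,w\}$ is already a boundary — then it also fails to be a birth edge for $d_3$, and similarly $d_3$-birth implies $d_2$-birth; this gives the contrapositive of each inclusion.

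The key technical point I would isolate is a \emph{comparison of sublevel complexes}: for an edge $e = \{v,w\} \in E$, one has $W(e) = d_1(v,w) = d_2(v,w) = d_3(v,w)$ when $e$ is itself the minimal-edge, minimal-weight, and minimal-count-minimizing path (which it is, being a single edge between two nodes it connects), but for a \emph{non-edge} pair the three values genuinely differ with $d_2 \le d_3 \le d_1$. Consequently, for any threshold $t$, the $1$-skeleton of $VR(V,d_1)$ at scale $t$ is a \emph{subgraph} of the $1$-skeleton of $VR(V,d_3)$ at scale $t$, which in turn is a subgraph of that of $VR(V,d_2)$ at scale $t$: fewer non-edges get pulled in early under $d_1$ than under $d_3$ than under $d_2$. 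The plan is then: fix a candidate birth edge $e$ for $d_1$; since $e \in E$ it enters all three filtrations at the same scale $W(e)$; at the moment just before $W(e)$, the cycle space available in the $d_1$-filtration is contained in that of the $d_3$-filtration (more edges present), and the newly-born class that $e$ creates in the $d_1$-complex must therefore already be present — and still nontrivial, by a careful check that no extra $2$-simplex under $d_3$ has yet killed it, which requires comparing when triangles fill in — in the $d_3$-complex; hence $e$ is also a birth edge for $d_3$. The same monotonicity argument, applied to the pair $(d_3, d_2)$, yields $\mathcal{B}_3 \subseteq \mathcal{B}_2$... which is the wrong direction, so in fact one must run the argument in the \emph{reverse} inclusion sense: I would instead argue directly that the \emph{set of edges that are birth edges} shrinks as the filtration ``spreads out,'' using that a denser early filtration (smaller distances on non-edges, as for $d_2$) causes cycles to be born — and killed — earlier, and that the birth-edge set is governed by which edges are ``not in the minimum spanning forest structure'' relative to the ambient metric; under $d_2$ more potential cycles are realized before $e$'s scale, so more of them are already boundaries, making $\mathcal{B}_2$ smaller.

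I expect the main obstacle to be precisely this last subtlety: the naive ``more edges present $\Rightarrow$ more cycles born'' intuition must be reconciled with the fact that extra edges (and the triangles they complete) also \emph{kill} cycles, so controlling the net effect on the birth-edge set requires a genuine argument, not a one-line monotonicity observation. The clean way around it, which I would pursue, is to invoke the characterization of one-dimensional birth edges via the Standard Algorithm: a birth edge for $d_i$ is an edge $e$ such that, in the filtration order induced by $d_i$, the column of $e$ in the reduced boundary matrix $R$ is nonzero and is not a pivot column — equivalently, $e$ is not the ``last'' edge of any cycle all of whose other edges precede it. Since all edges of $E$ enter each $d_i$-filtration at value $W(e)$ and in the \emph{same} relative order (ties broken consistently), and since all non-edges enter \emph{later} under $d_1$ than under $d_3$ than under $d_2$ relative to a fixed edge $e$, I would show that ``$e$ is the last edge of some cycle realized before scale $d_1(e)$'' is the \emph{weakest} such condition and ``\dots before scale $d_2(e)$'' the \emph{strongest}, directly yielding $\mathcal{B}_2 \subseteq \mathcal{B}_3 \subseteq \mathcal{B}_1$. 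Filling in that the relevant cycles can always be taken to use only edges of $E$ together with the single non-edge whose arrival time one is comparing — so that Proposition~\ref{prop1} applies uniformly — is the remaining detail; the full verification is carried out, in greater generality, in~\citep{Heo2024}, which I would cite for the technical core while presenting the filtration-comparison skeleton above as the conceptual proof.
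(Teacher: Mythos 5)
The paper itself offers no proof of this proposition: it states that the complete proofs, in greater generality, are given in \citep{Heo2024}, so your final move of citing that reference for the technical core is in line with what the authors do. The problem is that the sketch you build around that citation rests on a false intermediate claim. You assert that for every edge $e=\{v,w\}\in E$ one has $W(e)=d_1(v,w)=d_2(v,w)=d_3(v,w)$, hence that $e$ enters all three Vietoris--Rips filtrations at the same scale and that the edges of $E$ appear ``in the same relative order'' under all three distances, with only the non-edges shifting. This is true for $d_1$ and $d_3$ (the single edge is the unique path with the minimum number of edges), but it fails for $d_2$: the minimal weight path between adjacent vertices may bypass a heavy edge. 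The paper's own Example~\ref{def:ex2} is a counterexample, where $W(\{b,c\})=10$ but $d_2(b,c)=2$, so under $d_2$ the $1$-simplex $bc$ enters at scale $2$, not $10$, and the relative order of the edges of $E$ in the $d_2$-filtration differs from their order in the $d_1$- and $d_3$-filtrations. Since your Standard-Algorithm comparison is anchored precisely on ``same entry value and same relative order for edges of $E$, only non-edges arrive earlier or later,'' the argument breaks exactly at the point where the content of the proposition lies.

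The second gap is the one you half-acknowledge yourself: your filtration-density monotonicity first produces the inclusion in the wrong direction, and the repair is the heuristic that a denser early filtration (as under $d_2$) causes more cycles to be already boundaries, ``making $\mathcal{B}_2$ smaller.'' But density cuts both ways: under $d_2$ and $d_3$ more pairs are positive simplices at small scales, which on its face could create \emph{more} birth events, not fewer. What has to be shown is that every extra positive $1$-simplex arising under $d_2$ or $d_3$ is either a non-edge whose class dies at its birth value (zero persistence, hence excluded from $\mathsf{bcd}^{set}_1(d_i)$ and handled by Proposition~\ref{prop1}) or an edge that is already accounted for in $\mathcal{B}_3$, respectively $\mathcal{B}_1$; and that no birth edge of $d_2$ becomes negative or zero-persistent under $d_3$ or $d_1$. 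None of this follows from the sublevel-set containment $d_2\le d_3\le d_1$ alone, and it is precisely the part you delegate to \citep{Heo2024}. As written, then, your proposal is a plausible narrative around the result rather than a proof: the one concrete mechanism you propose (common entry times and common edge order) is contradicted by Example~\ref{def:ex2}, and the genuinely delicate step is left to the citation.
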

These propositions also act as lemmas for the following main theorem:
\begin{theorem}\label{maintheorem}
    Let $G=(V,E,W_E)$ be a connected weighted graph. 
    Then there are injections $\varphi_{2,3} : \mathsf{bcd}^{set}_1(d_2) \hookrightarrow \mathsf{bcd}^{set}_1(d_3)$ and $\varphi_{3,1} : \mathsf{bcd}^{set}_1(d_3) \hookrightarrow \mathsf{bcd}^{set}_1(d_1)$ defined as $\varphi_{2,3}([a,b]_{\sigma})=[a,c]_{\sigma} $ and $\varphi_{3,1}([a,c]_{\sigma})=[a,d]_{\sigma} $ such that $b \le c \le d$ for any $[a,b] \in \mathsf{bcd}_1(d_2)$.
\end{theorem}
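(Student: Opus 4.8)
The plan is to build the two injections dimension by dimension on the level of birth edges first, and then to promote that to an injection on the set-version barcodes by tracking the death times. Recall that Proposition~\ref{prop1} guarantees every birth class in $\mathsf{bcd}^{set}_1(d_i)$ is carried by an actual edge $\sigma \in E$, and Proposition~\ref{prop2} gives the nested inclusions $\mathcal{B}_2 \subseteq \mathcal{B}_3 \subseteq \mathcal{B}_1$ of the birth-edge sets. So the first step is to define $\varphi_{2,3}$ and $\varphi_{3,1}$ on a bar $[a,b]_\sigma$ simply by keeping the \emph{same} carrier edge $\sigma$: since $\sigma \in \mathcal{B}_2 \subseteq \mathcal{B}_3$, the edge $\sigma$ is also a birth edge for $d_3$, so there is a well-defined bar in $\mathsf{bcd}^{set}_1(d_3)$ born with that edge, and we declare $\varphi_{2,3}([a,b]_\sigma) = [a,c]_\sigma$ where $c$ is its $d_3$-death time. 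The analogous definition gives $\varphi_{3,1}$. Because distinct bars in a set-version barcode are carried by distinct birth simplices (that is exactly what the Standard Algorithm pairing buys us — each simplex is a pivot in at most one column), the assignment $\sigma \mapsto \sigma$ is injective on carriers, hence both $\varphi_{2,3}$ and $\varphi_{3,1}$ are injective as maps of multisets-with-labels.

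The substantive content is then the two inequalities: the birth time $a$ is literally the same for all three distances on a fixed birth edge $\sigma$ (the birth time of a $1$-cycle class carried by $\sigma$ is $W(\sigma)$, which does not depend on the distance definition, since $W$ is the common edge weight function), so only the death times need comparison, and we must show $b \le c \le d$ where $b, c, d$ are the $d_2$-, $d_3$-, $d_1$-death times of the class born at $\sigma$. For this I would use the elementwise inequality $d_2(v,w) \le d_3(v,w) \le d_1(v,w)$ from Proposition~4 (the last Proposition in Section~3), which says the Vietoris--Rips filtrations are \emph{ordered}: $VR(G, d_1) \subseteq VR(G, d_3) \subseteq VR(G, d_2)$ at every scale $\epsilon$, i.e. the $d_1$-complex at scale $\epsilon$ is a subcomplex of the $d_3$-complex at scale $\epsilon$, which is a subcomplex of the $d_2$-complex at scale $\epsilon$. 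Intuitively, a larger pairwise distance means simplices appear later, so the $d_1$-complex is the ``slowest'' to fill in. A $1$-cycle born at $\sigma$ dies when some $2$-simplex (or combination thereof) appears that fills it; since $2$-simplices appear earliest in the $d_2$-filtration, the class dies earliest there, giving $b \le c \le d$. The clean way to make this precise is: the death time of the class carried by $\sigma$ equals the smallest $\epsilon$ such that $[\sigma]$ becomes a boundary in the complex at scale $\epsilon$; a boundary relation valid in the $d_1$-complex at scale $\epsilon$ is, after the subcomplex inclusions, already valid in the $d_3$- and $d_2$-complexes at the same scale $\epsilon$, so the death time can only go down as we pass $d_1 \to d_3 \to d_2$.

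\textbf{Main obstacle.} The delicate point is not the filtration ordering but the compatibility of the Standard-Algorithm pairing across the three filtrations. A priori, even though $\sigma$ is a birth edge for all three distances (by Proposition~\ref{prop2}), it is conceivable that the Standard Algorithm pairs $\sigma$ with death simplex $\tau$ in the $d_2$-reduction but pairs $\sigma$ with a different death simplex $\tau'$ in the $d_3$-reduction, and one must check that the recorded death \emph{time} still satisfies $b \le c$. The resolution is that the death time of a given birth simplex is a filtration invariant independent of which particular pairing convention is used (it is the persistence pairing, well-defined by the pairing uniqueness lemma / the fact that $\dim \mathrm{im}$ and $\dim \ker$ of the boundary maps are basis-independent), so the Standard Algorithm is only needed to pin down a \emph{canonical representative bar} — the numerics $(a, \text{death})$ attached to $\sigma$ are forced. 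Hence once I have shown $\sigma$ is a common birth edge with common birth time $a$, and that its death time is monotone under the $d_2 \to d_3 \to d_1$ filtration refinement, the maps $\varphi_{2,3}, \varphi_{3,1}$ are well-defined, injective, and satisfy $b \le c \le d$. Since the full argument with the general distance framework is carried out in~\citep{Heo2024}, I would cite that reference for the pairing-invariance bookkeeping and give here the short self-contained version specialized to $d_1, d_2, d_3$.
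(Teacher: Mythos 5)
First, a framing point: the paper itself does not prove Theorem~\ref{maintheorem}; it states Propositions~\ref{prop1} and~\ref{prop2} as lemmas and explicitly defers the complete proof to \citep{Heo2024}. Your overall route --- keep the birth edge $\sigma$ fixed using $\mathcal{B}_2 \subseteq \mathcal{B}_3 \subseteq \mathcal{B}_1$, get injectivity from the fact that each birth simplex carries at most one bar, and derive the death-time inequalities from $d_2 \le d_3 \le d_1$ and the resulting reversed inclusions of Rips complexes at each scale --- is the natural one and is consistent with the paper's framing, and like the paper you ultimately lean on \citep{Heo2024} for the delicate part. Two concrete gaps remain, however, and they sit exactly where the content of the theorem lies. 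First, the shared birth time: you justify it by saying the birth time of the bar carried by $\sigma=\{v,w\}$ is $W(\sigma)$ ``independent of the distance definition.'' That is false as stated for $d_2$: the filtration value of the pair $\{v,w\}$ in the $d_2$-Rips filtration is $d_2(v,w)$, which can be strictly smaller than $W(\sigma)$ (in Example~\ref{def:ex2}, $W(bc)=10$ while $d_2(b,c)=2$). What one must show is that an edge with $d_2(v,w)<W(vw)$ cannot be the birth edge of a positive-length bar, because the cycle it creates is filled at the same filtration value by the triangles spanned along the minimal-weight path; only then does every bar in $\mathsf{bcd}_1(d_2)$ have birth value $d_2(v,w)=W(vw)=d_3(v,w)=d_1(v,w)$. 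This step is missing, and it is precisely what makes the equality of the $a$'s non-obvious.

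Second, the death-time comparison: your characterization ``the death time of the class carried by $\sigma$ equals the smallest $\epsilon$ such that $[\sigma]$ becomes a boundary'' is not the persistence death. Under the elder rule the bar paired with $\sigma$ dies when the newly created class falls into the image of classes born strictly earlier, not necessarily when it becomes null-homologous, and the set of ``earlier classes'' differs across the three filtrations because the other pairs change their entry times (and ties can occur). The algorithm-independence of the pairing that you invoke in your ``main obstacle'' paragraph only settles well-definedness within one filtration; it does not give the cross-filtration inequality. What is needed is an argument that a death witness at scale $\epsilon$ in $VR(G,d_1)$ transfers along the inclusions $VR(G,d_1)_\epsilon \subseteq VR(G,d_3)_\epsilon \subseteq VR(G,d_2)_\epsilon$ while the interfering older classes remain older, which is exactly the bookkeeping carried out in \citep{Heo2024}. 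Citing that reference is legitimate --- the paper does the same --- but it means your sketch, as written, does not supply a self-contained proof of $b \le c \le d$.
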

The above properties lead to a more convenient and solid explanation for the process of TDA.
By Proposition~\ref{prop1} and Theorem~\ref{maintheorem}, we can simultaneously analyze the elements of the persistence barcode with the choice of distances. We can group the elements by the injections $\varphi_{2,3}$ and $\varphi_{3,1}$, which can be found by the birth edge.
The elements in the persistence diagrams can be categorized into three main types, as illustrated in 
Figure~\ref{fig:pdanalysis_combined}. 

\begin{figure}[!ht]
\centering
\includegraphics[width=0.3\textwidth]{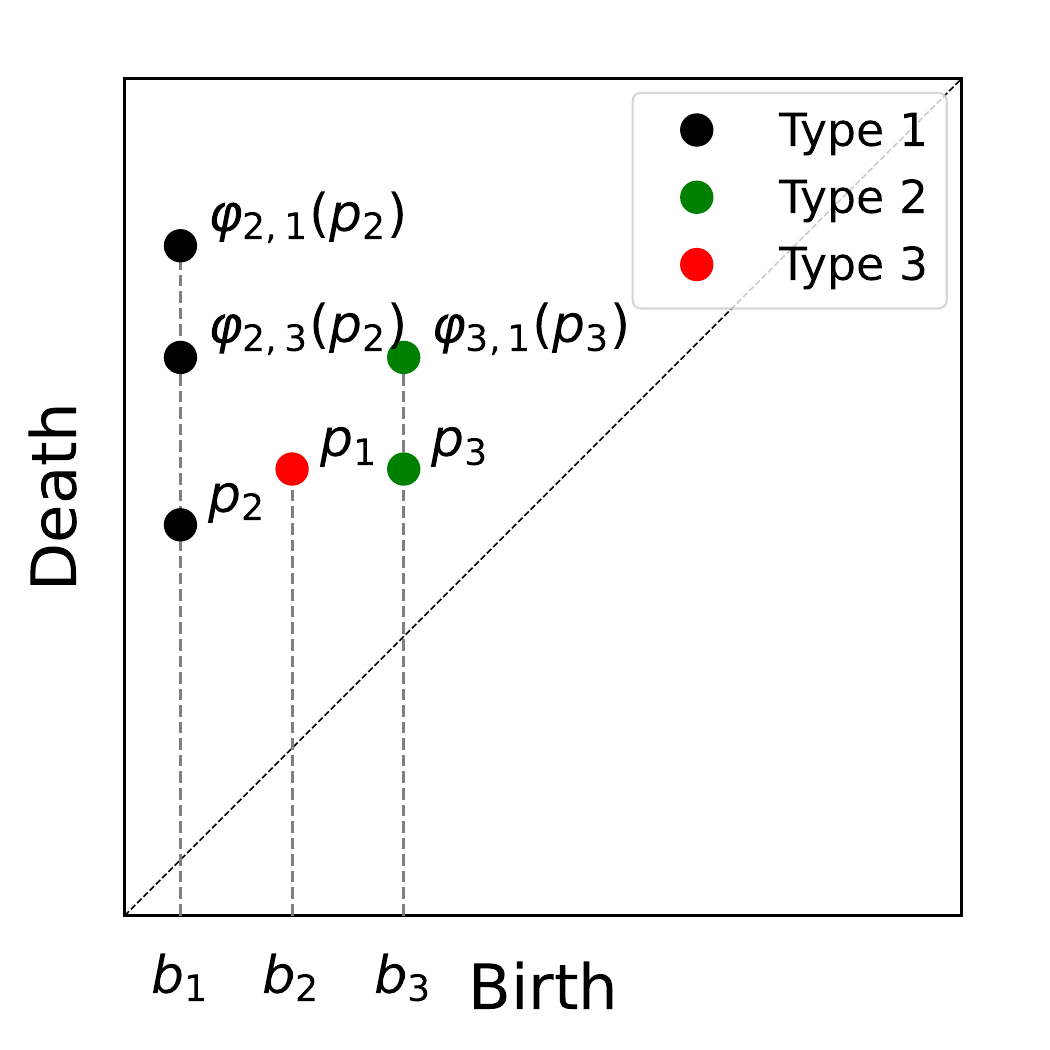}
\caption{Three types of persistence barcodes.}
\label{fig:pdanalysis_combined}
\end{figure}
Figure~\ref{fig:pdanalysis_combined} shows the overlap of persistence diagrams from each definition, aligned to the same scale. The overlapping patterns can be classified into three cases (Type 1 to Type 3): Type 1 (black) indicates that when an element from the $d_2$ definition appears, two corresponding elements from the $d_3$ and $d_1$ definitions also appear. Type 2 (green) indicates that when an element from the $d_3$ definition appears, one corresponding element from the $d_1$ definition is also present. Type 3 (red) represents cases with only a single element from the $d_1$ definition. Note that all elements in each case are aligned to the same birth value.

Figure~\ref{fig:pdanalysis} shows the separation of all elements from Figure~\ref{fig:pdanalysis_combined} into individual persistence diagrams generated by the definitions of $d_1, d_2$, and $d_3$. The diagrams on the left, middle, and right correspond to those generated by the definitions of $d_2$, $d_3$, and $d_1$, respectively. In the left diagram, there is one element (black) induced by $d_2$, which in turn induces the black-marked elements in the middle and right diagrams. 
The element induced by $d_3$ (green) in the middle diagram induces the green-marked element in the right diagram. 
In contrast, elements induced solely by $d_1$ appear only in the $d_1$ persistence diagram.

\begin{figure}[!ht]
\centering
\includegraphics[width=0.8\textwidth]{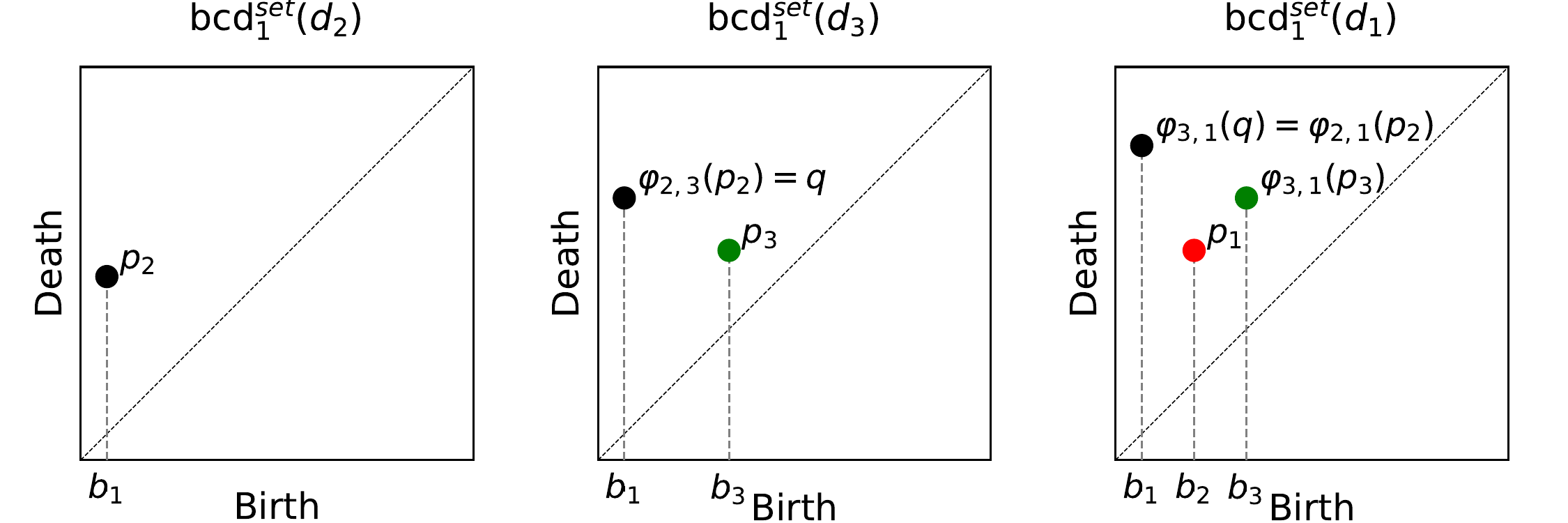}
\caption{Persistence barcodes classified based on Theorem~\ref{maintheorem} and their appearance in $\mathsf{bcd}^{set}_1(d_2)$, $\mathsf{bcd}^{set}_1(d_3)$, and $\mathsf{bcd}^{set}_1(d_1)$.
}
\label{fig:pdanalysis}
\end{figure}
\subsection{Examples}
In this section, we use real music examples to validate the findings presented in the previous sections. The data for the Korean music pieces used in Figures~\ref{fig:heatmap} and \ref{fig:bcNpd}, as well as Table~\ref{tab:compare}, can be found in \cite{data1,data2}.

{\color{black}{Figure~\ref{fig:heatmap} shows the heatmap of the distance matrices corresponding to $d_2$ (left), $d_3$ (middle), and $d_1$ (right) for two original songs in {\it Junggwangjigok}
\footnote{{\it Junggwangjigok}  is a traditional Korean court music piece belonging to the ``Hyangak" (native Korean music) repertoire.
{\it Junggwangjigok}, also known as {\it Yeongsanhoesang}, is a suite of nine instrumental pieces often played at noble gatherings.}
:{\it Sangnyeongsan}
\footnote{{\it Sangnyeongsan}, a part of {\it Junggwangjigok}, is one of the most representative pieces in Korean court music, specifically from the {\it Yeongsanhoesang} suite, which is often performed in elegant and meditative settings. Characterized by its slow tempo and lyrical melody, {\it Sangnyeongsan} serves as the opening movement of the suite and sets a calm, reflective tone.}
(geomungo part) (Top),  and {\it Sanghyeondodeuri} (geomungo part) (Bottom). 
The color of each cell represents the value of its corresponding entry, with red cells indicating higher values and blue cells indicating lower values. 
The diagonal elements of the distance matrix are zero, as indicated by the blue cells, according to the definition of the pairwise distance between two points for all distances $d_1, d_2$, and $d_3$. Also, it is clear from the figure that $d_2(v,w) \leq d_3(v,w) \leq d_1(v,w)$. By the definitions of $d_1, d_2$, and $d_3$, the distance matrix corresponding to $d_2$ is a homogenized version of $d_1$. This means that while retaining the main features of the distance matrix, $d_2$ smooths out $d_1$ everywhere, resulting in more evenly distributed entries throughout the matrix. 

\begin{figure}[!ht]
\begin{center}
\includegraphics[width=1.0\textwidth]{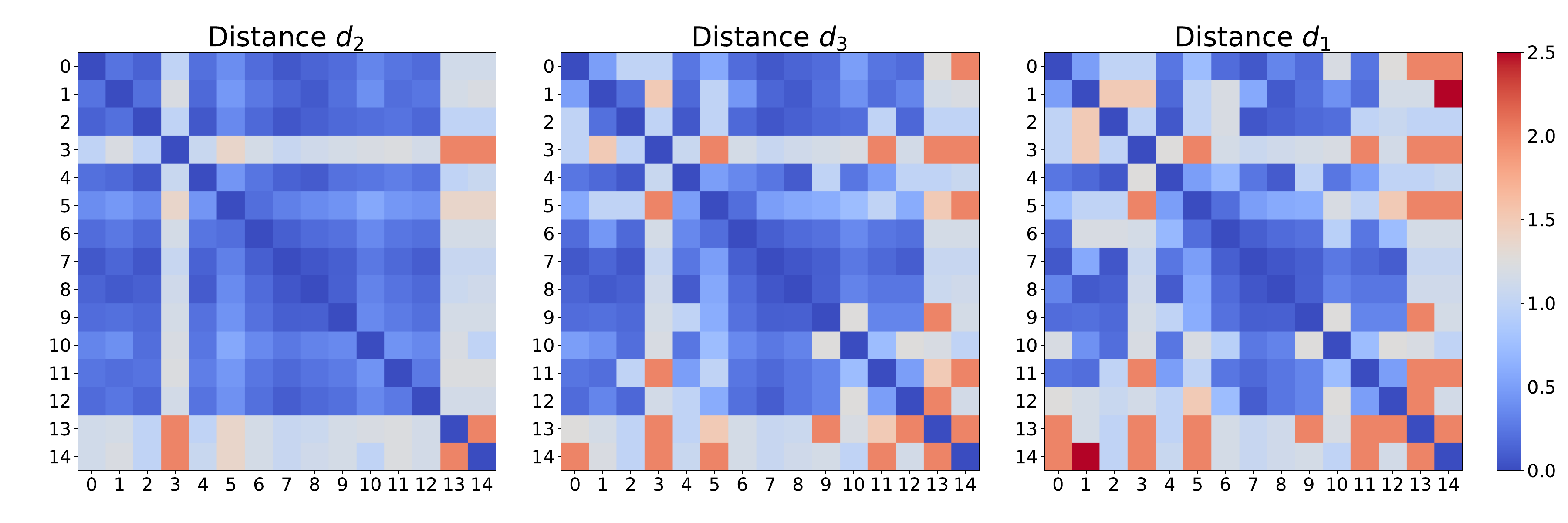}
\includegraphics[width=1.0\textwidth]{Figures/J-Sanghyeondodeuri_Geomungo_partdistmat_n.pdf}
\caption{Heatmap of distance matrices corresponding to $d_2$ (left), $d_3$ (middle) and $d_1$ (right). Top:  {\it Sangnyeongsan}.  Bottom: {\it Sanghyeondodeuri}. Both show the distance matrices played by the geomungo instrument. These figures clearly show that $d_2(v,w) \le d_3(v,w) \le d_1(v,w)$.}
\label{fig:heatmap}
\end{center}
\end{figure}

\begin{figure}[!ht]
\centering
    \includegraphics[width=4.8cm,height=4cm]{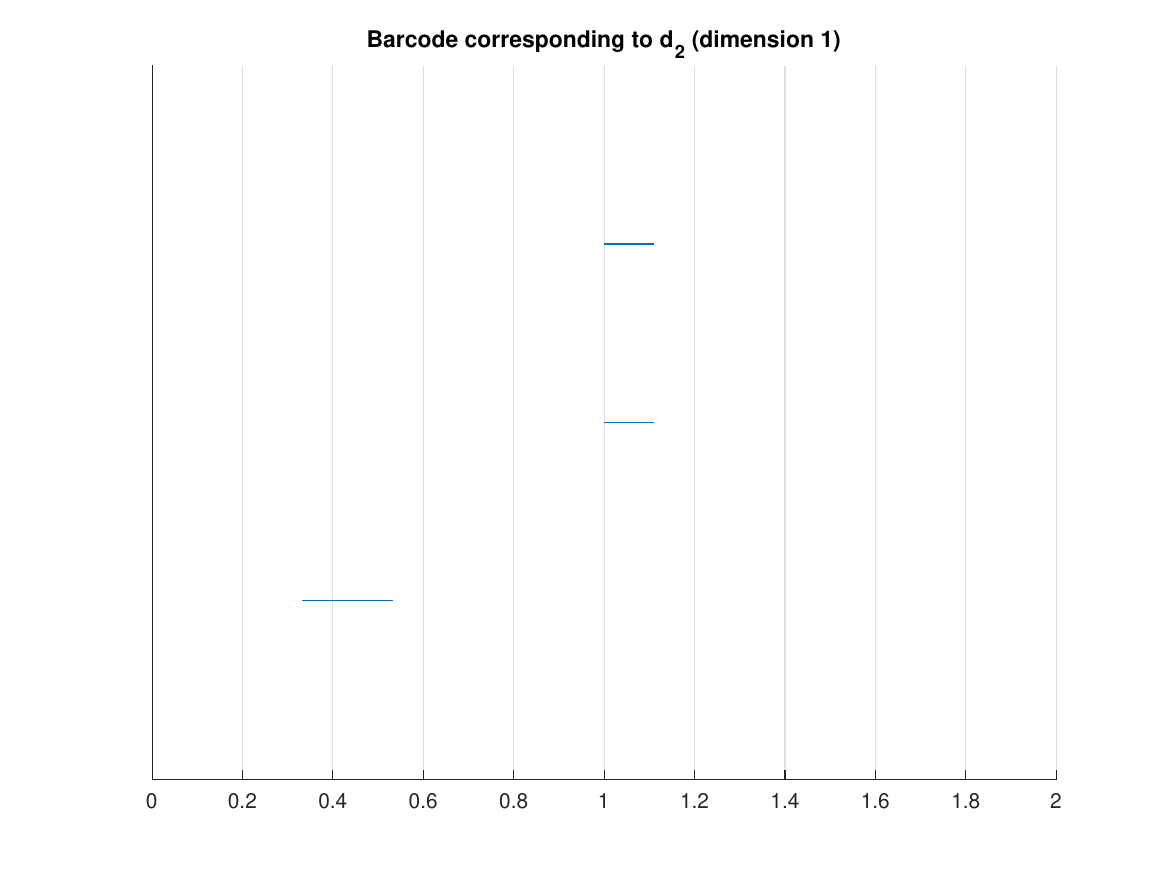}
    \includegraphics[width=4.8cm,height=4cm]{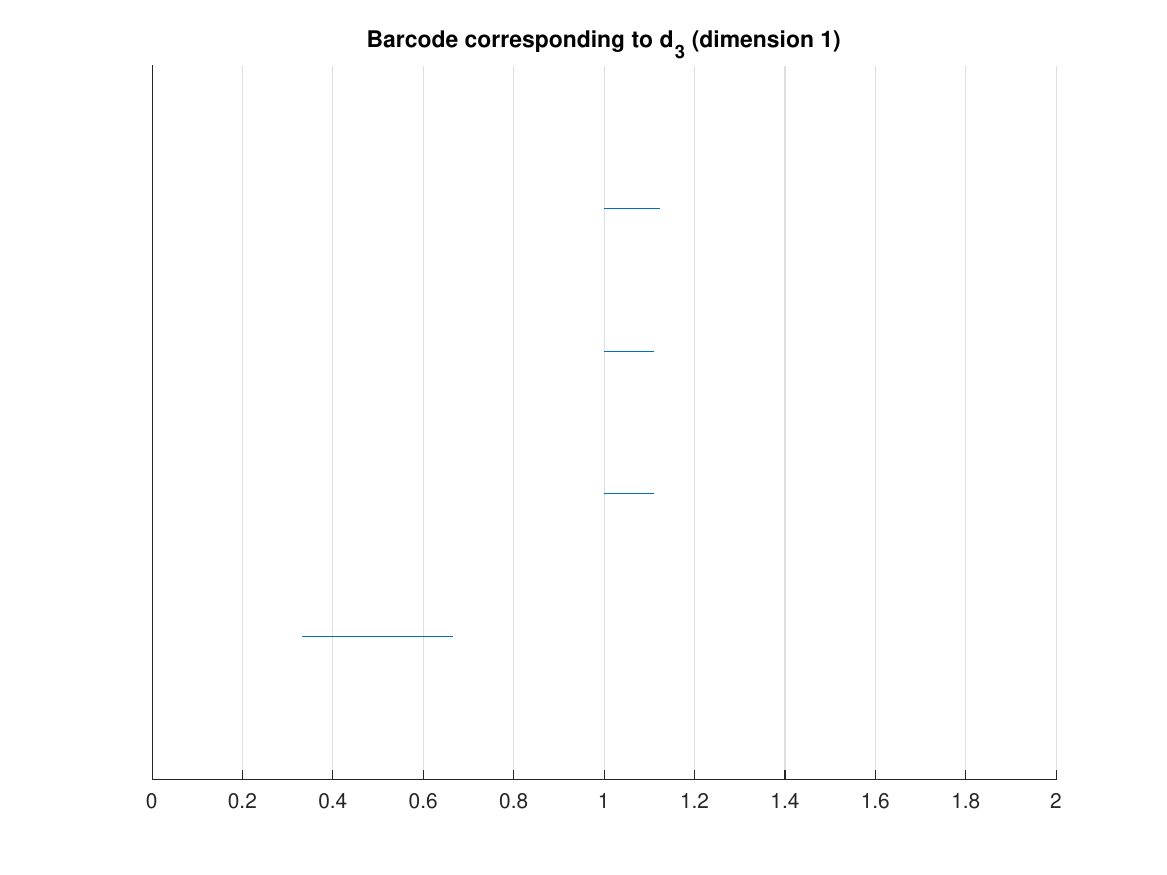}
    \includegraphics[width=4.8cm,height=4cm]{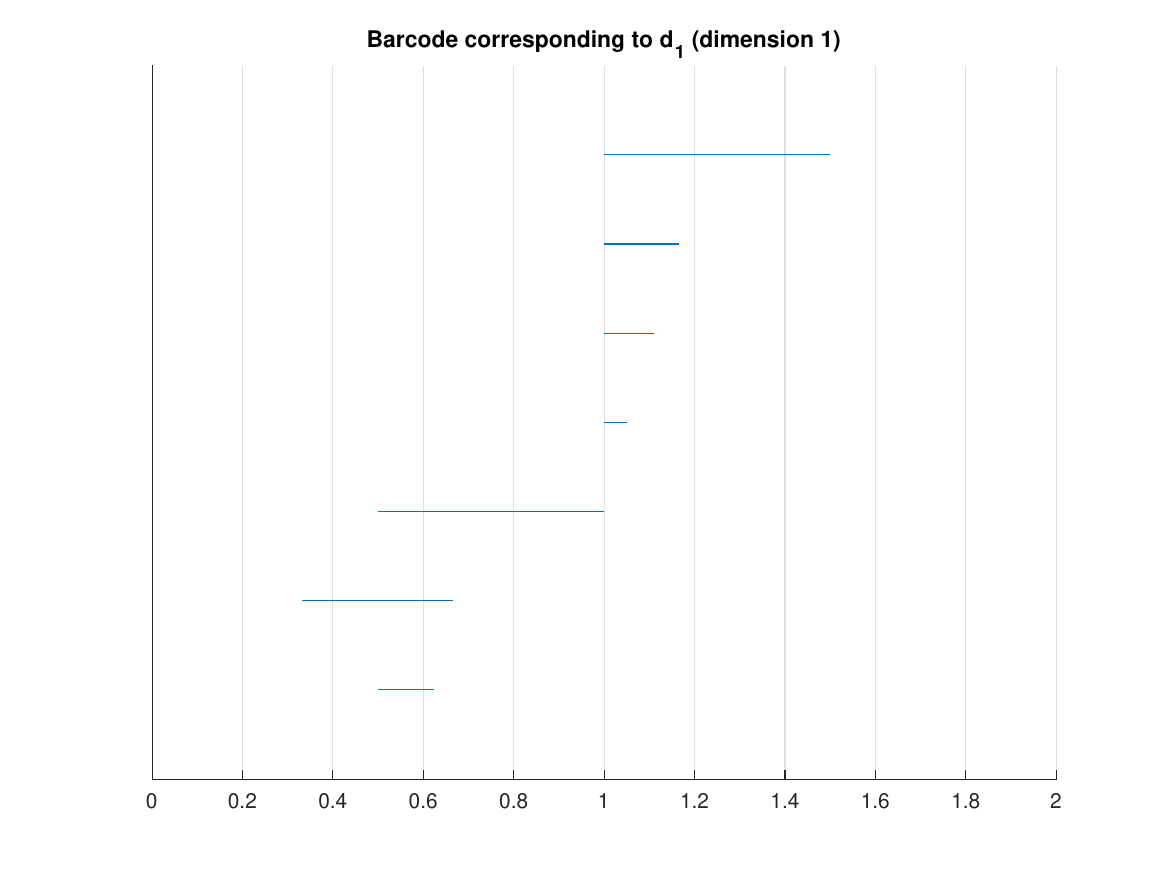}
    \includegraphics[width=10cm,height=8cm]{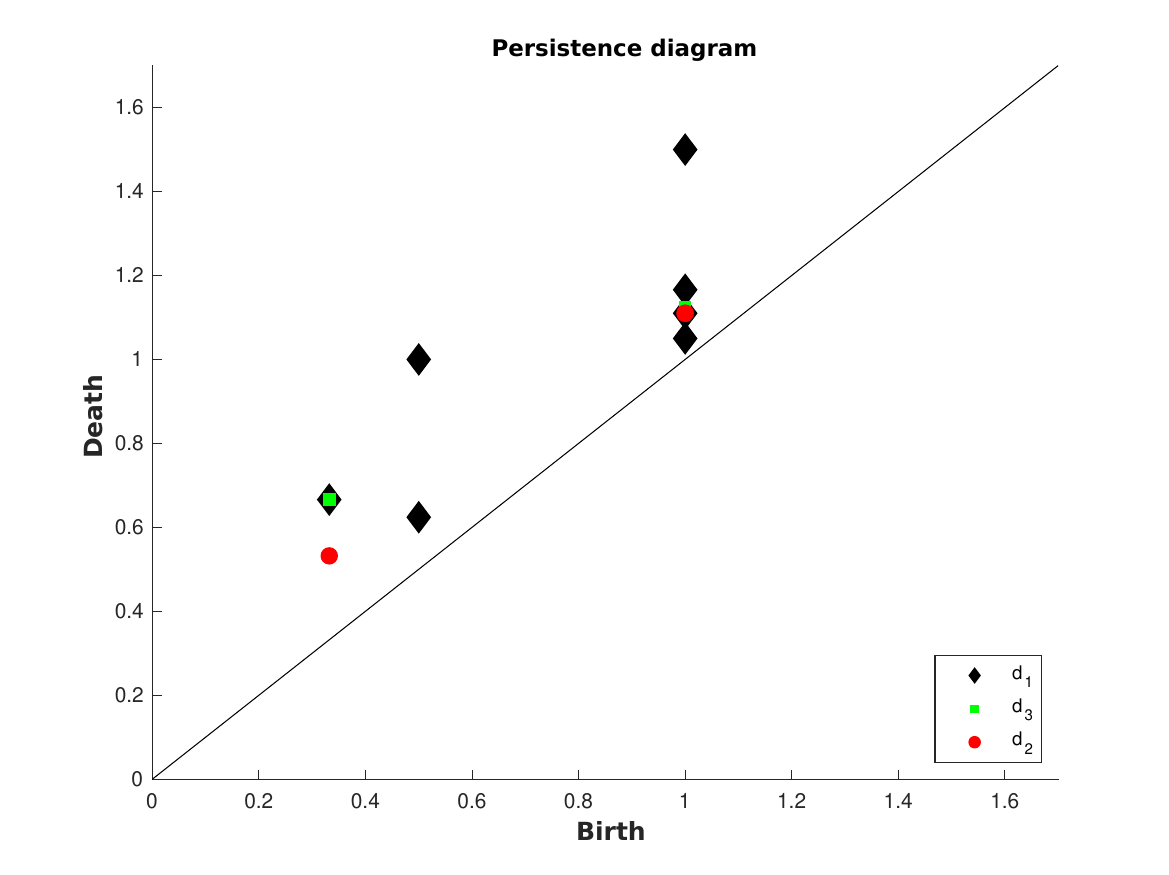}
\caption{Top: $H_1$ barcodes corresponding to $d_2$ (left), $d_3$ (middle), and $d_1$ (right). Bottom: $H_1$ persistence diagram. These results are for the music {\it Sanghyeondodeuri} (geomungo part). These figures clearly illustrate the inclusion relations among the persistence barcodes and show that the birth time is shared across all distance functions once a cycle is formed.}
\label{fig:bcNpd}
\end{figure}

The top figures of Figure~\ref{fig:bcNpd} show $H_1$ (one-dimensional homology) barcodes corresponding to $d_2$ (left), $d_3$ (middle), and $d_1$ (right) and the bottom figure shows $H_1$ persistence diagram that consists of the birth and death times of all persistence intervals in $d_2$ (red), $d_1$ (black), and $d_3$ (green) for {\it Sanghyeondodeuri} (geomungo part). 
The top figures show that there are seven $H_1$ barcodes for $d_1$, but only four $H_1$ barcodes for $d_3$ and three $H_1$ barcodes for $d_2$. Additionally, from the persistence diagram, we observe the three types of relationships between the birth and death times of all persistence intervals in $d_1, d_2$, and $d_3$, as demonstrated in the comprehensive persistence diagrams shown in Figure~\ref{fig:pdanalysis}. 
Note that this is a natural consequence of the injection relation $\mathcal{B}_2 \subseteq \mathcal{B}_3 \subseteq \mathcal{B}_1$. Also, the persistent diagram in Figure~\ref{fig:bcNpd} clearly shows that the birth time is shared across all distance functions once a cycle is formed.

The full comparison in terms of one-dimensional persistence barcodes and persistence diagrams by $d_1, d_2$, and $d_3$ for all the music samples considered is presented in Table~\ref{tab:compare}. We analyze twelve Korean songs and four variations of the simple Western song  {\it Twinkle Twinkle Little Star}. For Korean traditional music, we used  {\it Sangyeongsan}, {\it Jungnyeongsan}, {\it Seryeongsan}, {\it Garakdeori}, {\it Sanghyeondodeuri}, {\it Hahyeondodeuri}, {\it Yeombuldodeuri}, {\it Taryeong} and {\it Gunak}. For western music pieces, we used  {\it Twinkle Twinkle Little Star} and its variations. In the case of the Korean music pieces, either the geomungo or haegeum
\footnote{The haegeum is a traditional Korean string instrument, classified as a two-stringed vertical fiddle. It is played with a bow and held vertically on the performer's knee. Known for its distinctively expressive and somewhat nasal tone, the haegeum is highly versatile and used in various genres, from court and folk music to modern fusion styles.}
instrumental parts were used. The music titles in Table~\ref{tab:compare} that begin with ``J-" indicate that the pieces are part of {\it Junggwangjigok}.

For each song, the birth and death simplices, as well as the cycles with their corresponding persistence intervals, are provided for each distance $d_1, d_2$, and $d_3$. The cycles are sorted and enumerated by their birth times in a manner such that cycles that persist only in $d_1$ (or in both $d_1$ and $d_3$ but die in $d_2$) are placed later in the list. Note that the number of barcodes, i.e., the number of cycles, for $d_1$ is always greater than or equal to that for $d_3$, which is greater than or equal to that for $d_2$.  For each barcode that persists with both $d_1$ and $d_3$, or with all distances $d_1, d_2$, and $d_3$, the corresponding persistence interval with $d_1$ is longer than or equal to that with $d_3$, which is longer than or equal to that with $d_2$. Moreover, for each barcode, the birth simplices are the same for all the distances, while the death simplices are not necessarily the same. The cycles for each barcode can be, but are not always, the same for $d_1, d_2$, and $d_3$, even if the persistence intervals are the same, although the share a few common nodes. 
Tables~\ref{tab:x},~\ref{tab:y}, and~\ref{tab:z} present the same data as Table~\ref{tab:compare}, but are provided separately to enhance readability with larger font sizes.

\begin{table}[!ht]
\begin{center}
\resizebox{\textwidth}{!}{
\begin{tabular}{lclllllllll}
\hline\Tstrut\Bstrut
\multirow{2}{*}{\textbf{No}} & \multirow{2}{*}{\textbf{Song title}}& \multicolumn{3}{c} {$\mathbf{d_1} $} & \multicolumn{3}{c}{$\mathbf{d_3}$ } & \multicolumn{3}{c}{$\mathbf{d_2}$ } \\  
\cline{3-11}\Tstrut\Bstrut
 &   & \textbf{Birth}& \textbf{Death}& \textbf{Cycle}& \textbf{Birth}& \textbf{Death}& \textbf{Cycle}& \textbf{Birth}& \textbf{Death}& \textbf{Cycle}               \\
\hline\Tstrut
\multirow{2}{*}{1}  & \multirow{2}{*}{\shortstack{J-Sangnyeongsan \\ 
(geomungo part)} }   & (4, 8)   & (2, 7, 8)    & $C_1$ [1/11, 1/8]: [2, 4, 7, 8]  & (4, 8)   & (2, 7, 8)    & $C_1$ [1/11, 1/8]: [2, 4, 7, 8]  & (4, 8)   & (2, 7, 8)    & $C_1$ [1/11, 1/8]: [2, 4, 7, 8]  \\
   &                                                                                         & (1, 11)  & (6, 8, 11)   & $C_2$ [1/5, 1/4]: [0, 1, 6, 7, 8, 11]                                &          &              &                                                               &          &              &                                                               \Bstrut               \\
\hline \Tstrut 
\multirow{4}{*}{2}  & \multirow{4}{*}{\shortstack{J-Jungnyeongsan \\ (geomungo part)}}              & (4, 10)  & (2, 6, 10)   & $C_1$ [1/7, 5/28]: [2, 4, 6, 10]                                     & (4, 10)  & (2, 6, 10)   & $C_1$ [1/7, 5/28]: [2, 4, 6, 10]                         & (4, 10)  & (4, 6, 10)   & $C_1$ [1/7, 9/56]: [2, 4, 6, 10]                         \\
   &                                                                                         & (13, 14) & (2, 13, 14)  & $C_2$ [1, 6/5]: [2, 3, 6, 8, 13, 14]                                 & (13, 14) & (2, 13, 14)  & $C_2$ [1, 6/5]: [2, 3, 6, 8, 13, 14]                     & (13, 14) & (2, 13, 14)  & $C_2$ [1, 23/20]: [2, 3, 6, 8, 13, 14]                   \\
   &                                                                                         & (15, 16) & (11, 15, 16) & $C_3$ [1, 4/3]: [2, 4, 6, 10, 11, 15, 16]                            & (15, 16) & (10, 15, 16) & $C_3$ [1, 4/3]: [10, 11, 15, 16]                         & (15, 16) & (10, 15, 16) & $C_3$ [1, 4/3]: [10, 11, 15, 16]                         \\
   &                                                                                         & (19, 20) & (3, 18, 19)  & $C_4$ [1, 2]: [3, 17, 18, 19, 20]                                    & (19, 20) & (3, 18, 19)  & $C_4$ [1, 2]: [3, 17, 18, 19, 20]                        & (19, 20) & (3, 18, 19)  & $C_4$ [1, 2]: [3, 17, 18, 19, 20]                        \Bstrut               \\
\hline \Tstrut 
\multirow{3}{*}{3}  & \multirow{3}{*}{\shortstack{J-Seryeongsan \\ (geomungo part)}}                & (3, 4)   & (0, 3, 4)    & $C_1$ [1, 14/13]: [0, 2, 3, 4]                                       & (3, 4)   & (0, 3, 4)    & $C_1$ [1, 14/13]: [0, 2, 3, 4]                           & (3, 4)   & (0, 3, 4)    & $C_1$ [1, 14/13]: [0, 2, 3, 4]                           \\
   &                                                                                         & (17, 18) & (2, 17, 18)  & $C_2$ [1, 6/5]: [0, 2, 5, 17, 18]                                    & (17, 18) & (0, 17, 18)  & $C_2$ [1, 7/6]: [0, 5, 17, 18]                           & (17, 18) & (0, 17, 18)  & $C_2$ [1, 7/6]: [0, 5, 17, 18]                           \\
   &                                                                                         & (10, 11) & (2, 9, 10)   & $C_3$ [1, 2]: [0, 2, 5, 7, 8, 9, 10, 11]                             & (10, 11) & (2, 9, 10)   & $C_3$ [1, 2]: [0, 2, 5, 7, 8, 9, 10, 11]                 & (10, 11) & (2, 9, 10)   & $C_3$ [1, 2]: [0, 2, 5, 7, 8, 9, 10, 11]                \Bstrut               \\
\hline \Tstrut  
\multirow{5}{*}{4}  & \multirow{5}{*}{\shortstack{J-Garakdeori \\ (geomungo part)}}                 & (14, 17) & (0, 7, 17)   & $C_1$ [1/2, 1]: [0, 5, 6, 7, 14, 17]                                 & (14, 17) & (0, 7, 17)   & $C_1$ [1/2, 1]: [0, 5, 6, 7, 14, 17]                     & (14, 17) & (0, 7, 17)   & $C_1$ [1/2, 1]: [0, 5, 6, 7, 14, 17]                     \\
   &                                                                                         & (5, 6)   & (2, 6, 10)   & $C_2$ [1/2, 10/9]: [0, 2, 5, 6, 10, 11, 15, 16]                      & (5, 6)   & (5, 7, 14)   & $C_2$ [1/2, 1]: [0, 2, 5, 6, 10, 11, 14, 15, 16]         & (5, 6)   & (5, 7, 14)   & $C_2$ [1/2, 1]: [0, 2, 5, 6, 10, 11, 14, 15, 16]         \\
   &                                                                                         & (16, 19) & (11, 16, 19) & $C_3$ [1, 4/3]: [0, 2, 6, 11, 14, 16, 19]                            & (16, 19) & (10, 16, 19) & $C_3$ [1, 4/3]: [2, 6, 10, 14, 16, 19]                   & (16, 19) & (10, 16, 19) & $C_3$ [1, 4/3]: [2, 6, 10, 14, 16, 19]                   \\
   &                                                                                         & (3, 4)   & (0, 3, 4)    & $C_4$ [1, 2]: [0, 2, 3, 4]                                           & (3, 4)   & (3, 4, 5)    & $C_4$ [1, 29/18]: [0, 2, 3, 4]                           & (3, 4)   & (3, 4, 16)   & $C_4$ [1, 19/12]: [0, 2, 3, 4, 16]                       \\
   &                                                                                         & (9, 10)  & (2, 8, 9)    & $C_5$ [1, 2]: [0, 6, 7, 8, 9, 10]                                    & (9, 10)  & (8, 9, 17)   & $C_5$ [1, 11/6]: [0, 6, 7, 8, 9, 10, 17]                 & (9, 10)  & (8, 9, 17)   & $C_5$ [1, 11/6]: [0, 6, 7, 8, 9, 10]                      \Bstrut               \\
\hline \Tstrut 
\multirow{7}{*}{5}  & \multirow{7}{*}{\shortstack{J-Sanghyeondodeuri \\ (geomungo part)}}           & (12, 13) & (8, 12, 13)  & $C_1$ [1/3, 2/3]: [8, 11, 12, 13]                                    & (12, 13) & (8, 11, 12)  & $C_1$ [1/3, 2/3]: [1, 8, 11, 12, 13]                     & (12, 13) & (1, 12, 13)  & $C_1$ [1/3, 8/15]: [1, 8, 11, 12, 13]                    \\
   &                                                                                         & (33, 34) & (4, 33, 34)  & $C_2$ [1, 10/9]: [4, 5, 33, 34]                                      & (33, 34) & (4, 33, 34)  & $C_2$ [1, 10/9]: [4, 5, 33, 34]                          & (33, 34) & (4, 33, 34)  & $C_2$ [1, 10/9]: [4, 5, 33, 34]                          \\
   &                                                                                         & (24, 25) & (10, 24, 25) & $C_3$ [1, 7/6]: [1, 8, 9, 10, 24, 25]                                & (24, 25) & (8, 24, 25)  & $C_3$ [1, 10/9]: [8, 9, 24, 25]                          & (24, 25) & (8, 24, 25)  & $C_3$ [1, 10/9]: [8, 9, 24, 25]                          \\
   &                                                                                         & (2, 19)  & (2, 18, 19)  & $C_4$ [1, 3/2]: [0, 1, 2, 8, 9, 10, 17, 18, 19]                      & (2, 19)  & (1, 17, 19)  & $C_4$ [1, 9/8]: [1, 2, 17, 18, 19]                       &          &              &                                                               \\
   &                                                                                         & (8, 14)  & (4, 5, 14)   & $C_5$ [1/2, 5/8]: [1, 4, 5, 8, 14]                                   &          &              &                                                               &          &              &                                                               \\
   &                                                                                         & (3, 8)   & (3, 5, 8)    & $C_6$ [1/2, 1]: [3, 4, 5, 8]                                         &          &              &                                                               &          &              &                                                               \\
   &                                                                                         & (2, 3)   & (2, 3, 8)    & $C_7$ [1, 21/20]: [1, 2, 3, 8]                                       &          &              &                                                               &          &              &                                                               \Bstrut               \\
\hline \Tstrut 
\multirow{11}{*}{6}  & \multirow{11}{*}{\shortstack{J-Hahyeondodeuri \\ geomungo part)}}             & (14, 26) & (7, 14, 26)  & $C_1$ [1/2, 8/15]: [1, 4, 7, 14, 26]                                 & (14, 26) & (7, 14, 26)  & $C_1$ [1/2, 8/15]: [1, 4, 7, 14, 26]                     & (14, 26) & (7, 14, 26)  & $C_1$ [1/2, 8/15]: [1, 4, 7, 14, 26]                     \\
   &                                                                                         & (15, 25) & (15, 25, 33) & $C_2$ [1/2, 3/4]: [12, 15, 25, 33]                                   & (15, 25) & (12, 15, 33) & $C_2$ [1/2, 3/4]: [12, 15, 25, 33]                       & (15, 25) & (12, 15, 33) & $C_2$ [1/2, 3/4]: [12, 15, 25, 33]                       \\
   &                                                                                         & (12, 27) & (8, 10, 27)  & $C_3$ [1, 7/6]: [4, 8, 10, 12, 27]                                   & (12, 27) & (8, 10, 27)  & $C_3$ [1, 7/6]: [4, 8, 10, 12, 27]                       & (12, 27) & (8, 10, 27)  & $C_3$ [1, 7/6]: [4, 8, 10, 12, 27]                       \\
   &                                                                                         & (6, 7)   & (6, 7, 8)    & $C_4$ [1, 4/3]: [4, 5, 6, 7, 8]                                      & (6, 7)   & (4, 6, 7)    & $C_4$ [1, 6/5]: [4, 5, 6, 7]                             & (6, 7)   & (4, 6, 7)    & $C_4$ [1, 6/5]: [4, 5, 6, 7]                             \\
   &                                                                                         & (15, 36) & (4, 33, 36)  & $C_5$ [1, 3/2]: [0, 2, 4, 15, 36]                                    & (15, 36) & (1, 15, 36)  & $C_5$ [1, 3/2]: [0, 1, 2, 15, 36]                        & (15, 36) & (0, 33, 36)  & $C_5$ [1, 83/66]: [0, 1, 15, 33, 36]                     \\
   &                                                                                         & (29, 30) & (12, 29, 30) & $C_6$ [1, 3/2]: [11, 12, 25, 29, 30]                                 & (29, 30) & (12, 29, 30) & $C_6$ [1, 3/2]: [11, 12, 25, 29, 30]                     & (29, 30) & (12, 29, 30) & $C_6$ [1, 3/2]: [11, 12, 25, 29, 30]                     \\
   &                                                                                         & (12, 19) & (4, 12, 19)  & $C_7$ [1/2, 1]: [4, 8, 12, 19]                                       & (12, 19) & (12, 19, 25) & $C_7$ [1/2, 3/4]: [4, 12, 19, 25]                        &          &              &                                                               \\
   &                                                                                         & (2, 15)  & (7, 8, 15)   & $C_8$ [1, 4/3]: [1, 2, 7, 8, 15, 25]                                 & (2, 15)  & (2, 8, 15)   & $C_8$ [1, 7/6]: [1, 2, 8, 12, 15]                        &          &              &                                                               \\
   &                                                                                         & (15, 32) & (8, 15, 32)  & $C_9$ [1, 12/11]: [8, 12, 15, 32]                                    &          &              &                                                               &          &              &                                                               \\
   &                                                                                         & (10, 11) & (8, 10, 11)  & $C_{10}$ [1, 7/6]: [4, 8, 10, 11]                                      &          &              &                                                               &          &              &                                                               \\
   &                                                                                         & (16, 17) & (8, 16, 17)  & $C_{11}$ [1, 7/6]: [4, 8, 16, 17]                                      &          &              &                                                               &          &              &                                                               \Bstrut               \\
\hline \Tstrut 
\multirow{6}{*}{7}  & \multirow{6}{*}{\shortstack{J-Yeombuldodeuri \\ (geomungo part)}}             & (5, 6)   & (4, 6, 7)    & $C_1$ [1/5, 1/3]: [2, 3, 4, 5, 6, 7]                                 & (5, 6)   & (2, 5, 7)    & $C_1$ [1/5, 7/24]: [2, 5, 6, 7]                          & (5, 6)   & (4, 5, 6)    & $C_1$ [1/5, 1/4]: [2, 3, 4, 5, 6]                        \\
   &                                                                                         & (28, 29) & (3, 28, 29)  & $C_2$ [1, 5/4]: [0, 3, 25, 28, 29]                                   & (28, 29) & (3, 28, 29)  & $C_2$ [1, 5/4]: [0, 3, 25, 28, 29]                       & (28, 29) & (3, 28, 29)  & $C_2$ [1, 5/4]: [0, 3, 25, 28, 29]                       \\
   &                                                                                         & (5, 7)   & (4, 5, 7)    & $C_3$ [1/8, 1/4]: [3, 4, 5, 7]                                       & (5, 7)   & (4, 5, 7)    & $C_3$ [1/8, 1/4]: [3, 4, 5, 7]                           &          &              &                                                               \\
   &                                                                                         & (5, 10)  & (4, 10, 11)  & $C_4$ [1/2, 7/12]: [0, 4, 5, 10, 11]                                 & (5, 10)  & (4, 10, 11)  & $C_4$ [1/2, 11/20]: [4, 5, 10, 11]                       &          &              &                                                               \\
   &                                                                                         & (3, 20)  & (6, 19, 20)  & $C_5$ [1/4, 1/2]: [2, 3, 4, 5, 6, 19, 20]                            &          &              &                                                               &          &              &                                                               \\
   &                                                                                         & (18, 19) & (4, 18, 19)  & $C_6$ [1, 26/25]: [3, 4, 18, 19]                                     &          &              &                                                               &          &              &                                                              \Bstrut               \\
\hline \Tstrut 
\multirow{6}{*}{8}  & \multirow{6}{*}{\shortstack{J-Taryeong \\ (geomungo part)}}                   & (2, 12)  & (2, 11, 12)  & $C_1$ [1/3, 10/21]: [1, 2, 10, 11, 12]                               & (2, 12)  & (2, 4, 11)   & $C_1$ [1/3, 10/21]: [1, 2, 4, 10, 11, 12]                & (2, 12)  & (2, 10, 12)  & $C_1$ [1/3, 389/1012]: [1, 2, 10, 12]                    \\
   &                                                                                         & (4, 7)   & (6, 10, 12)  & $C_2$ [1/3, 1/2]: [1, 2, 4, 6, 7, 10, 12]                            & (4, 7)   & (2, 4, 12)   & $C_2$ [1/3, 22/57]: [2, 4, 6, 7, 12]                     & (4, 7)   & (4, 10, 11)  & $C_2$ [1/3, 45/133]: [4, 7, 10, 11]                      \\
   &                                                                                         & (32, 33) & (6, 32, 33)  & $C_3$ [1, 21/20]: [1, 6, 7, 32, 33]                                  & (32, 33) & (6, 32, 33)  & $C_3$ [1, 21/20]: [1, 6, 7, 32, 33]                      & (32, 33) & (6, 32, 33)  & $C_3$ [1, 21/20]: [1, 6, 7, 32, 33]                      \\
   &                                                                                         & (27, 28) & (11, 27, 28) & $C_4$ [1, 47/35]: [0, 1, 3, 4, 11, 27, 28]                           & (27, 28) & (7, 27, 28)  & $C_4$ [1, 4/3]: [1, 3, 4, 7, 27, 28]                     & (27, 28) & (7, 27, 28)  & $C_4$ [1, 4/3]: [1, 3, 4, 7, 27, 28]                     \\
   &                                                                                         & (1, 2)   & (1, 6, 14)   & $C_5$ [1/4, 15/44]: [1, 2, 6, 14]                                    &          &              &                                                               &          &              &                                                               \\
   &                                                                                         & (22, 24) & (1, 22, 24)  & $C_6$ [1, 59/44]: [0, 1, 2, 6, 14, 22, 24]                           &          &              &                                                               &          &              &                                                              \Bstrut               \\
\hline \Tstrut 
\multirow{12}{*}{9}  & \multirow{12}{*}{\shortstack{J-Gunak \\ (geomungo part)}}                      & (0, 1)   & (1, 5, 6)    & $C_1$ [1/6, 19/90]: [0, 1, 5, 6]                                     & (0, 1)   & (1, 4, 6)    & $C_1$ [1/6, 19/90]: [0, 1, 5, 6]                         & (0, 1)   & (1, 4, 6)    & $C_1$ [1/6, 19/90]: [0, 1, 5, 6]                         \\
   &                                                                                         & (10, 11) & (4, 10, 11)  & $C_2$ [1/8, 19/88]: [3, 4, 10, 11]                                   & (10, 11) & (3, 10, 11)  & $C_2$ [1/8, 19/88]: [3, 4, 10, 11]                       & (10, 11) & (3, 10, 11)  & $C_2$ [1/8, 19/88]: [3, 4, 10, 11]                       \\
   &                                                                                         & (1, 2)   & (2, 10, 11)  & $C_3$ [1/7, 1/4]: [1, 2, 4, 5, 10, 11]                               & (1, 2)   & (2, 3, 5)    & $C_3$ [1/7, 17/70]: [1, 2, 3, 4, 5, 11]                  & (1, 2)   & (2, 3, 5)    & $C_3$ [1/7, 17/70]: [1, 2, 3, 4, 5, 11]                  \\
   &                                                                                         & (11, 16) & (3, 11, 16)  & $C_4$ [1/3, 14/33]: [0, 1, 3, 11, 16]                                & (11, 16) & (3, 11, 16)  & $C_4$ [1/3, 14/33]: [0, 1, 3, 11, 16]                    & (11, 16) & (6, 11, 16)  & $C_4$ [1/3, 2209/5544]: [0, 1, 6, 11, 16]                \\
   &                                                                                         & (16, 24) & (11, 16, 24) & $C_5$ [1/3, 2/3]: [4, 11, 16, 17, 24]                                & (16, 24) & (6, 9, 16)   & $C_5$ [1/3, 11/18]: [0, 1, 4, 5, 6, 9, 16, 17, 24]       & (16, 24) & (9, 16, 24)  & $C_5$ [1/3, 8/15]: [0, 4, 5, 9, 16, 17, 24]              \\
   &                                                                                         & (19, 31) & (11, 19, 31) & $C_6$ [1, 71/55]: [3, 7, 8, 19, 31]                                  & (19, 31) & (11, 19, 31) & $C_6$ [1, 71/55]: [3, 4, 8, 11, 19, 31]                  & (19, 31) & (8, 19, 31)  & $C_6$ [1, 139/120]: [0, 1, 3, 8, 19, 31]                 \\
   &                                                                                         & (32, 33) & (29, 32, 33) & $C_7$ [1, 23/14]: [17, 19, 29, 30, 32, 33]                           & (32, 33) & (17, 32, 33) & $C_7$ [1, 3/2]: [17, 19, 29, 30, 32, 33]                 & (32, 33) & (17, 32, 33) & $C_7$ [1, 3/2]: [17, 19, 29, 30, 32, 33]                 \\
   &                                                                                         & (25, 26) & (24, 25, 26) & $C_8$ [1, 17/10]: [0, 9, 16, 18, 24, 25, 26]                         & (25, 26) & (9, 25, 26)  & $C_8$ [1, 23/15]: [0, 1, 9, 16, 18, 25, 26]              & (25, 26) & (9, 25, 26)  & $C_8$ [1, 23/15]: [0, 9, 16, 18, 25, 26]                 \\
   &                                                                                         & (7, 19)  & (9, 11, 19)  & $C_9$ [1, 11/10]: [2, 3, 7, 9, 11, 19]                               & (6, 7)   & (6, 7, 11)   & $C_9$ [1, 12/11]: [3, 6, 7, 11]                          &          &              &                                                               \\
   &                                                                                         & (6, 7)   & (5, 6, 7)    & $C_{10}$ [1, 10/9]: [3, 5, 6, 7]                                       & (7, 19)  & (7, 11, 19)  & $C_{10}$ [1, 11/10]: [3, 4, 7, 11, 19]                     &          &              &                                                               \\
   &                                                                                         & (10, 15) & (5, 10, 15)  & $C_{11}$ [1/6, 11/56]: [4, 5, 10, 15]                                  &          &              &                                                               &          &              &                                                               \\
   &                                                                                         & (2, 8)   & (2, 3, 8)    & $C_{12}$ [1/3, 11/28]: [1, 2, 3, 8]                                    &          &              &                                                               &          &              &                                                              \Bstrut               \\
\hline \Tstrut 
\multirow{10}{*}{10} & \multirow{10}{*}{\shortstack{J-Sanghyeondodeuri \\ (haegeum part)}} & (1, 17)  & (4, 10, 17)  & $C_1$ [1/4, 1/3]: [1, 4, 10, 17]                                     & (1, 17)  & (4, 6, 17)   & $C_1$ [1/4, 23/76]: [1, 4, 6, 10, 17]                    & (1, 17)  & (4, 6, 17)   & $C_1$ [1/4, 23/76]: [1, 4, 6, 10, 17]                    \\
   &                                                                                         & (1, 11)  & (1, 11, 13)  & $C_2$ [1/5, 1/2]: [1, 4, 10, 11, 13]                                 & (1, 11)  & (4, 11, 13)  & $C_2$ [1/5, 13/42]: [1, 4, 6, 10, 11, 13]                & (1, 11)  & (4, 11, 13)  & $C_2$ [1/5, 13/42]: [1, 4, 6, 10, 11, 13]                \\
   &                                                                                         & (3, 7)   & (3, 4, 7)    & $C_3$ [1/3, 1/2]: [1, 3, 4, 7, 10, 11, 12]                           & (3, 7)   & (3, 4, 7)    & $C_3$ [1/3, 1/2]: [1, 3, 4, 6, 7, 11, 12]                & (3, 7)   & (1, 7, 12)   & $C_3$ [1/3, 1/2]: [1, 3, 4, 7, 11, 12]                   \\
   &                                                                                         & (22, 23) & (1, 22, 23)  & $C_4$ [1, 6/5]: [1, 4, 10, 21, 22, 23]                               & (22, 23) & (1, 22, 23)  & $C_4$ [1, 6/5]: [1, 4, 6, 10, 21, 22, 23]                & (22, 23) & (1, 22, 23)  & $C_4$ [1, 6/5]: [1, 6, 10, 21, 22, 23]                   \\
   &                                                                                         & (16, 27) & (2, 16, 27)  & $C_5$ [1, 5/4]: [1, 2, 16, 27]                                       & (16, 27) & (1, 16, 27)  & $C_5$ [1, 5/4]: [1, 2, 16, 27]                           & (16, 27) & (1, 16, 27)  & $C_5$ [1, 5/4]: [1, 2, 16, 27]                           \\
   &                                                                                         & (8, 9)   & (1, 8, 9)    & $C_6$ [1, 3/2]: [1, 4, 6, 7, 8, 9, 10]                               & (8, 9)   & (8, 9, 12)   & $C_6$ [1, 29/20]: [1, 6, 7, 8, 9, 12]                    & (8, 9)   & (3, 8, 9)    & $C_6$ [1, 4/3]: [1, 3, 6, 7, 8, 9]                       \\
   &                                                                                         & (4, 15)  & (6, 10, 15)  & $C_7$ [1/3, 1/2]: [4, 6, 10, 15]                                     &          &              &                                                               &          &              &                                                               \\
   &                                                                                         & (0, 15)  & (0, 10, 15)  & $C_8$ [1/2, 8/15]: [0, 1, 4, 10, 15]                                 &          &              &                                                               &          &              &                                                               \\
   &                                                                                         & (15, 25) & (6, 15, 25)  & $C_9$ [1/2, 7/12]: [6, 13, 15, 25]                                   &          &              &                                                               &          &              &                                                               \\
   &                                                                                         & (5, 6)   & (5, 6, 13)   & $C_{10}$ [1, 8/7]: [4, 5, 6, 13]                                       &          &              &                                                               &          &              &                                                               \Bstrut               \\
\hline \Tstrut 
\multirow{21}{*}{11} & \multirow{21}{*}{\shortstack{J-Hahyeondodeuri \\ (haegeum part)}}   & (20, 21) & (20, 22, 23) & $C_1$ [1/2, 7/12]: [20, 21, 22, 23]                                  & (20, 21) & (20, 22, 23) & $C_1$ [1/2, 7/12]: [20, 21, 22, 23]                      & (20, 21) & (7, 20, 21)  & $C_1$ [1/2, 7/12]: [6, 7, 20, 21]                        \\
   &                                                                                         & (18, 21) & (12, 18, 21) & $C_2$ [1/2, 3/4]: [7, 12, 18, 21]                                    & (18, 21) & (6, 18, 21)  & $C_2$ [1/2, 13/22]: [6, 7, 18, 21]                       & (18, 21) & (7, 18, 21)  & $C_2$ [1/2, 7/12]: [6, 7, 18, 21]                        \\
   &                                                                                         & (7, 20)  & (6, 20, 21)  & $C_3$ [1/2, 13/22]: [6, 7, 20, 21, 22, 23]                           & (7, 20)  & (6, 20, 21)  & $C_3$ [1/2, 13/22]: [6, 7, 20, 21, 22, 23]               & (7, 20)  & (20, 22, 23) & $C_3$ [1/2, 7/12]: [6, 7, 20, 21, 22, 23]                \\
   &                                                                                         & (17, 23) & (10, 17, 23) & $C_4$ [1/2, 3/4]: [7, 8, 10, 17, 20, 23]                             & (17, 23) & (7, 20, 23)  & $C_4$ [1/2, 3/4]: [7, 8, 9, 17, 20, 23]                  & (17, 23) & (7, 20, 23)  & $C_4$ [1/2, 3/4]: [7, 8, 9, 17, 20, 23]                  \\
   &                                                                                         & (15, 16) & (13, 15, 21) & $C_5$ [1/2, 31/30]: [4, 7, 8, 9, 10, 13, 14, 15, 16, 20, 21, 22, 23] & (15, 16) & (13, 15, 21) & $C_5$ [1/2, 31/30]: [6, 7, 8, 9, 10, 13, 14, 15, 16]     & (15, 16) & (13, 15, 21) & $C_5$ [1/2, 31/30]: [6, 7, 8, 9, 10, 13, 14, 15, 16]     \\
   &                                                                                         & (24, 26) & (4, 24, 26)  & $C_6$ [1, 3/2]: [2, 3, 4, 24, 26]                                    & (24, 26) & (4, 24, 26)  & $C_6$ [1, 3/2]: [2, 3, 4, 24, 26]                        & (24, 26) & (3, 24, 26)  & $C_6$ [1, 4/3]: [3, 4, 24, 26]                           \\
   &                                                                                         & (24, 27) & (4, 24, 27)  & $C_7$ [1, 3/2]: [2, 3, 4, 24, 27]                                    & (24, 27) & (4, 24, 27)  & $C_7$ [1, 3/2]: [2, 3, 4, 24, 27]                        & (24, 27) & (3, 24, 27)  & $C_7$ [1, 4/3]: [3, 4, 24, 27]                           \\
   &                                                                                         & (32, 33) & (21, 32, 33) & $C_8$ [1, 4/3]: [4, 7, 21, 32, 33]                                   & (32, 33) & (21, 32, 33) & $C_8$ [1, 4/3]: [4, 7, 21, 32, 33]                       & (32, 33) & (21, 32, 33) & $C_8$ [1, 4/3]: [4, 6, 7, 21, 32, 33]                    \\
   &                                                                                         & (10, 30) & (1, 9, 30)   & $C_9$ [1, 3/2]: [1, 2, 9, 10, 30]                                    & (10, 30) & (1, 9, 30)   & $C_9$ [1, 3/2]: [1, 2, 9, 10, 30]                        & (10, 30) & (1, 9, 30)   & $C_9$ [1, 3/2]: [0, 1, 2, 9, 10, 30]                     \\
   &                                                                                         & (28, 29) & (12, 28, 29) & $C_{10}$ [1, 5/3]: [0, 2, 8, 9, 12, 28, 29]                            & (28, 29) & (12, 28, 29) & $C_{10}$ [1, 5/3]: [0, 2, 7, 8, 9, 12, 28, 29]             & (28, 29) & (12, 28, 29) & $C_{10}$ [1, 268/165]: [0, 2, 7, 8, 9, 12, 28, 29]         \\
   &                                                                                         & (6, 7)   & (6, 12, 21)  & $C_{11}$ [1/2, 1]: [6, 7, 12, 21]                                      & (6, 7)   & (10, 18, 21) & $C_{11}$ [1/2, 7/10]: [6, 7, 8, 9, 10, 12, 18, 21]         &          &              &                                                               \\
   &                                                                                         & (20, 35) & (6, 20, 35)  & $C_{12}$ [1/2, 13/22]: [6, 7, 20, 35]                                  &          &              &                                                               &          &              &                                                               \\
   &                                                                                         & (14, 24) & (14, 18, 24) & $C_{13}$ [1, 31/30]: [6, 9, 10, 14, 18, 24]                            &          &              &                                                               &          &              &                                                               \\
   &                                                                                         & (5, 6)   & (5, 6, 21)   & $C_{14}$ [1, 12/11]: [4, 5, 6, 21]                                     &          &              &                                                               &          &              &                                                               \\
   &                                                                                         & (6, 19)  & (6, 19, 21)  & $C_{15}$ [1, 12/11]: [4, 6, 19, 21]                                    &          &              &                                                               &          &              &                                                               \\
   &                                                                                         & (0, 13)  & (2, 8, 20)   & $C_{16}$ [1, 8/7]: [0, 4, 8, 9, 13, 17, 20]                            &          &              &                                                               &          &              &                                                               \\
   &                                                                                         & (16, 17) & (8, 16, 17)  & $C_{17}$ [1, 8/7]: [2, 4, 8, 9, 16, 17]                                &          &              &                                                               &          &              &                                                               \\
   &                                                                                         & (22, 33) & (18, 22, 33) & $C_{18}$ [1, 8/7]: [7, 18, 22, 33]                                     &          &              &                                                               &          &              &                                                               \\
   &                                                                                         & (11, 12) & (9, 11, 12)  & $C_{19}$ [1, 6/5]: [9, 10, 11, 12]                                     &          &              &                                                               &          &              &                                                               \\
   &                                                                                         & (28, 31) & (6, 21, 28)  & $C_{20}$ [1, 4/3]: [4, 6, 21, 28, 31]                                  &          &              &                                                               &          &              &                                                               \\
   &                                                                                         & (6, 34)  & (6, 21, 34)  & $C_{21}$ [1, 4/3]: [4, 6, 21, 34]                                      &          &              &                                                               &          &              &                                                               \Bstrut               \\
\hline \Tstrut 
\multirow{19}{*}{12} & \multirow{19}{*}{\shortstack{J-Yeombuldodeuri \\ (haegeum part)}}   & (0, 9)   & (8, 9, 16)   & $C_1$ [1/4, 1/2]: [0, 3, 9, 10, 14, 16]                              & (0, 9)   & (9, 10, 16)  & $C_1$ [1/4, 18/65]: [0, 9, 10, 14, 16]                   & (0, 9)   & (8, 10, 16)  & $C_1$ [1/4, 18/65]: [0, 3, 9, 10, 16]                    \\
   &                                                                                         & (5, 6)   & (2, 3, 5)    & $C_2$ [1/3, 1/2]: [0, 2, 3, 4, 5, 6]                                 & (5, 6)   & (2, 5, 18)   & $C_2$ [1/3, 13/30]: [2, 5, 6, 18]                        & (5, 6)   & (5, 6, 18)   & $C_2$ [1/3, 23/60]: [0, 5, 6, 18]                        \\
   &                                                                                         & (5, 18)  & (4, 5, 18)   & $C_3$ [1/5, 2/5]: [0, 3, 4, 5, 18]                                   & (5, 18)  & (3, 5, 18)   & $C_3$ [1/5, 2/5]: [0, 3, 4, 5, 18]                       & (5, 18)  & (0, 5, 6)    & $C_3$ [1/5, 2/5]: [0, 3, 4, 5, 18]                       \\
   &                                                                                         & (5, 10)  & (4, 10, 16)  & $C_4$ [1/3, 8/15]: [0, 4, 5, 10, 16]                                 & (5, 10)  & (3, 5, 10)   & $C_4$ [1/3, 2/5]: [3, 4, 5, 10]                          & (5, 10)  & (0, 5, 10)   & $C_4$ [1/3, 2/5]: [0, 4, 5, 10]                          \\
   &                                                                                         & (11, 13) & (10, 11, 13) & $C_5$ [1/2, 7/10]: [5, 9, 10, 11, 13]                                & (11, 13) & (11, 13, 18) & $C_5$ [1/2, 8/15]: [4, 5, 11, 13, 18]                    & (11, 13) & (11, 13, 18) & $C_5$ [1/2, 8/15]: [4, 5, 11, 13, 18]                    \\
   &                                                                                         & (19, 20) & (0, 19, 20)  & $C_6$ [1/2, 7/10]: [0, 18, 19, 20]                                   & (19, 20) & (0, 19, 20)  & $C_6$ [1/2, 7/10]: [0, 18, 19, 20]                       & (19, 20) & (18, 19, 20) & $C_6$ [1/2, 8/15]: [0, 18, 19, 20]                       \\
   &                                                                                         & (1, 12)  & (1, 9, 12)   & $C_7$ [1/2, 1]: [0, 1, 2, 9, 11, 12]                                 & (1, 12)  & (1, 9, 12)   & $C_7$ [1/2, 5/6]: [0, 1, 9, 11, 12]                      & (1, 12)  & (1, 2, 12)   & $C_7$ [1/2, 7/10]: [0, 1, 2, 9, 11, 12]                  \\
   &                                                                                         & (34, 35) & (9, 34, 35)  & $C_8$ [1/2, 5/6]: [9, 12, 34, 35]                                    & (34, 35) & (12, 34, 35) & $C_8$ [1/2, 7/10]: [0, 9, 12, 34, 35]                    & (34, 35) & (12, 34, 35) & $C_8$ [1/2, 7/10]: [0, 9, 12, 34, 35]                    \\
   &                                                                                         & (19, 39) & (5, 19, 39)  & $C_9$ [1, 4/3]: [0, 5, 8, 10, 19, 39]                                & (19, 39) & (14, 19, 39) & $C_9$ [1, 153/130]: [10, 14, 19, 39]                     & (19, 39) & (10, 19, 39) & $C_9$ [1, 267/260]: [0, 10, 19, 39]                      \\
   &                                                                                         & (37, 40) & (5, 37, 40)  & $C_{10}$ [1, 4/3]: [0, 5, 8, 10, 18, 37, 40]                           & (37, 40) & (5, 37, 40)  & $C_{10}$ [1, 4/3]: [0, 3, 5, 8, 10, 18, 37, 40]            & (37, 40) & (3, 37, 40)  & $C_{10}$ [1, 677/520]: [0, 3, 8, 10, 18, 37, 40]           \\
   &                                                                                         & (30, 31) & (0, 30, 31)  & $C_{11}$ [1, 3/2]: [0, 3, 10, 21, 24, 30, 31]                          & (30, 31) & (0, 30, 31)  & $C_{11}$ [1, 3/2]: [0, 3, 10, 21, 24, 30, 31]              & (30, 31) & (16, 30, 31) & $C_{11}$ [1, 59/40]: [0, 3, 10, 16, 21, 24, 30, 31]        \\
   &                                                                                         & (35, 41) & (0, 35, 41)  & $C_{12}$ [1, 6/5]: [0, 18, 35, 41]                                     & (35, 41) & (0, 35, 41)  & $C_{12}$ [1, 6/5]: [0, 18, 35, 41]                         &          &              &                                                               \\
   &                                                                                         & (25, 26) & (18, 25, 26) & $C_{13}$ [1, 6/5]: [0, 18, 25, 26]                                     & (25, 26) & (18, 25, 26) & $C_{13}$ [1, 6/5]: [0, 18, 25, 26]                         &          &              &                                                               \\
   &                                                                                         & (2, 14)  & (3, 14, 22)  & $C_{14}$ [1/3, 3/8]: [0, 2, 3, 14, 22]                                 &          &              &                                                               &          &              &                                                               \\
   &                                                                                         & (0, 24)  & (9, 10, 24)  & $C_{15}$ [1/2, 8/15]: [0, 9, 10, 24]                                   &          &              &                                                               &          &              &                                                               \\
   &                                                                                         & (9, 11)  & (3, 9, 11)   & $C_{16}$ [1/2, 7/10]: [2, 3, 9, 11]                                    &          &              &                                                               &          &              &                                                               \\
   &                                                                                         & (8, 28)  & (10, 28, 29) & $C_{17}$ [1, 14/13]: [8, 10, 28, 29]                                   &          &              &                                                               &          &              &                                                               \\
   &                                                                                         & (24, 40) & (5, 29, 40)  & $C_{18}$ [1, 6/5]: [5, 18, 24, 29, 40]                                 &          &              &                                                               &          &              &                                                               \\
   &                                                                                         & (24, 36) & (24, 33, 36) & $C_{19}$ [1, 6/5]: [14, 16, 24, 33, 36]                                &          &              &                                                               &          &              &                                                                \Bstrut               \\
\hline \Tstrut 
\multirow{4}{*}{13} & \multirow{4}{*}{\shortstack{Twinkle Twinkle Little Star \\ (theme)}}          & (8, 9)   & (5, 8, 9)    & $C_1$ [1/2, 5/8]: [3, 4, 5, 8, 9]                                    & (8, 9)   & (4, 8, 9)    & $C_1$ [1/2, 5/8]: [4, 5, 8, 9]                           & (8, 9)   & (4, 8, 9)    & $C_1$ [1/2, 5/8]: [4, 5, 8, 9]                           \\
   &                                                                                         & (5, 6)   & (5, 6, 7)    & $C_2$ [1/2, 3/4]: [0, 5, 6, 7]                                       & (5, 6)   & (0, 6, 7)    & $C_2$ [1/2, 3/4]: [0, 1, 4, 5, 6, 7]                     & (5, 6)   & (0, 6, 7)    & $C_2$ [1/2, 3/4]: [0, 4, 5, 6, 7]                        \\
   &                                                                                         & (11, 12) & (4, 11, 12)  & $C_3$ [1/2, 1]: [0, 1, 4, 5, 9, 10, 11, 12]                          & (11, 12) & (4, 11, 12)  & $C_3$ [1/2, 1]: [4, 5, 9, 10, 11, 12]                    & (11, 12) & (4, 11, 12)  & $C_3$ [1/2, 1]: [4, 5, 9, 10, 11, 12]                    \\
   &                                                                                         & (10, 13) & (5, 9, 13)   & $C_4$ [1/2, 1]: [5, 6, 9, 10, 13]                                    & (10, 13) & (5, 9, 13)   & $C_4$ [1/2, 1]: [5, 6, 9, 10, 13]                        & (10, 13) & (5, 9, 13)   & $C_4$ [1/2, 1]: [5, 6, 9, 10, 13]                        \Bstrut               \\
\hline \Tstrut
\multirow{6}{*}{14} & \multirow{6}{*}{\shortstack{Twinkle Twinkle Little Star \\ (variation 1)}}    & (1, 3)   & (1, 3, 14)   & $C_1$ [1/4, 5/12]: [0, 1, 3, 7, 14]                                  & (1, 3)   & (1, 3, 12)   & $C_1$ [1/4, 73/208]: [0, 1, 3, 12]                       & (1, 3)   & (1, 3, 4)    & $C_1$ [1/4, 15/52]: [0, 1, 3, 4, 12]                     \\
   &                                                                                         & (15, 16) & (11, 15, 16) & $C_2$ [1/4, 1/2]: [3, 7, 11, 12, 14, 15, 16]                         & (15, 16) & (11, 15, 16) & $C_2$ [1/4, 1/2]: [3, 4, 11, 12, 14, 15, 16]             & (15, 16) & (3, 15, 16)  & $C_2$ [1/4, 1/2]: [3, 4, 11, 12, 14, 15, 16]             \\
   &                                                                                         & (9, 12)  & (7, 12, 14)  & $C_3$ [1/6, 1/4]: [7, 9, 12, 14]                                     & (9, 12)  & (7, 9, 12)   & $C_3$ [1/6, 8/39]: [3, 7, 9, 12]                         &          &              &                                                               \\
   &                                                                                         & (7, 14)  & (8, 11, 14)  & $C_4$ [1/6, 5/16]: [3, 4, 7, 8, 12, 14]                              & (7, 14)  & (7, 12, 14)  & $C_4$ [1/6, 8/39]: [3, 7, 12, 14]                        &          &              &                                                               \\
   &                                                                                         & (11, 12) & (4, 11, 12)  & $C_5$ [1/4, 15/52]: [3, 4, 11, 12]                                   &          &              &                                                               &          &              &                                                               \\
   &                                                                                         & (13, 14) & (12, 13, 14) & $C_6$ [1/4, 5/16]: [4, 12, 13, 14]                                   &          &              &                                                               &          &              &                                                              \Bstrut               \\
\hline \Tstrut
15 & Twinkle Twinkle Little Star (variation 3)                                               & (20, 21) & (19, 20, 21) & $C_1$ [1/2, 1]: [0, 2, 3, 4, 19, 20, 21]                             & (20, 21) & (19, 20, 21) & $C_1$ [1/2, 1]: [0, 2, 3, 4, 19, 20, 21]                 & (20, 21) & (4, 19, 20)  & $C_1$ [1/2, 1]: [0, 1, 2, 4, 19, 20, 21]                 \Bstrut               \\
\hline \Tstrut
\multirow{3}{*}{16} & \multirow{3}{*}{\shortstack{Twinkle Twinkle Little Star \\ (variation 7)}}    & (8, 13)  & (8, 12, 13)  & $C_1$ [1/8, 3/16]: [8, 9, 10, 11, 12, 13]                            & (8, 13)  & (8, 12, 13)  & $C_1$ [1/8, 3/16]: [8, 9, 10, 11, 12, 13]                & (8, 13)  & (8, 12, 13)  & $C_1$ [1/8, 3/16]: [8, 9, 10, 11, 12, 13]                \\
   &                                                                                         & (0, 25)  & (1, 5, 25)   & $C_2$ [1/2, 3/4]: [0, 1, 2, 3, 5, 6, 7, 25]                          & (0, 25)  & (1, 5, 25)   & $C_2$ [1/2, 3/4]: [0, 1, 2, 3, 5, 6, 7, 25]              & (0, 25)  & (0, 9, 25)   & $C_2$ [1/2, 3/4]: [0, 1, 2, 3, 4, 5, 9, 25]              \\
   &                                                                                         & (23, 24) & (20, 21, 23) & $C_3$ [1/4, 3/4]: [3, 4, 5, 7, 8, 9, 20, 21, 22, 23, 24]             & (23, 24) & (20, 21, 23) & $C_3$ [1/4, 3/4]: [3, 4, 5, 7, 8, 9, 20, 21, 22, 23, 24] & (23, 24) & (9, 22, 23)  & $C_3$ [1/4, 3/4]: [3, 4, 5, 6, 7, 9, 20, 21, 22, 23, 24]     \Bstrut          \\
\hline
\end{tabular}
}
\end{center}
\caption{Comparison of the one-dimensional homology of $d_1, d_2, d_3$ for several music samples. The music titles in the table that begin with ``J-" indicate that the pieces are part of {\it Junggwangjigok}. For each song, the birth and death simplices, as well as the cycle representatives with their corresponding persistence intervals, are provided for each distance. The cycles are sorted and enumerated by their birth times in a manner such that cycles that persist only in $d_1$ (or in both $d_1 and d_3$ but die in $d_2$) are placed later in the list. }
\label{tab:compare}
\end{table}


\begin{sidewaystable}
\begin{center}
\resizebox{\textwidth}{!}{
\begin{tabular}{lclllllllll}
\hline\Tstrut\Bstrut
\multirow{2}{*}{\textbf{No}} & \multirow{2}{*}{\textbf{Song title}}& \multicolumn{3}{c} {$\mathbf{d_1} $} & \multicolumn{3}{c}{$\mathbf{d_3}$ } & \multicolumn{3}{c}{$\mathbf{d_2}$ } \\  
\cline{3-11}\Tstrut\Bstrut
 &   & \textbf{Birth}    & \textbf{Death}        & \textbf{Cycle}   & \textbf{Birth}    & \textbf{Death}        & \textbf{Cycle}    & \textbf{Birth}    & \textbf{Death}        & \textbf{Cycle}                                                        \\
\hline\Tstrut
\multirow{2}{*}{1}  & \multirow{2}{*}{\shortstack{J-Sangnyeongsan \\ (geomungo part)} }               & (4, 8)   & (2, 7, 8)    & $C_1$ [1/11, 1/8]: [2, 4, 7, 8]                                      & (4, 8)   & (2, 7, 8)    & $C_1$ [1/11, 1/8]: [2, 4, 7, 8]                          & (4, 8)   & (2, 7, 8)    & $C_1$ [1/11, 1/8]: [2, 4, 7, 8]                          \\
   &                                                                                         & (1, 11)  & (6, 8, 11)   & $C_2$ [1/5, 1/4]: [0, 1, 6, 7, 8, 11]                                &          &              &                                                               &          &              &                                                               \Bstrut               \\
\hline \Tstrut 
\multirow{4}{*}{2}  & \multirow{4}{*}{\shortstack{J-Jungnyeongsan \\ (geomungo part)}}              & (4, 10)  & (2, 6, 10)   & $C_1$ [1/7, 5/28]: [2, 4, 6, 10]                                     & (4, 10)  & (2, 6, 10)   & $C_1$ [1/7, 5/28]: [2, 4, 6, 10]                         & (4, 10)  & (4, 6, 10)   & $C_1$ [1/7, 9/56]: [2, 4, 6, 10]                         \\
   &                                                                                         & (13, 14) & (2, 13, 14)  & $C_2$ [1, 6/5]: [2, 3, 6, 8, 13, 14]                                 & (13, 14) & (2, 13, 14)  & $C_2$ [1, 6/5]: [2, 3, 6, 8, 13, 14]                     & (13, 14) & (2, 13, 14)  & $C_2$ [1, 23/20]: [2, 3, 6, 8, 13, 14]                   \\
   &                                                                                         & (15, 16) & (11, 15, 16) & $C_3$ [1, 4/3]: [2, 4, 6, 10, 11, 15, 16]                            & (15, 16) & (10, 15, 16) & $C_3$ [1, 4/3]: [10, 11, 15, 16]                         & (15, 16) & (10, 15, 16) & $C_3$ [1, 4/3]: [10, 11, 15, 16]                         \\
   &                                                                                         & (19, 20) & (3, 18, 19)  & $C_4$ [1, 2]: [3, 17, 18, 19, 20]                                    & (19, 20) & (3, 18, 19)  & $C_4$ [1, 2]: [3, 17, 18, 19, 20]                        & (19, 20) & (3, 18, 19)  & $C_4$ [1, 2]: [3, 17, 18, 19, 20]                        \Bstrut               \\
\hline \Tstrut 
\multirow{3}{*}{3}  & \multirow{3}{*}{\shortstack{J-Seryeongsan \\ (geomungo part)}}                & (3, 4)   & (0, 3, 4)    & $C_1$ [1, 14/13]: [0, 2, 3, 4]                                       & (3, 4)   & (0, 3, 4)    & $C_1$ [1, 14/13]: [0, 2, 3, 4]                           & (3, 4)   & (0, 3, 4)    & $C_1$ [1, 14/13]: [0, 2, 3, 4]                           \\
   &                                                                                         & (17, 18) & (2, 17, 18)  & $C_2$ [1, 6/5]: [0, 2, 5, 17, 18]                                    & (17, 18) & (0, 17, 18)  & $C_2$ [1, 7/6]: [0, 5, 17, 18]                           & (17, 18) & (0, 17, 18)  & $C_2$ [1, 7/6]: [0, 5, 17, 18]                           \\
   &                                                                                         & (10, 11) & (2, 9, 10)   & $C_3$ [1, 2]: [0, 2, 5, 7, 8, 9, 10, 11]                             & (10, 11) & (2, 9, 10)   & $C_3$ [1, 2]: [0, 2, 5, 7, 8, 9, 10, 11]                 & (10, 11) & (2, 9, 10)   & $C_3$ [1, 2]: [0, 2, 5, 7, 8, 9, 10, 11]                \Bstrut               \\
\hline \Tstrut  
\multirow{5}{*}{4}  & \multirow{5}{*}{\shortstack{J-Garakdeori \\ (geomungo part)}}                 & (14, 17) & (0, 7, 17)   & $C_1$ [1/2, 1]: [0, 5, 6, 7, 14, 17]                                 & (14, 17) & (0, 7, 17)   & $C_1$ [1/2, 1]: [0, 5, 6, 7, 14, 17]                     & (14, 17) & (0, 7, 17)   & $C_1$ [1/2, 1]: [0, 5, 6, 7, 14, 17]                     \\
   &                                                                                         & (5, 6)   & (2, 6, 10)   & $C_2$ [1/2, 10/9]: [0, 2, 5, 6, 10, 11, 15, 16]                      & (5, 6)   & (5, 7, 14)   & $C_2$ [1/2, 1]: [0, 2, 5, 6, 10, 11, 14, 15, 16]         & (5, 6)   & (5, 7, 14)   & $C_2$ [1/2, 1]: [0, 2, 5, 6, 10, 11, 14, 15, 16]         \\
   &                                                                                         & (16, 19) & (11, 16, 19) & $C_3$ [1, 4/3]: [0, 2, 6, 11, 14, 16, 19]                            & (16, 19) & (10, 16, 19) & $C_3$ [1, 4/3]: [2, 6, 10, 14, 16, 19]                   & (16, 19) & (10, 16, 19) & $C_3$ [1, 4/3]: [2, 6, 10, 14, 16, 19]                   \\
   &                                                                                         & (3, 4)   & (0, 3, 4)    & $C_4$ [1, 2]: [0, 2, 3, 4]                                           & (3, 4)   & (3, 4, 5)    & $C_4$ [1, 29/18]: [0, 2, 3, 4]                           & (3, 4)   & (3, 4, 16)   & $C_4$ [1, 19/12]: [0, 2, 3, 4, 16]                       \\
   &                                                                                         & (9, 10)  & (2, 8, 9)    & $C_5$ [1, 2]: [0, 6, 7, 8, 9, 10]                                    & (9, 10)  & (8, 9, 17)   & $C_5$ [1, 11/6]: [0, 6, 7, 8, 9, 10, 17]                 & (9, 10)  & (8, 9, 17)   & $C_5$ [1, 11/6]: [0, 6, 7, 8, 9, 10]                      \Bstrut               \\
\hline \Tstrut 
\multirow{7}{*}{5}  & \multirow{7}{*}{\shortstack{J-Sanghyeondodeuri \\ (geomungo part)}}           & (12, 13) & (8, 12, 13)  & $C_1$ [1/3, 2/3]: [8, 11, 12, 13]                                    & (12, 13) & (8, 11, 12)  & $C_1$ [1/3, 2/3]: [1, 8, 11, 12, 13]                     & (12, 13) & (1, 12, 13)  & $C_1$ [1/3, 8/15]: [1, 8, 11, 12, 13]                    \\
   &                                                                                         & (33, 34) & (4, 33, 34)  & $C_2$ [1, 10/9]: [4, 5, 33, 34]                                      & (33, 34) & (4, 33, 34)  & $C_2$ [1, 10/9]: [4, 5, 33, 34]                          & (33, 34) & (4, 33, 34)  & $C_2$ [1, 10/9]: [4, 5, 33, 34]                          \\
   &                                                                                         & (24, 25) & (10, 24, 25) & $C_3$ [1, 7/6]: [1, 8, 9, 10, 24, 25]                                & (24, 25) & (8, 24, 25)  & $C_3$ [1, 10/9]: [8, 9, 24, 25]                          & (24, 25) & (8, 24, 25)  & $C_3$ [1, 10/9]: [8, 9, 24, 25]                          \\
   &                                                                                         & (2, 19)  & (2, 18, 19)  & $C_4$ [1, 3/2]: [0, 1, 2, 8, 9, 10, 17, 18, 19]                      & (2, 19)  & (1, 17, 19)  & $C_4$ [1, 9/8]: [1, 2, 17, 18, 19]                       &          &              &                                                               \\
   &                                                                                         & (8, 14)  & (4, 5, 14)   & $C_5$ [1/2, 5/8]: [1, 4, 5, 8, 14]                                   &          &              &                                                               &          &              &                                                               \\
   &                                                                                         & (3, 8)   & (3, 5, 8)    & $C_6$ [1/2, 1]: [3, 4, 5, 8]                                         &          &              &                                                               &          &              &                                                               \\
   &                                                                                         & (2, 3)   & (2, 3, 8)    & $C_7$ [1, 21/20]: [1, 2, 3, 8]                                       &          &              &                                                               &          &              &                                                               \Bstrut               \\
\hline \Tstrut 
\multirow{11}{*}{6}  & \multirow{11}{*}{\shortstack{J-Hahyeondodeuri \\ (Geomungo part)}}             & (14, 26) & (7, 14, 26)  & $C_1$ [1/2, 8/15]: [1, 4, 7, 14, 26]                                 & (14, 26) & (7, 14, 26)  & $C_1$ [1/2, 8/15]: [1, 4, 7, 14, 26]                     & (14, 26) & (7, 14, 26)  & $C_1$ [1/2, 8/15]: [1, 4, 7, 14, 26]                     \\
   &                                                                                         & (15, 25) & (15, 25, 33) & $C_2$ [1/2, 3/4]: [12, 15, 25, 33]                                   & (15, 25) & (12, 15, 33) & $C_2$ [1/2, 3/4]: [12, 15, 25, 33]                       & (15, 25) & (12, 15, 33) & $C_2$ [1/2, 3/4]: [12, 15, 25, 33]                       \\
   &                                                                                         & (12, 27) & (8, 10, 27)  & $C_3$ [1, 7/6]: [4, 8, 10, 12, 27]                                   & (12, 27) & (8, 10, 27)  & $C_3$ [1, 7/6]: [4, 8, 10, 12, 27]                       & (12, 27) & (8, 10, 27)  & $C_3$ [1, 7/6]: [4, 8, 10, 12, 27]                       \\
   &                                                                                         & (6, 7)   & (6, 7, 8)    & $C_4$ [1, 4/3]: [4, 5, 6, 7, 8]                                      & (6, 7)   & (4, 6, 7)    & $C_4$ [1, 6/5]: [4, 5, 6, 7]                             & (6, 7)   & (4, 6, 7)    & $C_4$ [1, 6/5]: [4, 5, 6, 7]                             \\
   &                                                                                         & (15, 36) & (4, 33, 36)  & $C_5$ [1, 3/2]: [0, 2, 4, 15, 36]                                    & (15, 36) & (1, 15, 36)  & $C_5$ [1, 3/2]: [0, 1, 2, 15, 36]                        & (15, 36) & (0, 33, 36)  & $C_5$ [1, 83/66]: [0, 1, 15, 33, 36]                     \\
   &                                                                                         & (29, 30) & (12, 29, 30) & $C_6$ [1, 3/2]: [11, 12, 25, 29, 30]                                 & (29, 30) & (12, 29, 30) & $C_6$ [1, 3/2]: [11, 12, 25, 29, 30]                     & (29, 30) & (12, 29, 30) & $C_6$ [1, 3/2]: [11, 12, 25, 29, 30]                     \\
   &                                                                                         & (12, 19) & (4, 12, 19)  & $C_7$ [1/2, 1]: [4, 8, 12, 19]                                       & (12, 19) & (12, 19, 25) & $C_7$ [1/2, 3/4]: [4, 12, 19, 25]                        &          &              &                                                               \\
   &                                                                                         & (2, 15)  & (7, 8, 15)   & $C_8$ [1, 4/3]: [1, 2, 7, 8, 15, 25]                                 & (2, 15)  & (2, 8, 15)   & $C_8$ [1, 7/6]: [1, 2, 8, 12, 15]                        &          &              &                                                               \\
   &                                                                                         & (15, 32) & (8, 15, 32)  & $C_9$ [1, 12/11]: [8, 12, 15, 32]                                    &          &              &                                                               &          &              &                                                               \\
   &                                                                                         & (10, 11) & (8, 10, 11)  & $C_{10}$ [1, 7/6]: [4, 8, 10, 11]                                      &          &              &                                                               &          &              &                                                               \\
   &                                                                                         & (16, 17) & (8, 16, 17)  & $C_{11}$ [1, 7/6]: [4, 8, 16, 17]                                      &          &              &                                                               &          &              &                                                               \Bstrut               \\
\hline \Tstrut 
\multirow{6}{*}{7}  & \multirow{6}{*}{\shortstack{J-Yeombuldodeuri \\ (geomungo part)}}             & (5, 6)   & (4, 6, 7)    & $C_1$ [1/5, 1/3]: [2, 3, 4, 5, 6, 7]                                 & (5, 6)   & (2, 5, 7)    & $C_1$ [1/5, 7/24]: [2, 5, 6, 7]                          & (5, 6)   & (4, 5, 6)    & $C_1$ [1/5, 1/4]: [2, 3, 4, 5, 6]                        \\
   &                                                                                         & (28, 29) & (3, 28, 29)  & $C_2$ [1, 5/4]: [0, 3, 25, 28, 29]                                   & (28, 29) & (3, 28, 29)  & $C_2$ [1, 5/4]: [0, 3, 25, 28, 29]                       & (28, 29) & (3, 28, 29)  & $C_2$ [1, 5/4]: [0, 3, 25, 28, 29]                       \\
   &                                                                                         & (5, 7)   & (4, 5, 7)    & $C_3$ [1/8, 1/4]: [3, 4, 5, 7]                                       & (5, 7)   & (4, 5, 7)    & $C_3$ [1/8, 1/4]: [3, 4, 5, 7]                           &          &              &                                                               \\
   &                                                                                         & (5, 10)  & (4, 10, 11)  & $C_4$ [1/2, 7/12]: [0, 4, 5, 10, 11]                                 & (5, 10)  & (4, 10, 11)  & $C_4$ [1/2, 11/20]: [4, 5, 10, 11]                       &          &              &                                                               \\
   &                                                                                         & (3, 20)  & (6, 19, 20)  & $C_5$ [1/4, 1/2]: [2, 3, 4, 5, 6, 19, 20]                            &          &              &                                                               &          &              &                                                               \\
   &                                                                                         & (18, 19) & (4, 18, 19)  & $C_6$ [1, 26/25]: [3, 4, 18, 19]                                     &          &              &                                                               &          &              &                                                              \Bstrut               \\
\hline
\end{tabular}
}
\end{center}
\renewcommand{\thetable}{\arabic{table}a}
\caption{
Enlarged version of Table~\ref{tab:compare} (sample pieces 1--7) provided for improved readability.
}
\label{tab:x}
\end{sidewaystable}

\begin{sidewaystable}
\begin{center}
\resizebox{\textwidth}{!}{%
\begin{tabular}{lclllllllll}
\hline\Tstrut\Bstrut
\multirow{2}{*}{\textbf{No}} & \multirow{2}{*}{\textbf{Song title}}& \multicolumn{3}{c} {$\mathbf{d_1} $} & \multicolumn{3}{c}{$\mathbf{d_3}$ } & \multicolumn{3}{c}{$\mathbf{d_2}$ } \\  
\cline{3-11}\Tstrut\Bstrut
 &   & \textbf{Birth}    & \textbf{Death}        & \textbf{Cycle}   & \textbf{Birth}    & \textbf{Death}        & \textbf{Cycle}    & \textbf{Birth}    & \textbf{Death}        & \textbf{Cycle}                                                        \\
\hline\Tstrut
\multirow{6}{*}{8}  & \multirow{6}{*}{\shortstack{J-Taryeong \\ (geomungo part)}}                   & (2, 12)  & (2, 11, 12)  & $C_1$ [1/3, 10/21]: [1, 2, 10, 11, 12]                               & (2, 12)  & (2, 4, 11)   & $C_1$ [1/3, 10/21]: [1, 2, 4, 10, 11, 12]                & (2, 12)  & (2, 10, 12)  & $C_1$ [1/3, 389/1012]: [1, 2, 10, 12]                    \\
   &                                                                                         & (4, 7)   & (6, 10, 12)  & $C_2$ [1/3, 1/2]: [1, 2, 4, 6, 7, 10, 12]                            & (4, 7)   & (2, 4, 12)   & $C_2$ [1/3, 22/57]: [2, 4, 6, 7, 12]                     & (4, 7)   & (4, 10, 11)  & $C_2$ [1/3, 45/133]: [4, 7, 10, 11]                      \\
   &                                                                                         & (32, 33) & (6, 32, 33)  & $C_3$ [1, 21/20]: [1, 6, 7, 32, 33]                                  & (32, 33) & (6, 32, 33)  & $C_3$ [1, 21/20]: [1, 6, 7, 32, 33]                      & (32, 33) & (6, 32, 33)  & $C_3$ [1, 21/20]: [1, 6, 7, 32, 33]                      \\
   &                                                                                         & (27, 28) & (11, 27, 28) & $C_4$ [1, 47/35]: [0, 1, 3, 4, 11, 27, 28]                           & (27, 28) & (7, 27, 28)  & $C_4$ [1, 4/3]: [1, 3, 4, 7, 27, 28]                     & (27, 28) & (7, 27, 28)  & $C_4$ [1, 4/3]: [1, 3, 4, 7, 27, 28]                     \\
   &                                                                                         & (1, 2)   & (1, 6, 14)   & $C_5$ [1/4, 15/44]: [1, 2, 6, 14]                                    &          &              &                                                               &          &              &                                                               \\
   &                                                                                         & (22, 24) & (1, 22, 24)  & $C_6$ [1, 59/44]: [0, 1, 2, 6, 14, 22, 24]                           &          &              &                                                               &          &              &                                                              \Bstrut               \\
\hline \Tstrut 
\multirow{12}{*}{9}  & \multirow{12}{*}{\shortstack{J-Gunak \\ (geomungo part)}}                      & (0, 1)   & (1, 5, 6)    & $C_1$ [1/6, 19/90]: [0, 1, 5, 6]                                     & (0, 1)   & (1, 4, 6)    & $C_1$ [1/6, 19/90]: [0, 1, 5, 6]                         & (0, 1)   & (1, 4, 6)    & $C_1$ [1/6, 19/90]: [0, 1, 5, 6]                         \\
   &                                                                                         & (10, 11) & (4, 10, 11)  & $C_2$ [1/8, 19/88]: [3, 4, 10, 11]                                   & (10, 11) & (3, 10, 11)  & $C_2$ [1/8, 19/88]: [3, 4, 10, 11]                       & (10, 11) & (3, 10, 11)  & $C_2$ [1/8, 19/88]: [3, 4, 10, 11]                       \\
   &                                                                                         & (1, 2)   & (2, 10, 11)  & $C_3$ [1/7, 1/4]: [1, 2, 4, 5, 10, 11]                               & (1, 2)   & (2, 3, 5)    & $C_3$ [1/7, 17/70]: [1, 2, 3, 4, 5, 11]                  & (1, 2)   & (2, 3, 5)    & $C_3$ [1/7, 17/70]: [1, 2, 3, 4, 5, 11]                  \\
   &                                                                                         & (11, 16) & (3, 11, 16)  & $C_4$ [1/3, 14/33]: [0, 1, 3, 11, 16]                                & (11, 16) & (3, 11, 16)  & $C_4$ [1/3, 14/33]: [0, 1, 3, 11, 16]                    & (11, 16) & (6, 11, 16)  & $C_4$ [1/3, 2209/5544]: [0, 1, 6, 11, 16]                \\
   &                                                                                         & (16, 24) & (11, 16, 24) & $C_5$ [1/3, 2/3]: [4, 11, 16, 17, 24]                                & (16, 24) & (6, 9, 16)   & $C_5$ [1/3, 11/18]: [0, 1, 4, 5, 6, 9, 16, 17, 24]       & (16, 24) & (9, 16, 24)  & $C_5$ [1/3, 8/15]: [0, 4, 5, 9, 16, 17, 24]              \\
   &                                                                                         & (19, 31) & (11, 19, 31) & $C_6$ [1, 71/55]: [3, 7, 8, 19, 31]                                  & (19, 31) & (11, 19, 31) & $C_6$ [1, 71/55]: [3, 4, 8, 11, 19, 31]                  & (19, 31) & (8, 19, 31)  & $C_6$ [1, 139/120]: [0, 1, 3, 8, 19, 31]                 \\
   &                                                                                         & (32, 33) & (29, 32, 33) & $C_7$ [1, 23/14]: [17, 19, 29, 30, 32, 33]                           & (32, 33) & (17, 32, 33) & $C_7$ [1, 3/2]: [17, 19, 29, 30, 32, 33]                 & (32, 33) & (17, 32, 33) & $C_7$ [1, 3/2]: [17, 19, 29, 30, 32, 33]                 \\
   &                                                                                         & (25, 26) & (24, 25, 26) & $C_8$ [1, 17/10]: [0, 9, 16, 18, 24, 25, 26]                         & (25, 26) & (9, 25, 26)  & $C_8$ [1, 23/15]: [0, 1, 9, 16, 18, 25, 26]              & (25, 26) & (9, 25, 26)  & $C_8$ [1, 23/15]: [0, 9, 16, 18, 25, 26]                 \\
   &                                                                                         & (7, 19)  & (9, 11, 19)  & $C_9$ [1, 11/10]: [2, 3, 7, 9, 11, 19]                               & (6, 7)   & (6, 7, 11)   & $C_9$ [1, 12/11]: [3, 6, 7, 11]                          &          &              &                                                               \\
   &                                                                                         & (6, 7)   & (5, 6, 7)    & $C_{10}$ [1, 10/9]: [3, 5, 6, 7]                                       & (7, 19)  & (7, 11, 19)  & $C_{10}$ [1, 11/10]: [3, 4, 7, 11, 19]                     &          &              &                                                               \\
   &                                                                                         & (10, 15) & (5, 10, 15)  & $C_{11}$ [1/6, 11/56]: [4, 5, 10, 15]                                  &          &              &                                                               &          &              &                                                               \\
   &                                                                                         & (2, 8)   & (2, 3, 8)    & $C_{12}$ [1/3, 11/28]: [1, 2, 3, 8]                                    &          &              &                                                               &          &              &                                                              \Bstrut               \\
\hline \Tstrut 
\multirow{10}{*}{10} & \multirow{10}{*}{\shortstack{J-Sanghyeondodeuri \\ (haegeum part)}} & (1, 17)  & (4, 10, 17)  & $C_1$ [1/4, 1/3]: [1, 4, 10, 17]                                     & (1, 17)  & (4, 6, 17)   & $C_1$ [1/4, 23/76]: [1, 4, 6, 10, 17]                    & (1, 17)  & (4, 6, 17)   & $C_1$ [1/4, 23/76]: [1, 4, 6, 10, 17]                    \\
   &                                                                                         & (1, 11)  & (1, 11, 13)  & $C_2$ [1/5, 1/2]: [1, 4, 10, 11, 13]                                 & (1, 11)  & (4, 11, 13)  & $C_2$ [1/5, 13/42]: [1, 4, 6, 10, 11, 13]                & (1, 11)  & (4, 11, 13)  & $C_2$ [1/5, 13/42]: [1, 4, 6, 10, 11, 13]                \\
   &                                                                                         & (3, 7)   & (3, 4, 7)    & $C_3$ [1/3, 1/2]: [1, 3, 4, 7, 10, 11, 12]                           & (3, 7)   & (3, 4, 7)    & $C_3$ [1/3, 1/2]: [1, 3, 4, 6, 7, 11, 12]                & (3, 7)   & (1, 7, 12)   & $C_3$ [1/3, 1/2]: [1, 3, 4, 7, 11, 12]                   \\
   &                                                                                         & (22, 23) & (1, 22, 23)  & $C_4$ [1, 6/5]: [1, 4, 10, 21, 22, 23]                               & (22, 23) & (1, 22, 23)  & $C_4$ [1, 6/5]: [1, 4, 6, 10, 21, 22, 23]                & (22, 23) & (1, 22, 23)  & $C_4$ [1, 6/5]: [1, 6, 10, 21, 22, 23]                   \\
   &                                                                                         & (16, 27) & (2, 16, 27)  & $C_5$ [1, 5/4]: [1, 2, 16, 27]                                       & (16, 27) & (1, 16, 27)  & $C_5$ [1, 5/4]: [1, 2, 16, 27]                           & (16, 27) & (1, 16, 27)  & $C_5$ [1, 5/4]: [1, 2, 16, 27]                           \\
   &                                                                                         & (8, 9)   & (1, 8, 9)    & $C_6$ [1, 3/2]: [1, 4, 6, 7, 8, 9, 10]                               & (8, 9)   & (8, 9, 12)   & $C_6$ [1, 29/20]: [1, 6, 7, 8, 9, 12]                    & (8, 9)   & (3, 8, 9)    & $C_6$ [1, 4/3]: [1, 3, 6, 7, 8, 9]                       \\
   &                                                                                         & (4, 15)  & (6, 10, 15)  & $C_7$ [1/3, 1/2]: [4, 6, 10, 15]                                     &          &              &                                                               &          &              &                                                               \\
   &                                                                                         & (0, 15)  & (0, 10, 15)  & $C_8$ [1/2, 8/15]: [0, 1, 4, 10, 15]                                 &          &              &                                                               &          &              &                                                               \\
   &                                                                                         & (15, 25) & (6, 15, 25)  & $C_9$ [1/2, 7/12]: [6, 13, 15, 25]                                   &          &              &                                                               &          &              &                                                               \\
   &                                                                                         & (5, 6)   & (5, 6, 13)   & $C_{10}$ [1, 8/7]: [4, 5, 6, 13]                                       &          &              &                                                               &          &              &                                                               \Bstrut               \\
\hline \Tstrut 
\multirow{21}{*}{11} & \multirow{21}{*}{\shortstack{J-Hahyeondodeuri \\ (haegeum part)}}   & (20, 21) & (20, 22, 23) & $C_1$ [1/2, 7/12]: [20, 21, 22, 23]                                  & (20, 21) & (20, 22, 23) & $C_1$ [1/2, 7/12]: [20, 21, 22, 23]                      & (20, 21) & (7, 20, 21)  & $C_1$ [1/2, 7/12]: [6, 7, 20, 21]                        \\
   &                                                                                         & (18, 21) & (12, 18, 21) & $C_2$ [1/2, 3/4]: [7, 12, 18, 21]                                    & (18, 21) & (6, 18, 21)  & $C_2$ [1/2, 13/22]: [6, 7, 18, 21]                       & (18, 21) & (7, 18, 21)  & $C_2$ [1/2, 7/12]: [6, 7, 18, 21]                        \\
   &                                                                                         & (7, 20)  & (6, 20, 21)  & $C_3$ [1/2, 13/22]: [6, 7, 20, 21, 22, 23]                           & (7, 20)  & (6, 20, 21)  & $C_3$ [1/2, 13/22]: [6, 7, 20, 21, 22, 23]               & (7, 20)  & (20, 22, 23) & $C_3$ [1/2, 7/12]: [6, 7, 20, 21, 22, 23]                \\
   &                                                                                         & (17, 23) & (10, 17, 23) & $C_4$ [1/2, 3/4]: [7, 8, 10, 17, 20, 23]                             & (17, 23) & (7, 20, 23)  & $C_4$ [1/2, 3/4]: [7, 8, 9, 17, 20, 23]                  & (17, 23) & (7, 20, 23)  & $C_4$ [1/2, 3/4]: [7, 8, 9, 17, 20, 23]                  \\
   &                                                                                         & (15, 16) & (13, 15, 21) & $C_5$ [1/2, 31/30]: [4, 7, 8, 9, 10, 13, 14, 15, 16, 20, 21, 22, 23] & (15, 16) & (13, 15, 21) & $C_5$ [1/2, 31/30]: [6, 7, 8, 9, 10, 13, 14, 15, 16]     & (15, 16) & (13, 15, 21) & $C_5$ [1/2, 31/30]: [6, 7, 8, 9, 10, 13, 14, 15, 16]     \\
   &                                                                                         & (24, 26) & (4, 24, 26)  & $C_6$ [1, 3/2]: [2, 3, 4, 24, 26]                                    & (24, 26) & (4, 24, 26)  & $C_6$ [1, 3/2]: [2, 3, 4, 24, 26]                        & (24, 26) & (3, 24, 26)  & $C_6$ [1, 4/3]: [3, 4, 24, 26]                           \\
   &                                                                                         & (24, 27) & (4, 24, 27)  & $C_7$ [1, 3/2]: [2, 3, 4, 24, 27]                                    & (24, 27) & (4, 24, 27)  & $C_7$ [1, 3/2]: [2, 3, 4, 24, 27]                        & (24, 27) & (3, 24, 27)  & $C_7$ [1, 4/3]: [3, 4, 24, 27]                           \\
   &                                                                                         & (32, 33) & (21, 32, 33) & $C_8$ [1, 4/3]: [4, 7, 21, 32, 33]                                   & (32, 33) & (21, 32, 33) & $C_8$ [1, 4/3]: [4, 7, 21, 32, 33]                       & (32, 33) & (21, 32, 33) & $C_8$ [1, 4/3]: [4, 6, 7, 21, 32, 33]                    \\
   &                                                                                         & (10, 30) & (1, 9, 30)   & $C_9$ [1, 3/2]: [1, 2, 9, 10, 30]                                    & (10, 30) & (1, 9, 30)   & $C_9$ [1, 3/2]: [1, 2, 9, 10, 30]                        & (10, 30) & (1, 9, 30)   & $C_9$ [1, 3/2]: [0, 1, 2, 9, 10, 30]                     \\
   &                                                                                         & (28, 29) & (12, 28, 29) & $C_{10}$ [1, 5/3]: [0, 2, 8, 9, 12, 28, 29]                            & (28, 29) & (12, 28, 29) & $C_{10}$ [1, 5/3]: [0, 2, 7, 8, 9, 12, 28, 29]             & (28, 29) & (12, 28, 29) & $C_{10}$ [1, 268/165]: [0, 2, 7, 8, 9, 12, 28, 29]         \\
   &                                                                                         & (6, 7)   & (6, 12, 21)  & $C_{11}$ [1/2, 1]: [6, 7, 12, 21]                                      & (6, 7)   & (10, 18, 21) & $C_{11}$ [1/2, 7/10]: [6, 7, 8, 9, 10, 12, 18, 21]         &          &              &                                                               \\
   &                                                                                         & (20, 35) & (6, 20, 35)  & $C_{12}$ [1/2, 13/22]: [6, 7, 20, 35]                                  &          &              &                                                               &          &              &                                                               \\
   &                                                                                         & (14, 24) & (14, 18, 24) & $C_{13}$ [1, 31/30]: [6, 9, 10, 14, 18, 24]                            &          &              &                                                               &          &              &                                                               \\
   &                                                                                         & (5, 6)   & (5, 6, 21)   & $C_{14}$ [1, 12/11]: [4, 5, 6, 21]                                     &          &              &                                                               &          &              &                                                               \\
   &                                                                                         & (6, 19)  & (6, 19, 21)  & $C_{15}$ [1, 12/11]: [4, 6, 19, 21]                                    &          &              &                                                               &          &              &                                                               \\
   &                                                                                         & (0, 13)  & (2, 8, 20)   & $C_{16}$ [1, 8/7]: [0, 4, 8, 9, 13, 17, 20]                            &          &              &                                                               &          &              &                                                               \\
   &                                                                                         & (16, 17) & (8, 16, 17)  & $C_{17}$ [1, 8/7]: [2, 4, 8, 9, 16, 17]                                &          &              &                                                               &          &              &                                                               \\
   &                                                                                         & (22, 33) & (18, 22, 33) & $C_{18}$ [1, 8/7]: [7, 18, 22, 33]                                     &          &              &                                                               &          &              &                                                               \\
   &                                                                                         & (11, 12) & (9, 11, 12)  & $C_{19}$ [1, 6/5]: [9, 10, 11, 12]                                     &          &              &                                                               &          &              &                                                               \\
   &                                                                                         & (28, 31) & (6, 21, 28)  & $C_{20}$ [1, 4/3]: [4, 6, 21, 28, 31]                                  &          &              &                                                               &          &              &                                                               \\
   &                                                                                         & (6, 34)  & (6, 21, 34)  & $C_{21}$ [1, 4/3]: [4, 6, 21, 34]                                      &          &              &                                                               &          &              &                                                               \Bstrut               \\
\hline
\end{tabular}
}
\end{center}
  \addtocounter{table}{-1}
  \renewcommand{\thetable}{\arabic{table}b}
\caption{
Enlarged version of Table~\ref{tab:compare} (sample pieces 8--11) provided for improved readability.
}
\label{tab:y}
\end{sidewaystable}

\begin{sidewaystable}
\begin{center}
\resizebox{\textwidth}{!}{
\begin{tabular}{lclllllllll}
\hline\Tstrut\Bstrut
\multirow{2}{*}{\textbf{No}} & \multirow{2}{*}{\textbf{Song title}}& \multicolumn{3}{c} {$\mathbf{d_1} $} & \multicolumn{3}{c}{$\mathbf{d_3}$ } & \multicolumn{3}{c}{$\mathbf{d_2}$ } \\  
\cline{3-11}\Tstrut\Bstrut
 &   & \textbf{Birth}    & \textbf{Death}        & \textbf{Cycle}   & \textbf{Birth}    & \textbf{Death}        & \textbf{Cycle}    & \textbf{Birth}    & \textbf{Death}        & \textbf{Cycle}                                                        \\
\hline\Tstrut
\multirow{19}{*}{12} & \multirow{19}{*}{\shortstack{J-Yeombuldodeuri \\ (haegeum part)}}   & (0, 9)   & (8, 9, 16)   & $C_1$ [1/4, 1/2]: [0, 3, 9, 10, 14, 16]                              & (0, 9)   & (9, 10, 16)  & $C_1$ [1/4, 18/65]: [0, 9, 10, 14, 16]                   & (0, 9)   & (8, 10, 16)  & $C_1$ [1/4, 18/65]: [0, 3, 9, 10, 16]                    \\
   &                                                                                         & (5, 6)   & (2, 3, 5)    & $C_2$ [1/3, 1/2]: [0, 2, 3, 4, 5, 6]                                 & (5, 6)   & (2, 5, 18)   & $C_2$ [1/3, 13/30]: [2, 5, 6, 18]                        & (5, 6)   & (5, 6, 18)   & $C_2$ [1/3, 23/60]: [0, 5, 6, 18]                        \\
   &                                                                                         & (5, 18)  & (4, 5, 18)   & $C_3$ [1/5, 2/5]: [0, 3, 4, 5, 18]                                   & (5, 18)  & (3, 5, 18)   & $C_3$ [1/5, 2/5]: [0, 3, 4, 5, 18]                       & (5, 18)  & (0, 5, 6)    & $C_3$ [1/5, 2/5]: [0, 3, 4, 5, 18]                       \\
   &                                                                                         & (5, 10)  & (4, 10, 16)  & $C_4$ [1/3, 8/15]: [0, 4, 5, 10, 16]                                 & (5, 10)  & (3, 5, 10)   & $C_4$ [1/3, 2/5]: [3, 4, 5, 10]                          & (5, 10)  & (0, 5, 10)   & $C_4$ [1/3, 2/5]: [0, 4, 5, 10]                          \\
   &                                                                                         & (11, 13) & (10, 11, 13) & $C_5$ [1/2, 7/10]: [5, 9, 10, 11, 13]                                & (11, 13) & (11, 13, 18) & $C_5$ [1/2, 8/15]: [4, 5, 11, 13, 18]                    & (11, 13) & (11, 13, 18) & $C_5$ [1/2, 8/15]: [4, 5, 11, 13, 18]                    \\
   &                                                                                         & (19, 20) & (0, 19, 20)  & $C_6$ [1/2, 7/10]: [0, 18, 19, 20]                                   & (19, 20) & (0, 19, 20)  & $C_6$ [1/2, 7/10]: [0, 18, 19, 20]                       & (19, 20) & (18, 19, 20) & $C_6$ [1/2, 8/15]: [0, 18, 19, 20]                       \\
   &                                                                                         & (1, 12)  & (1, 9, 12)   & $C_7$ [1/2, 1]: [0, 1, 2, 9, 11, 12]                                 & (1, 12)  & (1, 9, 12)   & $C_7$ [1/2, 5/6]: [0, 1, 9, 11, 12]                      & (1, 12)  & (1, 2, 12)   & $C_7$ [1/2, 7/10]: [0, 1, 2, 9, 11, 12]                  \\
   &                                                                                         & (34, 35) & (9, 34, 35)  & $C_8$ [1/2, 5/6]: [9, 12, 34, 35]                                    & (34, 35) & (12, 34, 35) & $C_8$ [1/2, 7/10]: [0, 9, 12, 34, 35]                    & (34, 35) & (12, 34, 35) & $C_8$ [1/2, 7/10]: [0, 9, 12, 34, 35]                    \\
   &                                                                                         & (19, 39) & (5, 19, 39)  & $C_9$ [1, 4/3]: [0, 5, 8, 10, 19, 39]                                & (19, 39) & (14, 19, 39) & $C_9$ [1, 153/130]: [10, 14, 19, 39]                     & (19, 39) & (10, 19, 39) & $C_9$ [1, 267/260]: [0, 10, 19, 39]                      \\
   &                                                                                         & (37, 40) & (5, 37, 40)  & $C_{10}$ [1, 4/3]: [0, 5, 8, 10, 18, 37, 40]                           & (37, 40) & (5, 37, 40)  & $C_{10}$ [1, 4/3]: [0, 3, 5, 8, 10, 18, 37, 40]            & (37, 40) & (3, 37, 40)  & $C_{10}$ [1, 677/520]: [0, 3, 8, 10, 18, 37, 40]           \\
   &                                                                                         & (30, 31) & (0, 30, 31)  & $C_{11}$ [1, 3/2]: [0, 3, 10, 21, 24, 30, 31]                          & (30, 31) & (0, 30, 31)  & $C_{11}$ [1, 3/2]: [0, 3, 10, 21, 24, 30, 31]              & (30, 31) & (16, 30, 31) & $C_{11}$ [1, 59/40]: [0, 3, 10, 16, 21, 24, 30, 31]        \\
   &                                                                                         & (35, 41) & (0, 35, 41)  & $C_{12}$ [1, 6/5]: [0, 18, 35, 41]                                     & (35, 41) & (0, 35, 41)  & $C_{12}$ [1, 6/5]: [0, 18, 35, 41]                         &          &              &                                                               \\
   &                                                                                         & (25, 26) & (18, 25, 26) & $C_{13}$ [1, 6/5]: [0, 18, 25, 26]                                     & (25, 26) & (18, 25, 26) & $C_{13}$ [1, 6/5]: [0, 18, 25, 26]                         &          &              &                                                               \\
   &                                                                                         & (2, 14)  & (3, 14, 22)  & $C_{14}$ [1/3, 3/8]: [0, 2, 3, 14, 22]                                 &          &              &                                                               &          &              &                                                               \\
   &                                                                                         & (0, 24)  & (9, 10, 24)  & $C_{15}$ [1/2, 8/15]: [0, 9, 10, 24]                                   &          &              &                                                               &          &              &                                                               \\
   &                                                                                         & (9, 11)  & (3, 9, 11)   & $C_{16}$ [1/2, 7/10]: [2, 3, 9, 11]                                    &          &              &                                                               &          &              &                                                               \\
   &                                                                                         & (8, 28)  & (10, 28, 29) & $C_{17}$ [1, 14/13]: [8, 10, 28, 29]                                   &          &              &                                                               &          &              &                                                               \\
   &                                                                                         & (24, 40) & (5, 29, 40)  & $C_{18}$ [1, 6/5]: [5, 18, 24, 29, 40]                                 &          &              &                                                               &          &              &                                                               \\
   &                                                                                         & (24, 36) & (24, 33, 36) & $C_{19}$ [1, 6/5]: [14, 16, 24, 33, 36]                                &          &              &                                                               &          &              &                                                                \Bstrut               \\
\hline \Tstrut 
\multirow{4}{*}{13} & \multirow{4}{*}{\shortstack{Twinkle Twinkle Little Star \\ (theme)}}          & (8, 9)   & (5, 8, 9)    & $C_1$ [1/2, 5/8]: [3, 4, 5, 8, 9]                                    & (8, 9)   & (4, 8, 9)    & $C_1$ [1/2, 5/8]: [4, 5, 8, 9]                           & (8, 9)   & (4, 8, 9)    & $C_1$ [1/2, 5/8]: [4, 5, 8, 9]                           \\
   &                                                                                         & (5, 6)   & (5, 6, 7)    & $C_2$ [1/2, 3/4]: [0, 5, 6, 7]                                       & (5, 6)   & (0, 6, 7)    & $C_2$ [1/2, 3/4]: [0, 1, 4, 5, 6, 7]                     & (5, 6)   & (0, 6, 7)    & $C_2$ [1/2, 3/4]: [0, 4, 5, 6, 7]                        \\
   &                                                                                         & (11, 12) & (4, 11, 12)  & $C_3$ [1/2, 1]: [0, 1, 4, 5, 9, 10, 11, 12]                          & (11, 12) & (4, 11, 12)  & $C_3$ [1/2, 1]: [4, 5, 9, 10, 11, 12]                    & (11, 12) & (4, 11, 12)  & $C_3$ [1/2, 1]: [4, 5, 9, 10, 11, 12]                    \\
   &                                                                                         & (10, 13) & (5, 9, 13)   & $C_4$ [1/2, 1]: [5, 6, 9, 10, 13]                                    & (10, 13) & (5, 9, 13)   & $C_4$ [1/2, 1]: [5, 6, 9, 10, 13]                        & (10, 13) & (5, 9, 13)   & $C_4$ [1/2, 1]: [5, 6, 9, 10, 13]                        \Bstrut               \\
\hline \Tstrut
\multirow{6}{*}{14} & \multirow{6}{*}{\shortstack{Twinkle Twinkle Little Star \\ (variation 1)}}    & (1, 3)   & (1, 3, 14)   & $C_1$ [1/4, 5/12]: [0, 1, 3, 7, 14]                                  & (1, 3)   & (1, 3, 12)   & $C_1$ [1/4, 73/208]: [0, 1, 3, 12]                       & (1, 3)   & (1, 3, 4)    & $C_1$ [1/4, 15/52]: [0, 1, 3, 4, 12]                     \\
   &                                                                                         & (15, 16) & (11, 15, 16) & $C_2$ [1/4, 1/2]: [3, 7, 11, 12, 14, 15, 16]                         & (15, 16) & (11, 15, 16) & $C_2$ [1/4, 1/2]: [3, 4, 11, 12, 14, 15, 16]             & (15, 16) & (3, 15, 16)  & $C_2$ [1/4, 1/2]: [3, 4, 11, 12, 14, 15, 16]             \\
   &                                                                                         & (9, 12)  & (7, 12, 14)  & $C_3$ [1/6, 1/4]: [7, 9, 12, 14]                                     & (9, 12)  & (7, 9, 12)   & $C_3$ [1/6, 8/39]: [3, 7, 9, 12]                         &          &              &                                                               \\
   &                                                                                         & (7, 14)  & (8, 11, 14)  & $C_4$ [1/6, 5/16]: [3, 4, 7, 8, 12, 14]                              & (7, 14)  & (7, 12, 14)  & $C_4$ [1/6, 8/39]: [3, 7, 12, 14]                        &          &              &                                                               \\
   &                                                                                         & (11, 12) & (4, 11, 12)  & $C_5$ [1/4, 15/52]: [3, 4, 11, 12]                                   &          &              &                                                               &          &              &                                                               \\
   &                                                                                         & (13, 14) & (12, 13, 14) & $C_6$ [1/4, 5/16]: [4, 12, 13, 14]                                   &          &              &                                                               &          &              &                                                              \Bstrut               \\
\hline \Tstrut
15 & Twinkle Twinkle Little Star (variation 3)                                               & (20, 21) & (19, 20, 21) & $C_1$ [1/2, 1]: [0, 2, 3, 4, 19, 20, 21]                             & (20, 21) & (19, 20, 21) & $C_1$ [1/2, 1]: [0, 2, 3, 4, 19, 20, 21]                 & (20, 21) & (4, 19, 20)  & $C_1$ [1/2, 1]: [0, 1, 2, 4, 19, 20, 21]                 \Bstrut               \\
\hline \Tstrut
\multirow{3}{*}{16} & \multirow{3}{*}{\shortstack{Twinkle Twinkle Little Star \\ (variation 7)}}    & (8, 13)  & (8, 12, 13)  & $C_1$ [1/8, 3/16]: [8, 9, 10, 11, 12, 13]                            & (8, 13)  & (8, 12, 13)  & $C_1$ [1/8, 3/16]: [8, 9, 10, 11, 12, 13]                & (8, 13)  & (8, 12, 13)  & $C_1$ [1/8, 3/16]: [8, 9, 10, 11, 12, 13]                \\
   &                                                                                         & (0, 25)  & (1, 5, 25)   & $C_2$ [1/2, 3/4]: [0, 1, 2, 3, 5, 6, 7, 25]                          & (0, 25)  & (1, 5, 25)   & $C_2$ [1/2, 3/4]: [0, 1, 2, 3, 5, 6, 7, 25]              & (0, 25)  & (0, 9, 25)   & $C_2$ [1/2, 3/4]: [0, 1, 2, 3, 4, 5, 9, 25]              \\
   &                                                                                         & (23, 24) & (20, 21, 23) & $C_3$ [1/4, 3/4]: [3, 4, 5, 7, 8, 9, 20, 21, 22, 23, 24]             & (23, 24) & (20, 21, 23) & $C_3$ [1/4, 3/4]: [3, 4, 5, 7, 8, 9, 20, 21, 22, 23, 24] & (23, 24) & (9, 22, 23)  & $C_3$ [1/4, 3/4]: [3, 4, 5, 6, 7, 9, 20, 21, 22, 23, 24]     \Bstrut          \\
\hline
\end{tabular}
}
\end{center}
  \addtocounter{table}{-1}
  \renewcommand{\thetable}{\arabic{table}c}
\caption{
Enlarged version of Table~\ref{tab:compare} (sample pieces 12--16) provided for improved readability.
 }
\label{tab:z}
\end{sidewaystable}

}
}

\section{Conclusion}
Persistent homology deals with topological features such as loops and cycles and it provides a suitable framework for analyzing music. To apply persistent homology, a proper notion of distance is required. Although the choice of distance is important, research on how the analysis changes with different distance functions has not been thoroughly explored.

In this study, we investigated three different distance definitions that align well with our musical intuition when music data is represented as a music graph:
$d_1$, the distance defined by the minimal edge path, $d_2$,  the distance defined by the minimal total weight path, and $d_3$, an intermediate distance with hybrid characteristics of the two.

We first showed that among these, only the distance $d_2$ satisfies the formal metric definition, while the others do not. Moreover, the following inequality always holds for any pair of vertices $v, w \in V(G)$: $d_2(v, w) \le d_3(v, w) \le d_1(v, w)$. We further showed a similar inclusion relationship in the structure of one-dimensional persistence barcodes:  $\mathcal{B}_2 \subseteq \mathcal{B}_3 \subseteq \mathcal{B}_1$, where $\mathcal{B}_i$ denotes the barcode associated with the distance function $d_i$. We validated these findings using real music data by showing that the predicted relations hold exactly.

These injective relationships among commonly used distance definitions are not only theoretically interesting but also provide new insights for musical analysis. Our preliminary results suggest that analyzing music with different distance definitions in persistent homology yields different and distinct interpretations of the same music piece. Such variation could enrich both the analytical and creative processes, including automatic music generation with possibly multiple stylistic perspectives derived from the same source music.

In future work, we will further explore the musical implications of these relationships, including their roles within localized subgraphs within larger musical structures. We will also investigate automatic music generation based on varying distance functions.

\vskip .2in
\noindent
{\bf{Acknowledgements:} }
This work was supported by National Research Foundation of Korea under  grant number 2021R1A2C3009648. This work was also supported partially by the KIAS Transdisciplinary program grant. 

\vskip .2in
\noindent
{\bf{Disclosure statement:} }
The authors report there are no competing interests to declare.

\bibliographystyle{tMAM}
\bibliography{References.bib}

\end{document}